\documentclass[ecta,nameyear,draft]{econsocart}
\RequirePackage[colorlinks,citecolor=blue,linkcolor=blue,urlcolor=blue,pagebackref]{hyperref}

\usepackage{amsfonts, dsfont, amsmath, amsthm, amssymb}
\usepackage[mathscr]{euscript}
\usepackage{tikz,pgfplots}
\usepackage{booktabs,longtable,array}
\usepackage{comment, xcolor, tcolorbox}
\usepackage{subcaption}

\makeatletter

\@ifundefined{tableofcontents}{%
  \newcommand\tableofcontents{%
    \section*{Contents}%
    \begingroup
      \parindent\z@ \parskip\z@

      \providecommand*{\@dotsep}{4.5}%
      \def\l@section##1##2{\@dottedtocline{1}{0em}{1.5em}{##1}{##2}}%
      \def\l@subsection##1##2{\@dottedtocline{2}{1.5em}{2.3em}{##1}{##2}}%
      \def\l@subsubsection##1##2{\@dottedtocline{3}{3.8em}{3.2em}{##1}{##2}}%

      \def\contentsline##1##2##3##4{%
        \csname l@##1\endcsname{##2}{##3}%
      }%

      \@starttoc{toc}%
    \endgroup
  }%
}{%
}

\makeatother

\startlocaldefs

\numberwithin{equation}{section}
\theoremstyle{plain}

\newtheorem{assumption}{Assumption}
\newtheorem{theorem}{Theorem}

\newtheorem{cor}{Corollary}

\theoremstyle{definition}
\newtheorem{definition}{Definition}
\newtheorem{runexample}{Running Example}
\newtheorem{example}{Example}
\newtheorem{remark}{Remark}
\newtheorem{auxlemma}{Lemma}[section]

\let\originallesssim\lesssim
\let\originalgtrsim\gtrsim

\DeclareRobustCommand{\lesssim}{%
  \mathrel{\mathpalette\lowersim\originallesssim}%
}
\DeclareRobustCommand{\gtrsim}{%
  \mathrel{\mathpalette\lowersim\originalgtrsim}%
}

\makeatletter
\newcommand{\lowersim}[2]{%
  \sbox\z@{$#1<$}%
  \raisebox{-\dimexpr\height-\ht\z@}{$\m@th#1#2$}%
}
\makeatother

\DeclareMathOperator*{\argmax}{arg\,max}

\newcommand{\independent}{\rotatebox[origin=c]{90}{$\models$}}

\newcommand{\ra}{\rightarrow}
\newcommand{\imp}{\Rightarrow}
\newcommand{\R}{\mathbb{R}}
\newcommand{\one}{\mathds{1}}

\newcommand{\bT}[1]{{{\mathbf T}^{(#1)}}}
\newcommand{\bY}[1]{{\mathbf Y}^{(#1)}}

\newcommand{\bt}[1]{{\mathbf t}^{(#1)}}
\newcommand{\tr}[1]{{t}^{(#1)}}
\newcommand{\Y}[1]{{Y}^{(#1)}}
\newcommand{\Xc}{\mathcal X}
\newcommand{\Gc}[1]{{\mathcal G}_{(#1)}}
\newcommand{\Yc}[1]{{\mathcal Y}^{(#1)}}
\newcommand{\Tc}{{\mathcal T}}
\newcommand{\bX}[1]{{\mathbf{X}^{(#1)}}}

\newcommand{\ex}{\mathbb E}
\newcommand{\p}{\mathbb P}

\newcommand{\Pc}{\mathcal{P}}
\newcommand{\bA}[1]{{\mathbf A}^{(#1)}}
\newcommand{\bG}[1]{{\mathbf G}^{(#1)}}

\newcommand{\sA}{\mathscr A}

\newcommand{\dx}{{d_x}}

\newcommand{\J}{{n_0}}
\newcommand{\B}{B}
\newcommand{\X}[1]{X^{(#1)}}
\newcommand{\G}[1]{G^{(#1)}}
\newcommand{\W}{W}

\newcommand{\Q}{{Q_{0}}}
\newcommand{\pig}{\pi}
\newcommand{\pil}{\lambda}
\newcommand{\pigpil}{{\pig}_{\pil}}

\newcommand{\supp}[1]{\text{supp}_\p(#1)}

\newcommand{\cC}{\mathcal C}
\newcommand{\eft}{{c}^{(e)}}

\newcommand{\sigalg}{\mathcal{F}}
\newcommand{\n}{{n_1}}

\newcommand{\nee}{{n_e}}

\endlocaldefs

\begin{document}

\begin{frontmatter}

\title{Optimal Treatment Assignment Rules on Networked Populations}
\runtitle{Treatment Rules on Networks}

\begin{aug}
\author[add1]{\fnms{Abhishek}~\snm{Ananth}\ead[label=e1]{abhishek.ananth@emory.edu}}

\address[add1]{%
\orgdiv{Department of Data and Decision Sciences},
\orgname{Emory University}}
\end{aug}

\begin{funding}
I thank Francesca Molinari, J\"org Stoye, Larry Blume, David Easley and Aleksey Tetenov for helpful comments and discussions. I also thank seminar participants at University of Toronto, University of Fribourg, University of Bern, CREST-PSE, UCL, University of Amsterdam, University of Oxford, University of Bonn, NYU, Northwestern University,  University of Chicago and Harvard-MIT Econometrics Seminar as well as conference participants at WEAI 2019 for useful feedback. 
\end{funding}

\begin{keyword}
\kwd{Policy learning}
\kwd{Networks}
\kwd{Global decision rules}
\end{keyword}

\begin{abstract}
I analyze decision problems where a planner must choose, based on data from an experiment in a pilot population, how to simultaneously distribute a limited number of treatments that generate network spillovers across individuals in a target population.
A \emph{global treatment rule} maps the target population's observed covariates and network features into a jointly feasible allocation. 
Under local-response, stability, support, and policy-relevant overlap conditions, I identify the resulting welfare using a transported inverse propensity weighted criterion and construct an empirical welfare maximizer.
I derive a finite-sample regret bound under restrictions on the complexity of network spillovers and the complexity of the global rule class, measured by its Natarajan dimension.
The upper bound uses independent node-level variation, dyadic network formation without imposing a bounded maximum degree.
A minimax lower bound establishes a corresponding rate comparison under additional welfare-relevant richness. 
Finally, I compare oracle optimal global and local rules (which allocate
treatments on an individual-by-individual basis) and bound the welfare gap between them. An empirically calibrated oracle comparison
illustrates the magnitude of this welfare gap.
\end{abstract}

\end{frontmatter}

\section{Introduction}
\label{sec:intro}

Policymakers are frequently tasked with allocating a limited supply of treatments across individuals in a population whose outcomes depend not only on their own treatment but also on that of others. A salient example arises in the immediate aftermath of outbreaks of infectious diseases, when treatments are scarce and must be rationed under severe logistical and ethical constraints.\footnote{Treatments encompass both preventive and therapeutic interventions, such as pre-exposure prophylaxis (PrEP) for HIV, antiviral drugs like oseltamivir for influenza, or mRNA vaccines for COVID-19.}
During the West and Central African Ebola outbreaks, for instance, the deployment of the rVSV-ZEBOV vaccine represented a major breakthrough \citep{henao2015efficacy, henao2017efficacy}.
The World Health Organization adopted an experimental ``ring vaccination'' strategy that had previously proved successful in eradicating smallpox \citep{fenner1988smallpox}.\footnote{Ring vaccination inoculates individuals who had direct contact with Ebola patients as well as their contacts.}
Although this strategy helped contain the outbreak, whether it achieved an optimal allocation of scarce vaccines remains unclear \citep{merler2016containing, kucharski2016effectiveness, hitchings2017using}.
Humanitarian organizations, including Doctors Without Borders, criticized the slow distribution, unclear prioritization criteria, and, consequently, the overall effectiveness of the strategy \citep{MSF:Key}.
Such examples underscore the urgent need for data-driven treatment allocation strategies tailored to complex environments with interdependent outcomes and capacity constraints.\footnote{Similar allocation challenges arise in viral marketing, where decision-makers select a subset of individuals to receive a product first, leveraging their social influence to encourage adoption among peers \citep{kempe2003maximizing}.}

Motivated by these examples, I develop a framework in which a planner constructs a \emph{global treatment rule}, which maps an observed \emph{target} population into a complete allocation vector, when key features of the decision environment—including heterogeneity in treatment effects and spillovers—are unknown and must be learned using experimental data from a distinct \emph{pilot} population.\footnote{Heterogeneous treatment effects occur when individuals respond differently to the same treatment. Spillover effects arise when an individual’s outcome depends on treatments received by others.}
In so doing, I extend \emph{empirical welfare maximization} (EWM) to the global treatment allocation problem, where the \emph{welfare} is a population-level metric, such as the expected number of survivors following treatment allocation, that the planner wishes to optimize by appropriately choosing a global treatment rule.

Following the literature on treatment allocation under heterogeneous effects \citep{manski2004statistical, hirano2009asymptotics, bhattacharya2012inferring, kittet18}, I allow treatment decisions to depend on pre-treatment covariates observed by the planner. However, this literature typically imposes Stable Unit Treatment Value Assumption (SUTVA) \citep{rubin1980randomization}, which rules out spillovers. To capture spillovers, I model interactions through networks, where individuals are nodes and edges represent contacts that facilitate spillovers, such as disease transmission. Treatment rules may therefore depend not only on observed covariates but also on observed network features.

The proposed EWM procedure evaluates candidate global rules using an inverse propensity weighted (IPW) estimator of welfare. The estimator transports experimental evidence from the pilot population to the target population for a broad class of low-dimensional exposure mappings. This class includes local and exchangeable spillovers (also called anonymous interactions) which are common in the social sciences, while Appendix~\ref{app:sir} illustrates the network exposure resulting from a finite-horizon SIR model, a workhorse model in epidemiology.
I evaluate the performance of the empirical welfare maximizer using worst-case regret, a standard criterion in statistical learning theory \citep{devroye1996uniform, kittet18}. Regret compares the expected welfare, averaged over repeated samples, under the rule selected by EWM to the welfare under the optimal global treatment rule. The identification and estimation of the welfare criterion is developed in Section~\ref{sec:setup}.

Section~\ref{sec:regret-bounds} establishes finite-sample upper and lower bounds on the regret resulting from EWM. Global allocation creates a multiclass learning problem: the output is a treatment vector, and the number of possible outputs can grow rapidly with target-population size. I measure the complexity of the candidate rule class separately through its Natarajan dimension \citep{natarajan1989}.
Network interference creates a further challenge because the observed welfare contributions are dependent. My analysis uses independent node-level fundamentals and dyadic randomization, together with locality restrictions controlling how complete welfare contributions change when node fundamentals or network links are perturbed. The graphon sparsity parameter bounds expected degree rather than imposing a hard maximum-degree restriction, which adds to a literature that models dependence through dependency graphs \citep[see][]{janson2004, zhang2024generalization, viviano2025policy}.

Under the above restrictions, I derive an upper bound on regret that decreases with the size of the pilot population, which reflects the value of additional experimental information. However, the bound also increases with the size of the target population, through both the number of feasible treatment exposures as well as the Natarajan dimension of the class of global rules. Thus, whether regret shrinks to zero (and hence whether learning of the welfare maximizing rule occurs) depends explicitly on the relative growth rates of the pilot and target populations. I also derive a minimax lower bound on regret. The upper and lower bounds match in their dependence on the pilot size, up to logarithmic factors, but differ in how they scale with the target size. This rate comparison requires an additional richness of the class of candidate global rules to handle the global capacity constraint. 

This paper's focus on global treatment rules contrasts with the existing literature on optimal allocations under spillovers, which restrict attention to \emph{local treatment rules} \citep{zhang2023individualized, kitagawa2023should, viviano2025policy, dean2025effective}. Local rules assign treatment to each individual based on its own local information (e.g., observed covariates and network features) and thereby impose a \emph{symmetry constraint}: any two individuals with identical local information must be assigned the same treatment. This restriction can substantially simplify the learning problem and may be natural when policies are implemented through individual-level eligibility rules.

However, when the marginal welfare benefit from treating individuals is positive in some observed states and negative in others, the optimal allocation often requires coordinating treatment decisions across individuals, for example treating one individual while withholding treatment from another with identical local information in order to induce favorable conditions overall. Global rules can implement such joint, asymmetric allocations, whereas local rules cannot. Consequently, the optimal local rule can be suboptimal relative to the optimal global rule, resulting in a welfare loss. I also characterize how this welfare loss scales with key features of the decision environment such as the heterogeneity in the sign of direct benefits and spillovers to
other units (or, more precisely, budget-adjusted social returns) net of the opportunity cost of satisfying the capacity constraint. Using empirically relevant examples, I show that the welfare loss can be substantial in practice. The results therefore characterize both the potential welfare gains from moving beyond local rules and the statistical cost of learning favorable global treatment rules from finite experimental data in the presence of spillovers.

\subsection{Related Literature}
\label{sec:lit_rev}
\citet{manski2004statistical} places treatment assignment under uncertainty within a statistical decision-theoretic framework, emphasizing minimax rules when potential outcomes are only partially known. This perspective is closely related to empirical risk minimization in statistical learning, where decision rules are chosen to minimize expected loss over a restricted class. Learning-theoretic tools such as VC dimension, covering numbers, and chaining yield finite-sample excess-risk bounds for classification \citep{devroye1996uniform,boucheron2005theory,cesa2006prediction,bartlett2002rademacher,tsybakov2004optimal,audibert2007fast}. \citet{kittet18} adapt these ideas to treatment allocation through empirical welfare maximization, while \citet{athey2021policy} consider continuous treatments and \citet{viviano2025policy} study local policy rules under network interference. My analysis combines IPW estimation with empirical-process and graphon network formation arguments to accommodate general exposure mappings and transport welfare from a pilot population to a target population \citep{dudik2011policy,agarwal2014taming,chernozhukov2018double,janson2004,boucheron2013concentration,vdvwellner1996}.
The paper is also related to work on structured and network-dependent allocation. \citet{hazard2025whom} learn jointly constrained matching policies, and \citet{nie2021learning} study dynamic rules governing whom and when to treat. Closer to my setting, \citet{wang2024robust} and \citet{kitagawa2023individualized} analyze targeting in simultaneous and sequential network games. These papers derive spillovers from a specified strategic model, whereas I use a general exposure-mapping framework and study the additional problem of transferring a global allocation rule across populations.

A complementary literature modifies the planner's objective to account for forms of uncertainty not captured by conventional welfare maximization. \citet{chernozhukov2025policy} develop confidence-aware policy rules, \citet{ishihara2021evidence} study minimax-regret aggregation of evidence across populations, \citet{kitagawa2023treatment} consider mean-square regret under partial identification, and \citet{kitagawa2025leave} incorporate aversion to inequality in individual regret. In contrast, I retain an expected-welfare criterion and focus on the identification and statistical costs generated jointly by interference, global treatment constraints, and differences between pilot and target environments.

\section{Setup}
\label{sec:setup}

For the reader's convenience, Appendix~\ref{app:notation} contains a table summarizing the notation used throughout the paper.
All random objects are defined on a common probability space
$(\Omega,\sigalg_\Omega,\p)$. All covariate and observed-feature spaces
are standard Borel, with their Borel sigma-algebras. 
Finally, for any positive integer $k$, let $[k] := \{1,\ldots,k\}$.

Suppose there exist two populations: a \emph{target} population (denoted by $e = 0$) and a \emph{pilot} population (denoted by $e = 1$). 
Each population $e \in \{0,1\}$ consists of $\nee$ units which may represent households, individuals, or geographic sites and are indexed by $j \in \{1, \cdots, \nee\}$.
A planner observes a $d_x$-dimensional covariate vector $\X{e}_{j}\in\Xc\subset\R^{\dx}$ for each unit. 
The matrix of observed population-level covariates is denoted by $\bX{e} = [\X{e}_{j}]_{j=1}^{\nee}$.
Each unit can be assigned either to the treatment ($\tr{e}_j = 1$) or to the control ($\tr{e}_j = 0$) condition.\footnote{This framework can be extended to multi-valued treatment doses with more cumbersome notation.}  
Denote by $\bt{e}=[\tr{e}_j]_{j=1}^{\nee}\in\{0,1\}^{\nee}$ the treatment allocation vector in population $e$. Unit $j$'s treatment response or outcome (denoted by $\Y{e}_j$) may violate the commonly imposed Stable Unit Treatment Value Assumption (SUTVA) and depends on the treatments assigned to others in the population (spillover) in addition to their own treatment status (direct effect). Further, a units treatment response may also be heterogeneous in its observed covariate.

\begin{runexample} 
A population (village) consisting of $\nee$ households faces an outbreak of malaria, a life-threatening disease typically transmitted to humans by female \emph{Anopheles} mosquitoes. These mosquitoes breed in vegetation-rich areas with stagnant water and typically travel only a few hundred meters \citep{midega2007estimating, cowman2016malaria}. Each household’s observed covariate is its distance to the nearest patch of dense vegetation or pool of stagnant water, $\X{e}_j \in [0,500]$.
Female mosquitoes are nocturnal and feed on human and livestock blood to produce and lay eggs; consequently, insecticide-treated bed-nets are a cost-effective preventive intervention \citep{curtis2006insecticide}. Let $\tr{e}_j = 1$ indicate that household $j$ is assigned a bed-net, and $\tr{e}_j = 0$ indicate that it does not receive a bed-net and instead relies on status quo mitigation practices. Households located closer to vegetation (i.e., with smaller $\X{e}_j$) face higher infection risk and therefore derive greater expected benefit from bed-net use, leading to heterogeneity in treatment response. 
During their activity, mosquitoes sometimes transmit malarial parasites from one infected individual to another. 
Therefore, when households are close enough for mosquitoes to travel between them, the treatments induce spillovers (also known as interference).
$\blacktriangledown$ \end{runexample}

\subsection{Spillovers}

I assume spillovers are mediated through an undirected, unweighted network represented by an adjacency matrix $\bA{e}$, an $\nee \times \nee$ symmetric matrix of binary valued entries with no self links. Denote by $\sA_{\nee}$ the set of all such adjacency matrices. 
I assume the network is exogenous and static, not responding to the allocation of treatments. 
This fixed-network assumption is common in the literature on network interference \citep{aronow2017estimating, leung2020treatment, forastiere2021identification} but can be restrictive in some contexts.\footnote{\citet{comola2021treatment, kitagawa2023should, gao2024endogenous} are notable exceptions that allow the network to respond to treatment allocation.} For example, in HIV/AIDS settings, contact behavior (interactions that facilitate transmission) may respond to the availability and use of pre-exposure prophylaxis (PrEP), a preventative treatment against HIV \citep{walsh2020individual}. Assumptions on the network formation model follow the definition of treatment exposures and the planner's decision problem.

Following much of the literature, I assume the existence of an \emph{effective treatment}, a network-dependent low-dimensional summary of the treatment assignment vector that determines each units outcome \citep{manski2013identification}. I formalize this notion through an \emph{exposure mapping} as defined in \citet{aronow2017estimating}. I start by defining the space of exposures.

\begin{definition}[Space of exposures]
    Let $\cC^{(e)}$ denote the finite space of possible exposures (or effective treatments) for units in population $e$. 
\end{definition}

I assume that $j$'s effective treatment depends on the network through an observed local network feature $\G{e}_j \in \Gc{e} \subset \R^{d_g^{(e)}}$, a measurable transformation of the network and covariates $(\bA{e},\bX{e})$.
This may be restrictive in some cases. Much of the network interference literature \citep{manski2013identification, aronow2017estimating, leung2020treatment} typically assumes that the full network is observed and consequently so are the local network features. When the full network is observed, a natural way to record the information is by setting the observed network feature for $j$ to be the $j^{\text{th}}$ row of the adjacency matrix, i.e. $\G{e}_j := \bA{e}_{j\cdot} \in \R^{\nee}$. Stack the observed network features as $\bG{e} := (G_1^{(e)},\ldots,G_{\nee}^{(e)})$.
When the planner partially observes the network through $\G{e}_j$, Assumption~\ref{asmp2::eff_treat} guarantees its sufficiency for computing exposures. Potential outcomes in the presence of spillovers are indexed by exposures as defined in Assumption~\ref{asmp2::pot_out}.

\begin{assumption}[Exposure Mapping]\label{asmp2::eff_treat}
    For any unit $j$ in population $e\in\{0,1\}$, the exposure mapping is some known function $\eft_j: \{0,1\}^{\nee}\times \Gc{e} \ra \cC^{(e)}$. 
\end{assumption} 

I denote the observed population state of a population by the shorthand $V$:
\begin{equation*}\label{eq:state-V}
 V^{(e)} := (\bX{e},\bG{e})\in\mathcal V_e,
 \qquad \mathcal V_e := \Xc^{\nee}\times\Gc{e}^{\nee},
 \qquad V := V^{(0)},\quad\mathcal V := \mathcal V_0.
\end{equation*}
Thus $V$ is the observed target-population state, while $V^{(1)}$ is the observed pilot-population state. 
A lower-case $v$ denotes a realization, and $X_j^{(e)}(v),G_j^{(e)}(v)$ denote its components. 
I use $E$ as a shorthand for an exposure value, while $c_j^{(e)}$ denotes the known mapping itself:
\begin{equation*}\label{eq:exposure-E}
 E_j^{(e)}(\mathbf t;v)
  := \eft_j\bigl(\mathbf t,G_j^{(e)}(v)\bigr),
 \qquad
 E_j^{(e)}(\mathbf t) := E_j^{(e)}(\mathbf t;V^{(e)}).
\end{equation*}
The argument after the semicolon specifies a fixed population state; omitting
it means that the realized state of population $e$ is used.

\begin{assumption}[Potential Outcomes]\label{asmp2::pot_out}
For any unit $j$ in population $e\in\{0,1\}$:
\begin{enumerate}
    \item[a.] Potential outcomes are indexed by exposures and the family of all potential outcomes is denoted by $\Yc{e}_j ~:=~ \{\Y{e}_j(c): c \in \cC^{(e)} \}$.
    \item[b.] Potential outcomes are uniformly bounded. There exists $M < \infty$ such that $|\Y{e}_j(c)| \leq M $ almost surely for any exposure $c \in \cC^{(e)}$.
\end{enumerate}
\end{assumption}

Assumption~\ref{asmp2::pot_out}.a requires that any two allocations inducing the same exposure for $j$ generate the same treatment response for $j$. This framework covers standard exposure models \citep{aronow2017estimating, leung2020treatment, forastiere2021identification} and may also encode cluster-based partial interference with a sufficiently rich finite exposure space, as in \citep{hudgens2008toward, zhang2023individualized, dean2025effective}.
Assumption~\ref{asmp2::pot_out}.b requires that, regardless of the exposure, each individual's potential outcome lies within fixed bounds, ruling out arbitrarily large gains or losses.\footnote{This is for analytical convenience; similar results can be derived under finite-moment assumptions or sub-Gaussian tails with different analytical tools to obtain comparable bounds.}

I define the collections of feasible exposures for a node given its  network information:
\begin{definition}[Feasible exposure set]\label{def2::feas_exp}
    Let the correspondence $\cC^{(e)}_j: \Gc{e} \rightrightarrows \cC^{(e)}$ represent the collection of exposures that are feasible for an individual given its local network information. This is the exposure-relevant information in the observed network features.
    \begin{equation*}
        \cC^{(e)}_j(g) ~:=~ \left\{ \eft_j(\bt{e}, g) ~:~ \bt{e} \in \{0,1\}^{\nee} \right\}
    \end{equation*}
\end{definition}
For brevity, denote the realized feasible set and the exposure-relevant local
state by
\begin{equation*}\label{eq:local-state-L}
 C_j^{(e)} := \cC_j^{(e)}(\G{e}_j),
 \qquad L_j^{(e)} := (\X{e}_j,C_j^{(e)}).
\end{equation*}
Write $\ell=(x,C)$ for a generic local state. Both populations' local states
are embedded in
$\mathcal L := \Xc\times2^{\cC^{(0)}\cup\cC^{(1)}}$, equipped with the
product sigma-algebra. Since the exposure space is finite and the exposure
mapping is measurable, $C_j^{(e)}$ is a measurable, nonempty finite random
set. The shorthand $L_j^{(e)}$ records the network information used by outcomes and treatment propensity among other things. For a fixed state $v$, write
\[ C_j^{(e)}(v) := \cC_j^{(e)}(G_j^{(e)}(v)), \qquad L_j^{(e)}(v) := (X_j^{(e)}(v),C_j^{(e)}(v)). \]

\begin{runexample}\label{ex_2}
Due to spillovers resulting from vector-borne transmissions, treating one household may also reduce transmission to nearby households. This proximity of households, so mosquitoes can fly between them, is captured in a contact network, $\mathbf A$.   
The spillover mechanism described above can be summarized by a common specification \emph{local spillovers with exchangeability} (LSE), where household $j$'s exposure is summarized by own treatment, the number of treated neighbors, and degree:
\begin{equation}\label{eg::lse}
	c_j(\mathbf t, G_j) ~=~ \Bigl( t_j, \sum_k A_{jk} t_k, \sum_k A_{jk} \Bigr), \quad G_j = \{\mathbf A_{ij}: 1 \leq i \leq \nee\}.
\end{equation}
This allows spillovers to be non-monotone in treated neighbors (e.g., due to behavioral adaptation by mosquitoes), as it imposes no structure across exposure levels.
In this context, Assumption~\ref{asmp2::eff_treat} also requires that $G_j$ identifies $j$'s neighbor set (not only degree), as in snowball-style designs. Under LSE, the feasible exposure set $C_j = \{0,1\} \times \left\{0, \ldots, \sum_k A_{jk}\right\} \times \bigl\{ \sum_k A_{jk} \bigr\}$ is determined by node degree. 

Let $Y_j(\mathbf t)$ denote the (binary) potential health status of household $j$ one year after allocation $\mathbf t\in\{0,1\}^{\nee}$, where $1$ indicates healthy and $0$ unhealthy (due to infection or otherwise). The collection of potential outcomes for household $j$ satisfying Assumption~\ref{asmp2::pot_out} are:
\[ \mathcal Y_j= \big\{Y_j(t,d,l): ~ \text{ where } t\in \{0,1\}, ~ 0\leq d\leq l\leq n_e-1,~\text{ and }l,d\in\{0,1,\ldots,n_e-1\} \big\}. \qquad \blacktriangledown \]
\end{runexample}

In addition to the spillover model considered in the example above, the results in this paper extend to other important settings. In the appendix, I show how compartmental models, such as the benchmark SIR model used to analyze the transmission of infectious diseases can be formalized to satisfy Assumption~\ref{asmp2::eff_treat}. Different spillover models induce different patterns of interdependence in outcomes and, consequently, have distinct implications for learnability in the treatment assignment problem introduced in Section~\ref{sec:dec_prob}. It is therefore useful to introduce a measure of the complexity of a given spillover model which characterizes the sensitivity of exposures across all nodes to small perturbations in the network. 

\begin{assumption}[Locality of the complete welfare contribution]
\label{asmp:ub-locality}
\ 
\begin{enumerate}
\item[a.] There are two distinct exposure labels
$c_{\varnothing,0}\neq c_{\varnothing,1}$, common to both populations.
For either population, whenever node $k$ is isolated,
\[
 C_k^{(e)}=C_\varnothing:=\{c_{\varnothing,0},c_{\varnothing,1}\},
 \qquad E_k^{(e)}(\mathbf t)=c_{\varnothing,t_k}.
\]
For every $i\neq k$, the pair $(L_i^{(e)},E_i^{(e)}(\mathbf t))$
is unchanged when $(x_k,t_k)$ is altered, holding the graph and all other
covariates and treatments fixed.
\item[b.] There exists $d_\nee^{\mathrm{sc}}\in\{0,\ldots,\nee\}$ such that,
for all pilot graphs $A,A'$ differing in one dyad and every covariate
array $\mathbf x\in\Xc^\nee$,
\[
 \sup_{\mathbf t\in\{0,1\}^{\nee}}
 \left|\left\{i:\
 (L_i^{(e)}(A,\mathbf x),E_i^{(e)}(\mathbf t;A,\mathbf x))
 \neq (L_i^{(e)}(A',\mathbf x),E_i^{(e)}(\mathbf t;A',\mathbf x))
 \right\}\right|\leq d_\nee^{\mathrm{sc}}.
\]
\end{enumerate}
\end{assumption}
Assumption~\ref{asmp:ub-locality}.a states that isolated nodes have exactly two exposures, corresponding to the two states of self-treatment. 
It also rules out effects of an isolated node's covariates or treatment on
other units' exposure-relevant information. Part b controls the number of nodes whose feasible exposure sets as well as realized exposures change as a result of perturbing a single link in the network. The parameter $d_\nee^{\mathrm{sc}}$ can be interpreted as the dimension of spillover complexity for the assumed exposure mapping.

\begin{runexample}
Under LSE, toggling a link affects only its two endpoints' feasible sets and
exposures. Both parts of Assumption~\ref{asmp:ub-locality} hold with
$d_\nee^{\mathrm{sc}}=2$ when $\nee\geq2$. For own-treatment-only exposures,
$d_\nee^{\mathrm{sc}}=0$ is valid. These bounds remain valid when evaluating
the empty graph and all single-dyad perturbations. $\blacktriangledown$
\end{runexample}

\subsection{Treatment Allocation Problem}\label{sec:dec_prob}

A planner must allocate at most $\B$ units of the treatment among the $\J$ units in the target population. Denote the set of feasible allocations with cardinality $\Q = \sum_{b=0}^{\B} \binom{\J}{b}$ by:
\begin{equation}
    \Tc = \left\{\mathbf t \in \{0,1\}^{\J} : \sum_{j=1}^{\J} t_j \le \B \right\}.
\end{equation} 

The planner uses the observed covariate–network configuration in the target population while allocating treatments. 
This is formalized by a \emph{treatment allocation rules}, denote by $\pi$. 
The collection of all such {treatment allocation rules} is denoted by $\Pi$, formalized below: 
\begin{equation}
\Pi \ \subset\ \overline{\Pi} \ :=\  \left\{\, \pi : \mathcal V_{0} \rightarrow \Tc\right\}
\end{equation}

For a target rule $\pi$, I denote its induced exposure by the convenient shorthand:
\begin{equation}\label{eq:policy-exposure-E}
 E_j^\pi := E_j^{(0)}\bigl(\pi(V);V\bigr),
 \qquad E_j^\pi(v) := E_j^{(0)}\bigl(\pi(v);v\bigr).
\end{equation}
The planner evaluates allocation rules using a \emph{welfare} criterion $W(\cdot;\p): \overline \Pi \to \R$, which corresponds to the expected average outcome under the allocation prescribed by $\pi$:\footnote{The average is taken over units in the target population $[\J]$, and the expectation  with respect to DGP $\p$.}
\begin{align}\label{eq2:welfare}
  \W(\pi; \p) &= \ex \left[ \frac{1}{\J} \sum_{j=1}^{\J} \Y{0}_j(E_j^\pi)  \right] 
\end{align}
In the ideal problem, the planner chooses the \emph{welfare maximizing allocation rule}:
\begin{equation}\label{eq2:planner}
  \pi^*_{\p} ~\in~ \arg\sup_{\pi \in \Pi} \ \W(\pi; \p), \qquad\text{when the supremum is attained.}
\end{equation} 

I assume that $\Pi$ is nonempty and use
$\sup_{\pi\in\Pi}\W(\pi;\p)$ as the oracle welfare. Bounded outcomes make
this supremum finite, but a finite allocation set does not by itself imply
attainment of the supremum over an arbitrary policy class.

\begin{runexample}
Consider a target village with $\J=2$ households where the planner must allocate $\B=1$ bed-net. 
The contact network is completely determined by either off-diagonal element of the adjacency matrix $A_{12}$, itself characterized by the distance between the households. Treatment allocation rules map observed covariate–network configuration to the set of feasible allocations, $\Tc=\{(1,0),(0,1),(0,0)\}$:
\[
\pi:\ V=(\mathbf X,A_{12})\mapsto \mathbf t=(t_1,t_2)\in\Tc,
\qquad \mathbf X=(X_1,X_2)^\top.
\]
Welfare at allocation rule $\pi$ equals expected average health one year after assignment:
\[ \W(\pi;\p)
= \tfrac{1}{2} ~ \ex\!\left[ Y_1(E_1^\pi) +  Y_2(E_2^\pi)\right] \]
Suppose $\Pi$ is the collection of all functions from $\Xc^2 \times \{0,1\}$ mapping into $\Tc$. I illustrate the welfare maximizing rule for a simple DGP $\p$. Consider the following outcome model in which heterogeneity and spillovers play a role:
\[
\ex\!\left[ Y_j(E_j(\mathbf t; v)) \mid V = v \right]
= (1-\alpha)+\beta\, t_j d_j+\gamma\,A_{12}t_{3-j},
\qquad
d_j := \frac{250-X_j}{250},
\]
where $\alpha\in(0,1)$ is baseline infection risk, $\beta>0$ is the direct effect (larger when closer to vegetation, i.e.\ $d_j$ larger). Importantly, this specification allows that when a household is far from vegetation and at a lower risk from mosquitoes, its treatment response is negative allowing for the treatment to exhibit negative side effects such as risk of respiratory symptoms or skin irritation. Finally, $\gamma>0$ indexes spillovers when households are linked. 
\begin{figure}[ht]
\centering
\begin{tikzpicture}
\begin{axis}[
    title={Households not linked: $A_{12}=0$},
    xlabel={Distance of HH 1 from vegetation},
    ylabel={Distance of HH 2 from vegetation},
    xmin=0, xmax=500,
    ymin=0, ymax=500,
    scale=0.8,
    every axis plot/.append style={ultra thick}
]

\fill[blue!50!black]  (0,0)--(500,0)--(500,250)--(250,250)--cycle;
\fill[blue!50!white]  (0,0)--(0,500)--(250,500)--(250,250)--cycle;
\fill[black!62!white] (250,250)--(250,500)--(500,500)--(500,250)--cycle;   
\draw[dashed, white, thick] (axis cs:250,0) -- (axis cs:250,500);
\draw[dashed, white, thick] (axis cs:0,250) -- (axis cs:500,250);
\node[draw,align=left,scale=1,fill=white] at (300,100) {\textbf{\color{blue!50!black}Treat}\\\textbf{\color{blue!50!black}HH 2 only}};
\node[draw,align=left,scale=1,fill=white] at (100,300) {\textbf{\color{blue!50!white}Treat}\\\textbf{\color{blue!50!white}HH 1 only}};
\node[draw,align=left,scale=1,fill=white] at (400,395) {\textbf{\color{black!62!white}Treat}\\\textbf{\color{black!62!white}neither}};
\end{axis}
\end{tikzpicture}
\begin{tikzpicture}
\begin{axis}[
    title={Households linked: $A_{12}=1$},
    xlabel={Distance of HH 1 from vegetation},
    xmin=0, xmax=500,
    ymin=0, ymax=500,
    scale=0.8,
    every axis plot/.append style={ultra thick}
]

\fill[blue!50!black]  (0,0)--(500,0)--(500,350)--(350,350)--cycle;
\fill[blue!50!white]  (0,0)--(0,500)--(350,500)--(350,350)--cycle;
\fill[black!62!white] (350,350)--(350,500)--(500,500)--(500,350)--cycle; 
\draw[dashed, white, thick] (axis cs:350,0) -- (axis cs:350,500);
\draw[dashed, white, thick] (axis cs:0,350) -- (axis cs:500,350);

\node[draw,align=left,scale=1,fill=white] at (300,100) {\textbf{\color{blue!50!black}Treat}\\\textbf{\color{blue!50!black}HH 2 only}};
\node[draw,align=left,scale=1,fill=white] at (100,300) {\textbf{\color{blue!50!white}Treat}\\\textbf{\color{blue!50!white}HH 1 only}};
\node[draw,align=left,scale=1,fill=white] at (425,420) {\textbf{\color{black!62!white}Treat}\\\textbf{\color{black!62!white}neither}};
\end{axis}
\end{tikzpicture}
\caption{Optimal treatment rule under the known DGP.}
\label{fig::optimalrule}
\end{figure}

Figure~\ref{fig::optimalrule} plots the optimal allocation in $(X_1,X_2)\in[0,500]^2$ when the households are (right panel) and are not (left panel) linked. When $A_{12}=0$, it is optimal to treat neither household if both are farther than $250$ meters from vegetation; otherwise treat the closer household. When $A_{12}=1$, spillovers shrink the ``treat neither'' region: it is optimal to treat households (close together) even when both are moderately far from vegetation. $\blacktriangledown$
\end{runexample}

\subsection{Network Formation and Pilot Sampling}

A second major point of departure this paper makes relative to the literature is in modeling the network underlying spillovers. In contrast to \cite{viviano2025policy}, the analysis here is not conditional on a maximum network degree. Instead, I assume that the network formation is through a dyadic graphon \citep{chandrasekhar,de2020econometric,lunde2023subsampling}.
Edge formation, therefore does not admit spillovers; i.e., the $i-j$ link does not directly influence the existence of the $k-l$ link. See \cite{chandrasekhar, de2020econometric} for an extensive discussion of the behavioral restrictions of this assumption. Despite its limitation, this dyadic formation assumption allows for tractability, particularly when identifying and estimating counterfactual treatment responses under interference facilitated by a network.
For each population $e\in\{0,1\}$, the node-level fundamentals are denoted by:
\begin{equation}
    \tilde F_i^{(e)}
     := 
    \left(U_i^{(e)},\X{e}_i,\Yc{e}_i\right),
    \qquad i=1,\ldots,\nee,
\end{equation}

\begin{assumption}[Graphon network formation with independent fundamentals]
\label{asmp:ub-fundamentals}\ 
\begin{enumerate}
    \item[a.] Node level fundamentals $\tilde F_i^{(e)}$ are independent and identically distributed, where $U_i^{(e)}\in[0,1]$ is an unobserved node characteristic.
    \item[b.] There exist a measurable symmetric function
$g:\Xc^2\times[0,1]^2\to[0,1]$, prescribed sparsity parameters
$0\leq\rho_\nee<1$, and, mutually independent
random variables
\[
    \xi_{ik}^{(e)}\sim\mathrm{Unif}[0,1],
    \qquad 1\leq i<k\leq\nee,
\]
independent of $(\tilde F_i^{(e)})_{i=1}^{\nee}$, such that
\begin{equation}\label{eq:pilot-graphon}
    A_{ik}^{(e)}
    =
    \one\!\left\{
        \xi_{ik}^{(e)}
        \leq
        \rho_\nee
        g\!\left(
            \X{e}_i,\X{e}_k,U_i^{(e)},U_k^{(e)}
        \right)
    \right\},
    \qquad
    A_{ki}^{(e)}=A_{ik}^{(e)},
    \quad A_{ii}^{(e)}=0.
\end{equation}
\end{enumerate}
\end{assumption}

Under Assumption~\ref{asmp:ub-fundamentals}, the vectors $\tilde F_i^{(e)}$ are IID. Conditional on fundamentals $\tilde F^{(e)}$, the network links form independently with the probabilities specified in Eqn.~\eqref{eq:pilot-graphon}. 
The parameter $\rho_\nee$ is an upper bound on link probabilities and need not imply a positive network density (e.g. $g:=0$ generates the empty graph). The statement $\nee\cdot \rho_\nee = O(1)$ controls the average degree of the resulting network \citep{lunde2023subsampling,graham2024sparse}.


\begin{runexample}
The contact network is defined by geographic proximity: the households are linked ($\bA{0}_{12}=1$) if their distance is below 250 meters so mosquitoes can fly between them. 
\begin{figure}[t] 
\centering 
\begin{subfigure}[b]{0.48\textwidth}
\centering
\includegraphics[scale=0.08]{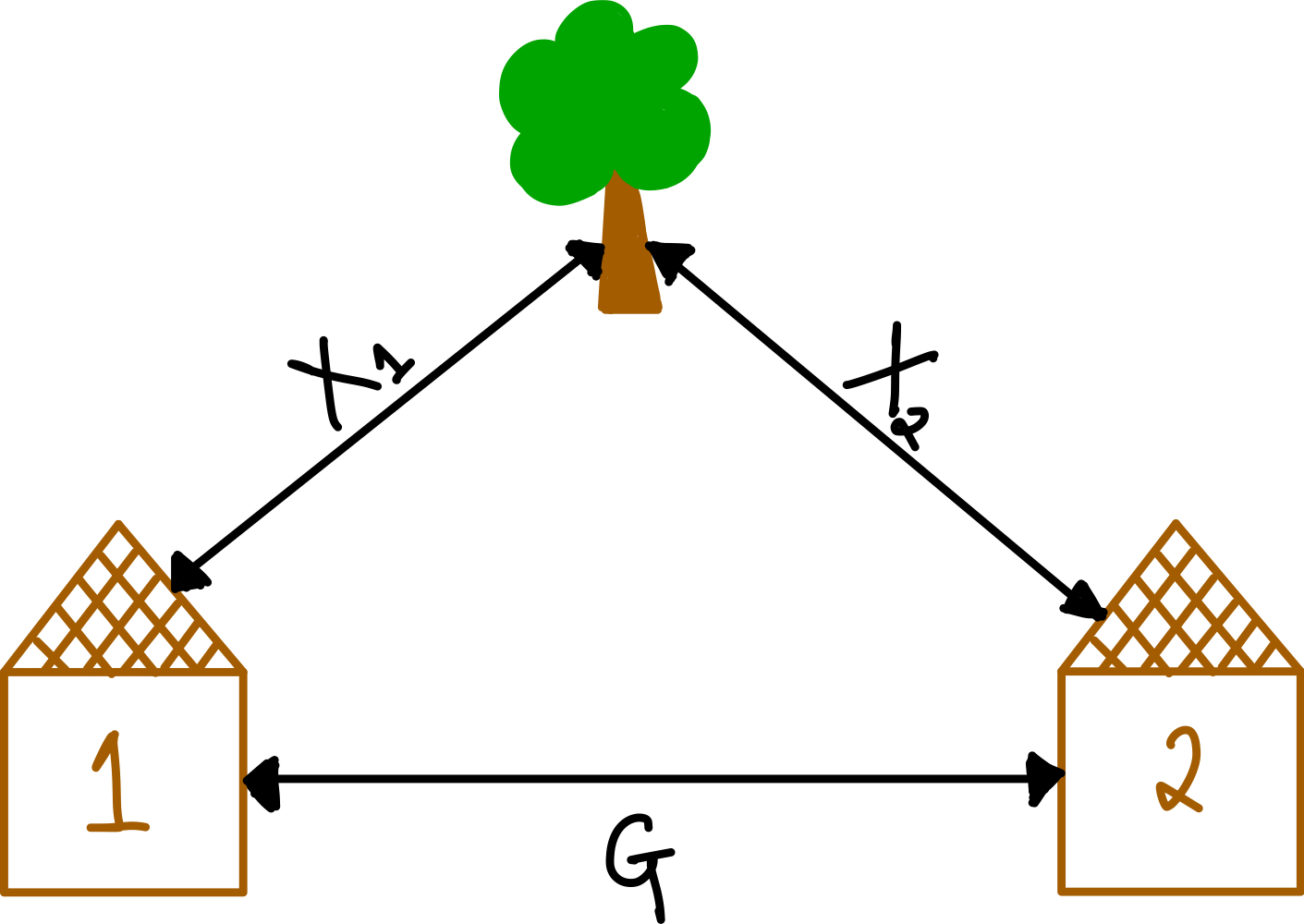} 
\caption{Target village in the running example.} 
\label{fig:target} 
\end{subfigure} \hfill
\begin{subfigure}[b]{0.48\textwidth} 
\centering 
\includegraphics[scale=0.15]{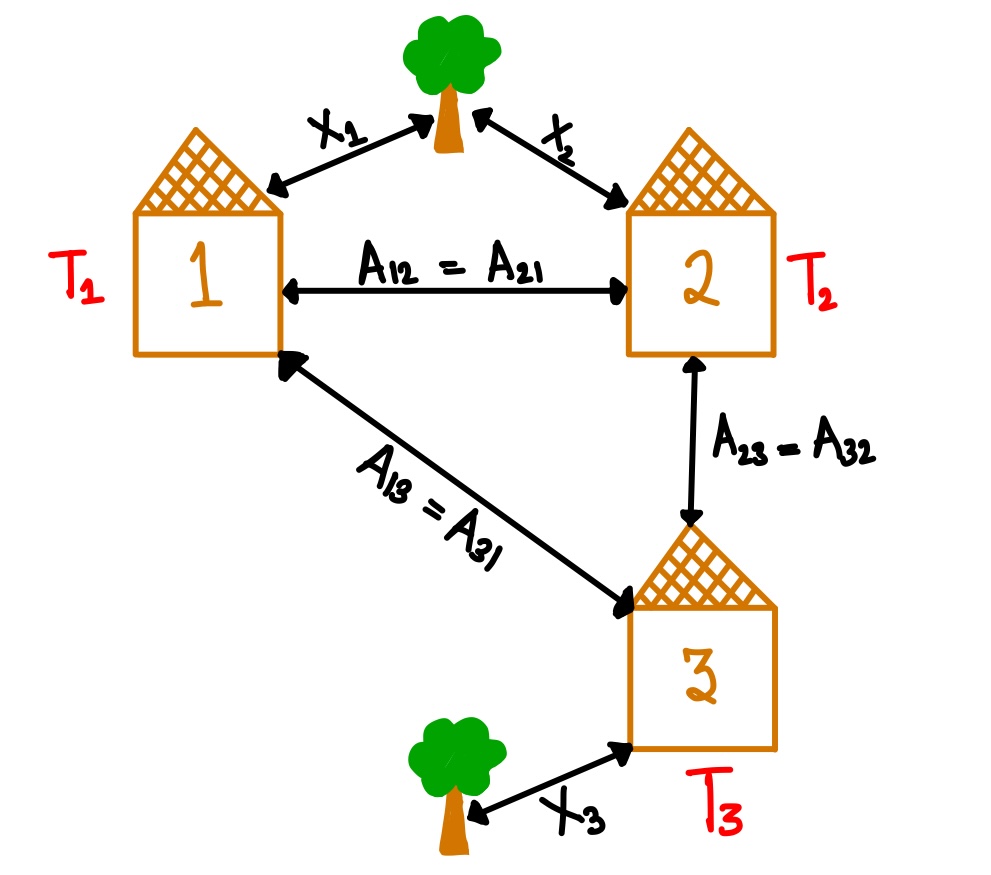} 
\caption{Pilot village in the running example.} 
\label{fig:pilot} 
\end{subfigure} 
\caption{Illustration of the target and pilot villages.} 
\label{fig:target_pilot} 
\end{figure}
This spatial network can be viewed as a latent-position graph in which each node has a latent location $\mathbf r_i\in\R^3$, and edge probabilities depend on distance. In the threshold case, this corresponds to the geometric graph model: 
\[ \bA{0}_{ij} = \one \{ \|\mathbf r_i-\mathbf r_j\|\leq250\}. \]
Let \(U_i \stackrel{iid}{\sim} \mathrm{Unif}[0,1]\), and let \(\phi:[0,1]\to\mathbb R^3\) be a measurable map such that \(\phi(U_i)\sim \mu\), the latent location distribution. Setting \(\mathbf r_i=\phi(U_i)\), the network formation model satisfies:
\[
\bA{0}_{ij}\mid U_i,U_j \sim \mathrm{Bernoulli}(g(U_i,U_j)),
\quad
g(u_i,u_j)=\mathbf 1\{\|\phi(u_i)-\phi(u_j)\|\le 250\}.\quad \blacktriangledown
\]
\end{runexample}

\begin{remark}
    The network formation model in this paper differs from the  framework of \citet[Section~3.4]{viviano2025policy}, whose regret analysis partitions observations into conditionally independent groups using dependency-graph coloring, with rates depending on the realized maximum degree and exposure overlap. 
Here, the dependence restrictions are imposed on the fundamentals of network formation rather than directly on the observed welfare contributions. Node-level fundamentals are independent across units, while links form independently conditional on observed and latent node characteristics. This permits homophilic sorting: conditioning on the realized network can induce dependence among outcome-relevant latent characteristics even for units that are distant or disconnected. Such dependence need not satisfy a finite-range conditional dependency-graph restriction. This distinction does not imply that dependency-graph methods exclude homophily generally, nor that the present model permits unrestricted homophily. The maintained local-response condition additionally requires the local state to summarize the conditional mean information about potential outcomes contained in the observed population configuration. Another important distinction is that the graphon model bounds expected degree, but does not impose a uniformly bounded maximum degree or exclude rare high-degree configurations. 
\end{remark}

\subsection{Experiment}

The planner observes the pilot dataset $\mathbf Z := (V^{(1)},\bT{1},\bY{1})$, 
where $\bY{1}$ is the vector of realized outcomes, $V^{(1)}$ is the observed pilot-population state. I use the following shorthands to denote the realized pilot exposure and the $i^{th}$ observation:
\begin{equation}\label{eq:pilot-E-Z}
 E^{(1)}_i := E_i^{(1)}(\bT{1};V^{(1)}),
 \qquad Z_i := (\Y{1}_i,L_i^{(1)},E^{(1)}_i).
\end{equation}

\begin{assumption}[No hidden treatments]\label{asmp2::consistent}
For each pilot unit $i$, the observed outcome equals the potential outcome
under its realized experimental exposure:
\[
 \Y{1}_i=\Y{1}_i(E^{(1)}_i)\qquad\text{almost surely}.
\]
\end{assumption}

This consistency restriction requires that the recorded assignment and network
correctly determine the exposure under which the outcome is observed. It
rules out unrecorded interventions or noncompliance that would alter that
exposure.

For subsequent arguments, denote the treatment augmented pilot fundamentals by
\[
    F_i^{(1)}
     := 
    \left(\tilde F_i^{(1)},T_i^{(1)}\right),
    \qquad
    F^{(1)}
     := 
    \left(F_1^{(1)},\ldots,F_\n^{(1)}\right).
\]

\begin{assumption}[Independent Bernoulli experimental design]
\label{asmp:experimental-design}
In the pilot population, experimental treatments satisfy \(T_i^{(1)}
    \overset{\mathrm{i.i.d.}}{\sim}
    \mathrm{Bernoulli}(p_\n),\)
for some $p_\n$ known to the planner.
The treatment vector $\bT{1}$ is independent of the collection
\(
    \left(
        (\tilde F_i^{(1)})_{i=1}^{\n},
        (\xi_{ik}^{(1)})_{1\leq i<k\leq\n}
    \right).\)
\end{assumption}
Assumption~\ref{asmp:experimental-design} together with Assumption~\ref{asmp:ub-fundamentals} implies that the vectors $F_i^{(1)}$ are IID.
The independent assignment in Assumption~\ref{asmp:experimental-design} also implies generalized unconfoundedness \(\Yc{1}_i\independent E^{(1)}_i\mid V^{(1)}\).
The remaining restrictions concern the adequacy of the local state $L$.

\begin{assumption}[Local information restrictions]
\label{asmp2::unconfoundedness}
\begin{enumerate}
 \item[a.] In population $e \in \{0,1\}$, for every unit $j$ and exposure
 $c\in\cC^{(e)}$,
 \(
 \ex[\Y{e}_j(c)\mid V^{(e)}]
 =\ex[\Y{e}_j(c)\mid L_j^{(e)}]
 \text{\ almost surely}
. \)
 \item[b.] In the pilot population, exposure irrelevance holds for every unit:
 \(
 E^{(1)}_i\independent V^{(1)}\mid L_i^{(1)}.
 \)
\end{enumerate}
\end{assumption}

Part a is a conditional-mean, no-contextual-effects restriction: once the local
state is fixed, the remaining target configuration does not change the mean
response at a feasible exposure \citep{manski1993reflection}. Part b requires that the
pilot exposure law depends on the full observed configuration only through
$L_i^{(1)}$. Under LSE and independent Bernoulli assignment, this law depends
on the degree encoded in $C_i^{(1)}$, so part b holds.
Combining generalized unconfoundedness with Assumption~\ref{asmp2::unconfoundedness}.b gives \citep{constantinou2017extended}: 
\begin{equation}\label{eqn2::strict_unconfoundedness}
 \Yc{1}_i ~ \independent ~ E^{(1)}_i ~\mid ~ L_i^{(1)}.
\end{equation}

\begin{definition}[Generalized Propensity Score]\label{def::gen_prop_score}
For $\ell=(x,C)$, the generalized propensity for pilot unit $i$ and feasible exposure $c\in C$ is $ e_{i,c}(\ell) := \p(E^{(1)}_i=c\mid L_i^{(1)}=\ell).$
\end{definition}

By exposure irrelevance, this equals the conditional exposure probability
given the full observed pilot state and can be computed
from the known experiment design. 

\subsection{Identification}

The mean treatment response function measures the expected outcome for a unit given its own local state and treatment exposure. I now formalize assumptions under which pilot data identifies this function at exposures that are relevant to the target population welfare.

\begin{definition}[Mean Treatment Response Function]
\label{def:mean-response}
For any $\ell=(x,C)$ and $c\in C$:
\begin{equation}
 \mu_c^{(e)}(\ell) := \ex[\Y{e}_i(c)\mid L_i^{(e)}=\ell].
\end{equation}
\end{definition}

The above definition implicitly assumes that the mean treatment response functions do not vary across units. This is formalized in Assumption~\ref{asmp2::identical} which imposes identical distributions of local-states and mean treatment responses across units within either population. 
\begin{assumption}[Identically distributed units]
\label{asmp2::identical}
In any population $e\in\{0,1\}$:
\begin{enumerate}
 \item[a.] The law of local state $L_i^{(e)}$, denoted by $\p_L^{(e)}$, is the same for every unit.
 \item[b.] For every $c\in\cC^{(e)}$ and all units $i,j$, $\ex[\Y{e}_i(c)\mid L_i^{(e)}=\ell]
 =\ex[\Y{e}_j(c)\mid L_j^{(e)}=\ell]$ 
 for almost every $\ell=(x,C)$ with $c\in C$.
\end{enumerate}
\end{assumption}

Learning a target response additionally requires that target-relevant local
states and exposures can be represented in the pilot. The following site
support restriction extends the usual covariate-support condition to my setting
\citep{hotz2006evaluating}.

\begin{assumption}[Site support restrictions]\label{asmp2::support} \
\begin{enumerate}
 \item[a.] The local-state laws satisfy $\p_L^{(0)}\ll \p_L^{(1)}$ on
 the common space $\mathcal L$.
 \item[b.] Their Radon--Nikodym derivative, $s(\ell) := d\p_L^{(0)}/d\p_L^{(1)}$, is known and satisfies $0\leq s(\ell) \leq\kappa$, $\p_L^{(1)}$-almost surely,
 for a prescribed finite $\kappa\geq1$.
\end{enumerate}
\end{assumption}

The Radon--Nikodym derivative is unit-invariant by Assumption~\ref{asmp2::identical}.a. The
change-of-measure formula concerns two marginal laws. Thus every exposure feasible on a
positive-probability target local state is represented in the pilot. Treating $s$ as known
avoids an additional nuisance-estimation error; extending the analysis to an
estimated ratio requires controlling that error
\citep{van2007empirical,hirano2003efficient}. 


\begin{assumption}[Stability]\label{asmp2::stability}
For $\p_L^{(0)}$-almost every local state $\ell=(x,C)$ with $c\in C$,
\[
 \mu_c^{(0)}(\ell)=\mu_c^{(1)}(\ell).
\]
\end{assumption}

Stability compares mean responses at the \emph{same} local state. Together
with site support, it permits transporting pilot responses without requiring
the two populations' covariates or networks to have identical distributions.

The welfare induced by any treatment rule is a weighted average of the mean treatment responses of units in the target population, evaluated at the local state of each unit for exposures induced by the rule (i.e., policy-relevant exposures). Definition~\ref{defn2::exp_freq_pi} formalizes the weights attached to any local state--exposure pair as a network exposure kernel.  

\begin{definition}[Network exposure kernel]\label{defn2::exp_freq_pi}
For any rule $\pi \in \Pi$, denote the weight assigned to local state $\ell=(x,C)$ and exposure $c\in C$ for unit $j$ by:
\begin{equation}
 q_j(\ell,c;\pi)
 := \p(E_j^\pi=c\mid L_j^{(0)}=\ell), 
 \qquad \overline q(\ell,c;\pi)
  := \tfrac1\J\sum_{j=1}^{\J}q_j(\ell,c;\pi).
\label{eq:ub-common-kernel}
\end{equation}
\end{definition}


\begin{definition}[Policy-relevant exposures]
\label{def:policy-relevant-exposures}
For a fixed policy $\pi\in\Pi$, define
\begin{equation}\label{eq:policy-relevant-pairs}
 \mathcal E_\pi(\p)
 :=\{(\ell,c):\ell=(x,C),\ c\in C\cap\cC^{(0)},\
                        s(\ell)\overline q(\ell,c;\pi)>0\}.
\end{equation}
Overlap is imposed separately for every $\pi\in\Pi$ with the same
numerical lower bound $\gamma_\J$.
\end{definition}

\begin{assumption}[Policy-relevant overlap]\label{asmp2::overlap}
There exists $\gamma_\J>0$, such that for every pilot unit $i$, and every
$\pi\in\Pi$, and every $c\in\cC^{(0)}$, propensity scores are known and satisfy:
\begin{equation}\label{eq:ub-relevant-overlap}
 e_{i,c}(\ell)\geq\gamma_\J
 \quad\text{for }\p_L^{(1)}\text{-almost every }\ell
       \text{ such that }(\ell,c)\in\mathcal E_\pi(\p).
\end{equation}
\end{assumption}

The overlap bound is indexed by target size. It suffices for the welfare
criterion, which gives zero weight to policy-irrelevant pairs. Restricting
attention to these pairs does not, by itself, yield a target-size-only
bound for every exposure mapping or for a treatment probability approaching
zero or one. The following LSE calculation gives that bound when the
experimental treatment and control probabilities are uniformly positive.

\begin{runexample}\label{ex:prop_score}
In the pilot village, let
$\operatorname{deg}_i^{(1)} := \sum_{j\neq i}A_{ij}^{(1)}$
denote household $i$'s degree. Under LSE, the realized exposure and
feasible exposure set are
\[
 E^{(1)}_i=\left(T_i^{(1)},\sum_{j\neq i}A_{ij}^{(1)}T_j^{(1)},
                          \operatorname{deg}_i^{(1)}\right),
 \qquad C(d) := \{0,1\}\times\{0,\ldots,d\}\times\{d\}.
\]
For $\ell=(x,C(d))$ and $c=(a,b,d)\in C(d)$, independent Bernoulli
assignment gives
\[
 e_{i,c}(\ell)
 =p_\n^a(1-p_\n)^{1-a}\binom dbp_\n^b(1-p_\n)^{d-b}
 \geq\underline p_\n^{\,d+1},
 \qquad \underline p_\n := \min\{p_\n,1-p_\n\}.
\]
This law depends on the full pilot state only through the degree recorded
in $L_i^{(1)}$, verifying exposure irrelevance in this example. If both
populations use LSE, a target node has degree at most $\J-1$; site support
therefore implies $s(x,C(d))=0$ for pilot-almost every state with
$d>\J-1$. Consequently, at every policy-relevant pair $ e_{i,c}(\ell)\geq\underline p_\n^{\,d+1}
                    \geq\underline p_\n^{\,\J}.$
On any model class with $p_\n\in[p_0,1-p_0]$ for a fixed
$0<p_0\leq1/2$, one may take $\gamma_\J=p_0^{\J}$. For a fixed $p_0$,
the bound depends on $\J$, not on the pilot size $\n$. This does not bound
all pilot degrees or all pilot exposure probabilities. For $\J=2$ and
$p_\n=1/2$, for example, the bound on relevant propensities is $1/4$
even when the pilot population is much larger.
$\blacktriangledown$
\end{runexample}

\begin{remark}
    Under LSE, Example~\ref{ex:prop_score} bounds relevant propensities in terms of target size. When the target population grows, even relevant propensities can become small. 
This motivates the cutoff empirical welfare extension in Appendix~\ref{app:cutoff-welfare}, related to the trimming extension in \citet[Section~4.1]{viviano2025policy}.
\end{remark} 

\begin{assumption}[Known averaged target exposure kernels]
\label{asmp2::known_marginals}
The planner knows $\overline q(\ell,c;\pi)$ for each $\pi\in\Pi$ and
$c\in\cC^{(0)}$, at local states where $s(\ell)>0$.
\end{assumption}

Knowledge of the target-state law $P_V$ is sufficient to construct the kernels. Estimating these kernels would introduce an additional approximation term.

\begin{runexample}
For two-household target villages, knowledge of the distribution of $V$---the
household covariates and their contact network---permits conditional draws
of villages given $L_j^{(0)}$ and hence evaluation of each $q_j$ and
$\overline q$. Administrative records, censuses, or pre-campaign surveys may
supply this information. $\blacktriangledown$
\end{runexample}

\begin{theorem}[Identification of Welfare]\label{lem::iden_welfare}
Suppose Assumptions~\ref{asmp2::eff_treat}, \ref{asmp2::pot_out},
\ref{asmp:ub-fundamentals}, \ref{asmp2::consistent}, \ref{asmp:experimental-design},
\ref{asmp2::unconfoundedness}, \ref{asmp2::identical},
\ref{asmp2::support}, \ref{asmp2::stability},
\ref{asmp2::overlap}, and \ref{asmp2::known_marginals} hold.
Then, for every fixed $\pi\in\Pi$ and every pilot unit $i$,
\begin{align}
 W(\pi;\p) & ~=~ \ex_\p[\Upsilon_i(Z_i;\pi)] ~=~ \ex_\p\left[\tfrac{1}{\n}\sum_{i=1}^\n \Upsilon_i(Z_i;\pi)\right]\ ,  \nonumber \\
  \text{where} \quad \Upsilon_i(Z_i;\pi)
 & ~:=~ \sum_{c\in\cC^{(0)}}\tfrac{\Y{1}_i\one\{E^{(1)}_i=c\}}{e_{i,c}(L_i^{(1)})}
                     s(L_i^{(1)})
              \overline q(L_i^{(1)};c,\pi) \label{eq:ub-ipw-weight}
\end{align}
\end{theorem}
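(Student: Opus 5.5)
We argue in two stages: first rewrite the target welfare as an integral of transported mean responses against the averaged exposure kernel, then show that the IPW summand $\Upsilon_i(Z_i;\pi)$ has exactly this expectation. The second equality (average over $i$) is immediate from the first by linearity of expectation, since the first holds for every pilot unit $i$. Throughout, boundedness (Assumption~\ref{asmp2::pot_out}.b), $s\le\kappa$, overlap, and $\overline q\le1$ make every integrand bounded on the relevant set, so Fubini and iterated expectations apply freely.

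\emph{Step 1 (target side).} Fix $\pi$ and a target unit $j$. Since $E_j^\pi$ is $\sigma(V)$-measurable and takes values in the finite set $C_j^{(0)}$, we have
\[
\ex[\Y{0}_j(E_j^\pi)\mid V]=\sum_{c}\one\{E_j^\pi=c\}\,\ex[\Y{0}_j(c)\mid V].
\]
By Assumption~\ref{asmp2::unconfoundedness}.a this equals $\sum_c\one\{E_j^\pi=c\}\mu_c^{(0)}(L_j^{(0)})$, using Assumption~\ref{asmp2::identical}.b to drop the unit index from $\mu$. We then condition on $L_j^{(0)}$; because $L_j^{(0)}$ is a function of $V$, the tower property gives
\[
\ex[\Y{0}_j(E_j^\pi)]=\int\sum_{c\in C}\mu_c^{(0)}(\ell)\,q_j(\ell,c;\pi)\,d\p_L^{(0)}(\ell).
\]
Averaging over $j$, using that $\p_L^{(0)}$ is common across units (Assumption~\ref{asmp2::identical}.a), yields
\[
W(\pi;\p)=\int\sum_c\mu_c^{(0)}(\ell)\,\overline q(\ell,c;\pi)\,d\p_L^{(0)}(\ell).
\]
Only pairs with $\overline q>0$ contribute. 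Next, Assumption~\ref{asmp2::stability} replaces $\mu^{(0)}$ by $\mu^{(1)}$ $\p_L^{(0)}$-a.e., and Assumption~\ref{asmp2::support} changes measure, giving
\[
W(\pi;\p)=\int\sum_c\mu_c^{(1)}(\ell)\,s(\ell)\,\overline q(\ell,c;\pi)\,d\p_L^{(1)}(\ell),
\]
where the integrand now vanishes off $\mathcal E_\pi(\p)$.

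\emph{Step 2 (pilot side).} For pilot unit $i$, we condition on $L_i^{(1)}$. By Assumption~\ref{asmp2::consistent}, on the event $\{E_i^{(1)}=c\}$ we have $\Y{1}_i\one\{E^{(1)}_i=c\}=\Y{1}_i(c)\one\{E^{(1)}_i=c\}$. The factor $s(L_i^{(1)})\overline q(L_i^{(1)},c;\pi)/e_{i,c}(L_i^{(1)})$ is $L_i^{(1)}$-measurable; it is known by Assumptions~\ref{asmp2::support}, \ref{asmp2::overlap}, and \ref{asmp2::known_marginals}, and it is finite wherever the numerator is nonzero by overlap. (On pairs outside $\mathcal E_\pi$ we adopt the convention that the summand is $0$, since $s\overline q=0$ there.) By the strong unconfoundedness \eqref{eqn2::strict_unconfoundedness}, derived from Assumptions~\ref{asmp:experimental-design} and \ref{asmp2::unconfoundedness}.b, we get
\[
\ex[\Y{1}_i(c)\one\{E^{(1)}_i=c\}\mid L_i^{(1)}]=\mu_c^{(1)}(L_i^{(1)})\,e_{i,c}(L_i^{(1)}).
\]
The propensity cancels, and integrating against $\p_L^{(1)}$, which is unit-invariant, reproduces the final expression of Step 1. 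Summing over the finite set $c\in\cC^{(0)}$ completes the argument.

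\emph{Main obstacle.} The delicate point is the null-set bookkeeping. Overlap holds only on $\mathcal E_\pi(\p)$ and only $\p_L^{(1)}$-a.e., while stability holds only $\p_L^{(0)}$-a.e. We must therefore check two things: that the $\p_L^{(0)}$-null sets on which stability fails carry zero weight after multiplying by $s$, which they do because $s\,d\p_L^{(1)}=d\p_L^{(0)}$; and that exposures $c\in\cC^{(0)}$ not in $C$, or with $e_{i,c}=0$, contribute nothing on both sides. I would handle this by writing every integral explicitly over $\mathcal E_\pi(\p)$ before swapping measures.
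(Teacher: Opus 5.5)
Your proposal is correct and uses the same ingredients as the paper's proof: the pilot-side IPW identity from consistency, overlap and the strong unconfoundedness in \eqref{eqn2::strict_unconfoundedness}; the change of measure via stability and site support; the target-side reduction via Assumptions~\ref{asmp2::unconfoundedness}.a and~\ref{asmp2::identical}.a; and linearity for the averaged version. The only difference is presentational: you reduce both sides to the transported pilot integral, whereas the paper runs a single chain of equalities from $\ex[\Upsilon_i(Z_i;\pi)]$ to $W(\pi;\p)$.
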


\section{Welfare Estimation and Policy Learning}
\label{subsec:aipw-welfare}

In the absence of knowledge of the DGP, the planner uses data $\mathbf Z$ to construct a data-dependent allocation rule that approximates the welfare maximizer.
 I formalize such data-dependent procedures as \emph{statistical treatment rules}, i.e. jointly measurable maps
\(
    (\mathbf z,v)\longmapsto
    \pi_{\mathbf z}(v)\in\Tc.
\) where $\pi \in \Pi$.
Following convention, I define uniform regret as a measure of how accurately a statistical treatment rule approximates the welfare generated by an optimal treatment allocation rule when $\p$ is known.

\begin{definition}[Uniform regret]\label{def:uniform-regret} Let $\Pc$ be the nonempty class of pilot
laws and target evaluation laws satisfying Assumptions~\ref{asmp2::eff_treat}--\ref{asmp2::known_marginals}. The target population drawn is independent of the pilot experiment. For a statistical rule, define
    \begin{equation}
        R(\tilde \pi) = \sup_{\p\in \Pc}~\ex \left[~\sup_{\pi \in \Pi} W(\pi;\p)-W(\tilde \pi;\p) \right].
    \end{equation}
\end{definition}

\subsection{Empirical Welfare Maximization}

I now extend empirical welfare maximization to the global treatment allocation problem. I start by defining the empirical welfare function.  

\begin{definition}[Empirical welfare]\label{def:emp_wel}
For any $\pi\in\Pi$, define $\widehat\W_{\n}(\pi)
  := \frac1\n\sum_{i=1}^{\n}\Upsilon_i(Z_i;\pi),$
using the same welfare contribution defined in Theorem~\ref{lem::iden_welfare}.
\end{definition}

The treatment rule that maximizes the empirical welfare is denoted by the empirical welfare maximizer. 

\begin{definition}[Empirical welfare maximization]
\label{def:EWM}
An empirical welfare maximizer is a statistical treatment rule
$(\mathbf z,v)\mapsto\widehat\pi(v)$ that is jointly
measurable, has $\widehat\pi\in\Pi$, and satisfies
\[
 \widehat W_\n(\widehat\pi)
 =
 \sup_{\pi\in\Pi}\widehat W_\n(\pi)
 \qquad
 \p\text{-almost surely for every }\p\in\Pc.
\]
I assume that such a jointly measurable rule exists and attains the supremum.
\end{definition}

To derive uniform regret bounds for the empirical welfare maximizer, I study
the empirical process indexed by the policy class.
The welfare-contribution class, denote by $\mathcal F_\Pi
  := \{(i,z)\mapsto\Upsilon_i(z;\pi):\pi\in\Pi\}$ must be of restricted complexity. 

\begin{assumption}[Finite Natarajan dimension and measurability]\label{asmp:Nat}
The nonempty class $\Pi$ has finite Natarajan dimension
$1 \leq N_{dim}<\infty$. There is a countable subclass
$\Pi_0\subseteq\Pi$ such that every $\pi\in\Pi$ is the pointwise limit of a
sequence from $\Pi_0$, with the allocation set equipped with the discrete
topology. 
\end{assumption}

The countable approximation makes the suprema below measurable. An equivalent
separability condition or an outer-expectation formulation can be used instead.
Finite Natarajan dimension controls the complexity of the class of allocation rules considered and is a natural analogue of the VC dimension is multiclass learning problems. If $N_{dim}=0$, all policies in
$\Pi$ coincide pointwise and an empirical maximizer in $\Pi$ has zero regret.

\begin{remark}
\label{rem:Nat-target-size}
Assumption~\ref{asmp:Nat} concerns the chosen policy class at each
target size $\J$ and budget $B$; it does not require $N_{dim}$ to
remain uniformly bounded as $\J$ grows. The dependence of $\Pi$
and $N_{dim}$ on $(\J,B)$ is suppressed in the notation. Although
the number of feasible allocations
$\Q=\sum_{b=0}^{B}\binom{\J}{b}$ grows with $\J$ for a fixed
positive budget, this does not by itself imply that the Natarajan
dimension increases. Its behavior depends on how the policy
classes are specified across population sizes, and there is no
general monotonicity without additional structure. In particular,
the shared-score class below has dimension bounded independently
of $\J$ when its number of features is fixed. Conversely, allowing
additional features or allocation-specific coefficients can increase
policy complexity. The regret bounds distinguish this complexity
from the size of the allocation space through the separate
appearance of $N_{dim}$ and $\Q$ in
Eqn.~\eqref{eq:ub-complexity}.
\end{remark}

\begin{example}[Finite-dimensional global allocation rules]
\label{ex:finite-dim-global-rules}
Fix a known measurable feature map
$\phi:\mathcal L\times\cC^{(0)}\to\R^{d_\phi}$,
$1\leq d_\phi<\infty$, and a nonempty countable coefficient set
$\Theta\subseteq\R^{d_\phi}$ (e.g. rationals in $\R^{d_\phi}$). Define a joint allocator based on a linear score:
\begin{align}
    \Phi(v,\mathbf t)
    &:=\frac1\J\sum_{j=1}^{\J}
      \phi\!\left(
          L_j^{(0)}(v),E_j^{(0)}(\mathbf t;v)
      \right),
      \label{eq:global-linear-features}\\
    \pi_\theta(v)
    &\in\argmax_{\mathbf t\in\Tc}
       \theta^\top\Phi(v,\mathbf t),
    \qquad
    \Pi_\phi:=\{\pi_\theta:\theta\in\Theta\},
      \label{eq:global-linear-policy}
\end{align}
selecting the first maximizer in a fixed ordering of $\Tc$.
Each rule is measurable and budget feasible, and its assignments
are chosen jointly using the population state. The feature-difference
argument of \citet[Theorem~7]{daniely2014optimal} gives
\begin{equation}\label{eq:global-linear-Natarajan}
    \dim_N(\Pi_\phi)\leq d_\phi.
\end{equation}
Since $\Pi_\phi$ is countable, Assumption~\ref{asmp:Nat} holds
with $\Pi_0=\Pi_\phi$.

Under LSE in Eqn.~\eqref{eg::lse}, a concrete choice of index score is
\begin{equation}\label{eq:global-linear-LSE-features}
    \phi\bigl((x,C(d)),(a,s,d)\bigr)
    :=
    \left(
        a,\;ax^\top,\;
        \frac{s}{\max\{d,1\}},\;
        \frac{as}{\max\{d,1\}}
    \right)^\top,
    \qquad d_\phi=d_x+3,
\end{equation}
on feasible pairs, with zero extension elsewhere and
$C(d)=\{0,1\}\times\{0,\ldots,d\}\times\{d\}$.
The score incorporates covariate-dependent direct returns,
neighborhood treatment, and their interaction; it does not impose
linearity on the true mean response. For fixed $d_x$, the dimension
bound is independent of $\J$, even though the feasible-allocation
space grows.
\end{example}

\section{Finite-Sample Regret Bounds for Global Rules}
\label{sec:regret-bounds}

This section establishes an upper bound for empirical welfare maximization, followed by a
minimax lower bound. The two results are compared on the same maintained model class; except that the lower bound adds the welfare-relevant richness assumption.


\subsection{Upper bound on regret}
\label{subsec:upper}

The upper bound quantifies how closely empirical welfare maximization approximates oracle welfare.
Define the finite catalog of possible pilot feasible-exposure sets by:
\begin{equation}
 \mathscr C_\n
 := \{C_i^{(1)}(A,\mathbf x):i\in[\n],\ A\in\sA_\n,
                                  \ \mathbf x\in\Xc^\n\},\quad 
 K_{\mathrm{set},\n} := |\mathscr C_\n|.
 \label{eq:ub-feasible-catalogue}
\end{equation}

Since the exposure space is finite, $1\leq K_{\mathrm{set},\n}\leq2^{|\cC^{(1)}|}-1$.
Under LSE, $K_{\mathrm{set},\n}\leq\n$, because each feasible set is indexed by degree.
There is no finite-support restriction on the covariates or target states.
Set
\begin{equation}\label{eq:ub-complexity}
 a_\Pi := 1+N_{dim}\log(12e\Q^2),
 \qquad
 \vartheta_\n := 1+N_{dim}\log(12e\Q^2\n^2 K_{\mathrm{set},\n}).
\end{equation}
The $e$ in this display is Euler's number. The additional logarithm accounts for
kernel evaluations at local states generated by network and node resampling.
The bounds $M,\gamma_\J,\kappa,\rho_\n,d_\n^{\mathrm{sc}}$ and the
policy class are common across $\Pc$. 

\begin{theorem}[Upper bound for regular empirical welfare]\label{thm:upper-new}
Let $\Pc$ be the common model class in Definition~\ref{def:uniform-regret},
and suppose Assumption~\ref{asmp:Nat} holds, then there exists a universal constant $C<\infty$ such that every empirical welfare maximizer satisfies
\begin{equation}\label{eq:upper-new}
\begin{split}
 R(\widehat\pi)
 &\leq 2M\wedge\Bigg\{CM\kappa\Bigg[
  \sqrt{\frac{a_\Pi}{\n\gamma_\J}} + \frac{d_\n^{\mathrm{sc}}}{\gamma_\J}
    \left(
       \sqrt{\rho_\n\vartheta_\n}
       +\n\rho_\n\sqrt{\frac{\vartheta_\n}{\n}}
       +\frac{\vartheta_\n}{\n}
    \right)
 \Bigg]\Bigg\}.
\end{split}
\end{equation}
If $\rho_\n=0$ (empty graph) or
$d_\n^{\mathrm{sc}}=0$ (SUTVA holds), the familiar bound appears.
\begin{equation}\label{eq:upper-new}
\begin{split}
 R(\widehat\pi)
 &\leq 2M\wedge\Bigg\{CM\kappa 
  \sqrt{\frac{a_\Pi}{\n\gamma_\J}} \Bigg\}.
\end{split}
\end{equation}
\end{theorem}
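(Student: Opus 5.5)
The plan is the standard EWM reduction followed by a concentration argument that splits the dependence induced by the pilot network into a node-level part and a dyadic part.

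\emph{Step 1: reduction to a uniform deviation.} By Theorem~\ref{lem::iden_welfare}, $\ex_\p[\widehat W_\n(\pi)]=W(\pi;\p)$ for each fixed $\pi$. Since $\widehat\pi$ maximizes $\widehat W_\n$, the usual decomposition gives
\[
\sup_{\pi\in\Pi}W(\pi;\p)-W(\widehat\pi;\p)\leq 2\sup_{\pi\in\Pi}\bigl|\widehat W_\n(\pi)-\ex\widehat W_\n(\pi)\bigr|.
\]
The target population is independent of the pilot, and $\widehat\pi$ depends only on $\mathbf Z$, so $W(\widehat\pi;\p)$ is well defined. The trivial bound $2M$ holds because outcomes are bounded by $M$. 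By Assumption~\ref{asmp:Nat}, the supremum may be taken over the countable class $\Pi_0$, which makes it measurable.

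\emph{Step 2: bounding the summands.} Each $\Upsilon_i$ satisfies $|\Upsilon_i|\leq M\kappa/\gamma_\J$. This uses overlap on policy-relevant pairs, together with the facts that $s\leq\kappa$ and $\sum_c \overline q\leq 1$. Its second moment is at most $M^2\kappa/\gamma_\J$, since the propensity cancels once in expectation.

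\emph{Step 3: martingale decomposition.} I would write the centered process as a Doob martingale along a filtration that first reveals the fundamentals $F_1^{(1)},\dots,F_\n^{(1)}$ and then the dyadic uniforms $\xi_{ik}^{(1)}$. This splits it into
\[
\widehat W_\n-\ex\widehat W_\n=\bigl(\ex[\widehat W_\n\mid F^{(1)}]-\ex\widehat W_\n\bigr)+\bigl(\widehat W_\n-\ex[\widehat W_\n\mid F^{(1)}]\bigr).
\]

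\emph{Step 4: the node-level term.} The first term is a function of the IID vectors $F_i^{(1)}$. By Assumption~\ref{asmp:ub-locality}.a, on the empty graph it reduces to an IID average. I would handle it by symmetrization and a Rademacher bound over the induced class. Projected onto a finite sample, the class $\mathcal F_\Pi$ is indexed only through the kernel values $\overline q(\cdot;c,\pi)$, and those in turn depend on $\pi$ through the allocations it induces. The Natarajan lemma then bounds the number of distinct behaviours by $(\text{points})^{N_{dim}}\Q^{2N_{dim}}$, which yields $a_\Pi$. A Bernstein-type bound, using the variance proxy $M^2\kappa/\gamma_\J$ from Step 2, gives the leading term $\sqrt{a_\Pi/(\n\gamma_\J)}$ rather than $1/\gamma_\J$. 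Interactions among linked nodes conditional on $F$ belong to the second term.

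\emph{Step 5: the dyadic term.} The second term is a function of independent dyad variables. Changing a single $\xi_{ik}$ toggles at most one link, and by Assumption~\ref{asmp:ub-locality}.b this changes at most $d_\n^{\mathrm{sc}}$ of the pairs $(L_i,E_i)$. Each such change moves $\widehat W_\n$ by at most $2M\kappa/(\n\gamma_\J)$. Only a fraction of order $\rho_\n$ of the dyads can flip, so the conditional variance is of order $(d_\n^{\mathrm{sc}}M\kappa/\gamma_\J)^2\rho_\n$. A Freedman/Bernstein martingale inequality, with differences bounded by $d_\n^{\mathrm{sc}}M\kappa/(\n\gamma_\J)$, then gives
\[
\frac{d_\n^{\mathrm{sc}}M\kappa}{\gamma_\J}\Bigl(\sqrt{\rho_\n\log N}+\frac{\log N}{\n}\Bigr).
\]
Here $N$ is the number of distinct patterns of the process, which must be controlled uniformly over $\pi$. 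For this I would use a union bound over the behaviours of $\pi$ on the finitely many local states that resampling can produce. Those states lie in a set of size at most $\n^2K_{\mathrm{set},\n}$, which yields $\vartheta_\n$. The term $\n\rho_\n\sqrt{\vartheta_\n/\n}$ should come from the same argument: it accounts for the expected number of realized links, of order $\n^2\rho_\n$, in the bias-correcting step that recovers a chaining or union-bound tail when degrees are not bounded. Integrating the tails produces each term with a universal constant $C$, and taking the supremum over $\p\in\Pc$ gives the result, since all constants are common across $\Pc$.

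\emph{Special cases.} If $\rho_\n=0$ or $d_\n^{\mathrm{sc}}=0$, the dyadic term vanishes identically, and Step 4 alone gives the second display.

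\emph{Main obstacle.} I expect Step 5 to be the hardest: uniformity over $\Pi$ must hold even though a flipped dyad changes the local states at which the kernels $\overline q$ are evaluated. My first attempt would fix the realized fundamentals and apply the Natarajan lemma on the augmented point set of all local states reachable by single-dyad flips, which is finite, of size at most $\n^2K_{\mathrm{set},\n}$.
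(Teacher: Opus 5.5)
Your Steps 1, 2 and 5 and your special cases follow the paper's argument. Step 4, however, has a genuine gap.

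Write $m_\pi(F^{(1)}):=\ex[\widehat W_{n_1}(\pi)\mid F^{(1)}]$. When $\rho_{n_1}>0$ and $d^{\mathrm{sc}}_{n_1}>0$, the node-level term $m_\pi(F^{(1)})-\ex m_\pi(F^{(1)})$ is not an average of independent summands. After the dyadic uniforms are integrated out, unit $i$'s contribution still depends on $(T_k^{(1)},X_k^{(1)},U_k^{(1)})$ for every $k\neq i$. The dependence enters through the link probabilities $\rho_{n_1}g(X_i^{(1)},X_k^{(1)},U_i^{(1)},U_k^{(1)})$ and through neighbour-dependent exposures such as the number of treated neighbours.

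Your remark that interactions among linked nodes \emph{belong to the second term} is therefore incorrect. The second term contains only the fluctuation of the graph around its conditional law given $F^{(1)}$. The conditional mean of those interactions stays in $m_\pi$. So neither Rademacher symmetrization nor Bernstein with variance proxy $M^2\kappa/\gamma_{n_0}$ applies to $m_\pi$.

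Applying McDiarmid to $m_\pi$ as a whole does not rescue the argument either. Set $B_{n_0}:=M\kappa/\gamma_{n_0}$. Node $k$'s own summand contributes a bounded difference of order $B_{n_0}/n_1$. That yields a term of order $M\kappa\gamma_{n_0}^{-1}\sqrt{\vartheta_{n_1}/n_1}$, which loses the $\sqrt{\gamma_{n_0}}$ of the leading term and is not dominated by the stated bound.

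The missing idea is the paper's further split $m_\pi=S^0_\pi+r_\pi$, where $S^0_\pi:=S_\pi(F^{(1)},\mathbf 0)$ is the empty-graph benchmark. $S^0_\pi$ is an IID average, and your Step 4 argument applies to it. For the remainder $r_\pi$, Assumption~\ref{asmp:ub-locality}.a gives an isolation cancellation:
\[
S_\pi(F^{(1)},A^{-k})-S_\pi(F^{(1)},\mathbf 0)=S_\pi(F^{(1),[k]},A^{-k})-S_\pi(F^{(1),[k]},\mathbf 0).
\]
Here the two graph draws share the same $\xi$, and $A^{-k}$ deletes node $k$'s incident dyads. Consequently, replacing $F^{(1)}_k$ moves $r_\pi$ only through $k$'s incident edges. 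Restore these edges one at a time, each costing $2B_{n_0}d^{\mathrm{sc}}_{n_1}/n_1$, and use expected degree at most $(n_1-1)\rho_{n_1}$. This gives bounded differences
\[
\lambda_{n_1}=4B_{n_0}d^{\mathrm{sc}}_{n_1}(n_1-1)\rho_{n_1}/n_1,
\]
in which node $k$'s own $O(B_{n_0}/n_1)$ contribution has cancelled.

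Ghost-sample swap symmetrization, McDiarmid's exponential-moment bound and a union over a finite net then give $\lambda_{n_1}\sqrt{n_1\vartheta_{n_1}}\asymp B_{n_0}d^{\mathrm{sc}}_{n_1}\rho_{n_1}\sqrt{n_1\vartheta_{n_1}}$. This is exactly the $n_1\rho_{n_1}\sqrt{\vartheta_{n_1}/n_1}$ term. It does not come from a \emph{bias-correcting step} in the conditional dyadic argument, whose Freedman bound produces only $\sqrt{\rho_{n_1}\vartheta_{n_1}}+\vartheta_{n_1}/n_1$.

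A smaller point concerns the counting in Steps 4 and 5. You cannot count \emph{distinct behaviours} of $\overline q(\ell,c;\pi)$ on finitely many local states. The kernel integrates $\pi$ over the conditional law of target states, so it can take a continuum of values. What you need instead is a covering: the disagreement pseudometric under a mixture of conditional target laws, combined with the Haussler--Long packing bound, as in Lemma~\ref{lem:kernel-entropy}. The covering carries an $O(B_{n_0}/n_1)$ approximation remainder.
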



\begin{cor}[Sparse local-score case]\label{cor:sparse-bdd-degree}
Suppose the conditions of Theorem~\ref{thm:upper-new} hold uniformly along a
sequence of models, with
\[
 d_\n^{\mathrm{sc}}=O(1),\qquad \rho_\n=O(\n^{-1}),\qquad \kappa=O(1).
\]
Then, for a constant $C$ depending only on these uniform bounds,
\[
 R(\widehat\pi)
 \leq 2M\wedge\left\{CM\left[
       \sqrt{\frac{\vartheta_\n}{\n}}
       +\frac{\vartheta_\n}{\n}\right]\right\}.
\]
\end{cor}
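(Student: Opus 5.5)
The corollary follows by substituting the stated rates into the bound of Theorem~\ref{thm:upper-new}. I would track the terms one at a time and absorb the constants. The plan has three steps.

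First, fix the sequence of models. Let $D$, $R_0$ and $K_0$ bound $d_\n^{\mathrm{sc}}$, $\n\rho_\n$ and $\kappa$ uniformly. In the bracket of \eqref{eq:upper-new}, the term $\n\rho_\n\sqrt{\vartheta_\n/\n}$ is at most $R_0\sqrt{\vartheta_\n/\n}$. The term $\sqrt{\rho_\n\vartheta_\n}$ is at most $\sqrt{R_0}\sqrt{\vartheta_\n/\n}$. The term $\vartheta_\n/\n$ is left unchanged. Multiplying by $d_\n^{\mathrm{sc}}\le D$ shows the spillover part is of order $\sqrt{\vartheta_\n/\n}+\vartheta_\n/\n$, up to a factor $1/\gamma_\J$.

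Second, compare the leading term $\sqrt{a_\Pi/(\n\gamma_\J)}$ with $\sqrt{\vartheta_\n/\n}$. The definitions in \eqref{eq:ub-complexity} share the same $N_{dim}$ coefficient, and $\n^2K_{\mathrm{set},\n}\ge1$. Hence $a_\Pi\le\vartheta_\n$, so this term is at most $\gamma_\J^{-1/2}\sqrt{\vartheta_\n/\n}$.

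Third, collect everything. The combined bound is $CM\kappa\,(\gamma_\J^{-1/2}+D(1+\sqrt{R_0}+R_0)\gamma_\J^{-1})(\sqrt{\vartheta_\n/\n}+\vartheta_\n/\n)$. The minimum with $2M$ is carried through unchanged.

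The main obstacle is the factor $\gamma_\J$. The corollary's constant is said to depend only on the three listed uniform bounds, but $\gamma_\J$ is not among them. I would handle it in one of two ways:

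\begin{itemize}
\item read ``the conditions of Theorem~\ref{thm:upper-new} hold uniformly'' as including a uniform lower bound $\gamma_\J\ge\underline\gamma>0$, which is the fixed-target-size regime of Example~\ref{ex:prop_score} where $\gamma_\J=p_0^{\J}$;
\item or state explicitly that $C$ also depends on $\gamma_\J$ (and on $\kappa$ through $K_0$).
\end{itemize}

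Under either reading, $\gamma_\J^{-1/2}\le\gamma_\J^{-1}\le\underline\gamma^{-1}$, since $\gamma_\J\le1$ because it lower-bounds a probability. The whole prefactor is therefore a constant depending only on $(D,R_0,K_0,\underline\gamma)$ and the universal $C$. I would note this dependence explicitly in the write-up rather than hide it.
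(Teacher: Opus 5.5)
Your proposal is correct and matches the paper's proof: substitute $d_\n^{\mathrm{sc}}\leq D$, $\n\rho_\n\leq R_0$ and $\kappa\leq K_0$ into the theorem's bound, use $a_\Pi\leq\vartheta_\n$, and absorb the constants. The paper's proof also silently adds $\gamma_\J\geq\gamma_0>0$, which is your first reading, so you are right that the constant depends on this overlap bound as well, not only on the three quantities listed in the statement.
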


The sparsity restriction bounds expected degree, not maximum degree.
Overlap is now required only on policy-relevant pairs. Under LSE and a
fixed lower bound $p_0$ on treatment and control probabilities,
Example~\ref{ex:prop_score} permits $\gamma_\J=p_0^{\J}$.
Thus fixed target size gives a pilot-size-independent overlap bound without
a maximum-degree restriction. When $\J$ grows, this bound can still
be small. Appendix~\ref{app:cutoff-welfare} develops a separate cutoff
empirical welfare criterion and upper bound, with explicit conditions
under which omitted target welfare is negligible relative to estimation
error.

\subsection{Lower bound on regret}
\label{subsec:lower}

I now provide a lower bound on a similar model class $\Pc$ as in
Theorem~\ref{thm:upper-new}. Finite Natarajan dimension does not itself
ensure independently variable welfare trade-offs. Allocations can change
individual exposures without changing aggregate welfare. The only additional assumption therefore is a welfare-relevant richness condition on the policy class.
For $e\in\{0,1\}$, let
\(
 v_\varnothing^{(e)}(\mathbf x)
  := V^{(e)}(\mathbf0_{\nee\times\nee},\mathbf x)
\)
be the population state generated by the maintained feature transformation
at an empty graph which only admits exposure labels $c_{\varnothing,0}$ and
$c_{\varnothing,1}$ by Assumption~\ref{asmp:ub-locality}.

\begin{assumption}[Richness of $\Pi$]
\label{asmp:welfare_shatter}
There exist $m\geq1$ distinct active covariate values $x_1,\ldots,x_m$
and a further value $x_0$ in $\Xc$. Write
$\Xc_m := \{x_0,x_1,\ldots,x_m\}$ and $N_m := \{-1,1\}^m$.
For each $\nu\in N_m$, the class $\Pi$ contains a
 feasible rule $\pi_\nu$ whose behavior on every empty graph
 $v_\varnothing^{(0)}(\mathbf x)$, $\mathbf x\in\Xc_m^{\J}$, is as
 follows. Define
 \[
 I_\nu(\mathbf x) := \{j:\mathbf x_j=x_r\text{ for some }r\geq1
                                      \text{ with }\nu_r=1\}.
 \]
 The rule treats only the first $\min\{B,|I_\nu(\mathbf x)|\}$ units in
 this set, ordered by their indices. 
\end{assumption}

The assumption specifies behavior on every configuration generated by the construction,
not only on a collection of anchors. Let
$\mathbf x_r := (x_r,\ldots,x_r)$ and
$v_r := v_\varnothing^{(0)}(\mathbf x_r)$.
At $v_r$, the two possible allocations are
$\mathbf t_B := (1,\ldots,1,0,\ldots,0)$, with $B$ leading ones, and
$\mathbf0$. Thus $\{v_1,\ldots,v_m\}$ is Natarajan-shattered and
$m\leq N_{dim}$. When $B<\J$, this condition is not automatic for a
deterministic permutation-equivariant class: indistinguishable units in a
homogeneous edgeless state cannot be split by such a rule.

\begin{theorem}[Lower bound]\label{thm:lower-new}
Let $\Pc$ be the same common model class as in
Theorem~\ref{thm:upper-new}, including Assumption~\ref{asmp:Nat}.
Suppose additionally that Assumption~\ref{asmp:welfare_shatter} holds.
If $1\leq B\leq\J$, $\n\geq1$, $M>0$, and $0<\gamma_\J<1/2$, then
\begin{equation}\label{eq:lower-new}
 \inf_{\widetilde\pi}\sup_{\p\in\Pc}\ex_\p
 \left[\sup_{\pi\in\Pi}W(\pi;\p)-W(\widetilde\pi;\p)\right]
 \geq\frac{M}{128}
 \left[\frac{B}{\J}\wedge
             \sqrt{\frac{Bm}{\J\n\gamma_\J}}\right].
\end{equation}
\end{theorem}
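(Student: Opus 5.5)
We prove Theorem~\ref{thm:lower-new} by the standard Assouad-type reduction used for classification minimax bounds \citep{devroye1996uniform,kittet18}, applied to a hypercube of data-generating processes indexed by $\nu\in N_m$. All laws in the subfamily put both populations on the empty graph. We take $g\equiv0$, so that $\rho$ is irrelevant and every exposure is $c_{\varnothing,t_j}$ by Assumption~\ref{asmp:ub-locality}.a. This eliminates spillovers, so only the direct-effect signal remains.

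\textbf{Construction.} Target covariate profiles are the homogeneous arrays $\mathbf x_r$, $r=1,\ldots,m$, each drawn with probability $1/m$. Pilot covariates are IID on $\Xc_m$: value $x_0$ carries most of the mass, and each active $x_r$ has mass chosen so that Assumption~\ref{asmp2::support} holds with $\kappa$ bounded. We set $p_\n=\gamma_\J$, which gives exactly $e_{i,c}\geq\gamma_\J$ on the two isolated exposures; the condition $\gamma_\J<1/2$ guarantees $\gamma_\J\leq 1-\gamma_\J$. Outcomes are $\pm M$-valued Bernoulli draws with
\[
\mu_{c_{\varnothing,1}}(x_r)=\nu_r\delta M,\qquad \mu_{c_{\varnothing,0}}(x_r)=0 .
\]
The parameter $\delta\in(0,1/2]$ will be tuned later. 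Stability, local information and the remaining assumptions hold trivially because all units are isolated and IID. Assumption~\ref{asmp2::known_marginals} holds because the kernels $\overline q$ are computable from the known target law.

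\textbf{Separation.} At $v_r$ the feasible allocations that matter are $\mathbf t_B$ and $\mathbf 0$. Conditional on $v_r$, welfare equals $(B/\J)\nu_r\delta M$ if $\mathbf t_B$ is used and $0$ if $\mathbf 0$ is used; any other allocation treating $b$ units yields $(b/\J)\nu_r\delta M$. By Assumption~\ref{asmp:welfare_shatter}, the rule $\pi_\nu\in\Pi$ attains the oracle welfare $\frac1m\sum_r (B/\J)\delta M\,\one\{\nu_r=1\}$. For any statistical rule $\widetilde\pi$, the regret therefore decomposes as
\[
\frac1m\sum_r \frac{\delta M}{\J}\,\bigl|\,\text{oracle count at }v_r-\text{treated count at }v_r\,\bigr|_{\text{signed by }\nu_r}.
\]
This is bounded below by $\frac{B\delta M}{2\J m}\sum_r \one\{\hat\nu_r\neq\nu_r\}$, where $\hat\nu_r$ rounds the fraction treated at $v_r$ to the nearest of $\{0,B\}$. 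The decision at $v_r$ is a function of $\mathbf Z$ alone, since the target is independent of the pilot, so $\hat\nu_r$ is a legitimate estimator.

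\textbf{Information bound.} Apply Assouad's lemma. Flipping $\nu_r$ changes only the outcome law of pilot units with $X=x_r$ and $T=1$. The KL divergence between the two flipped pilot laws is at most $\n\cdot\p(X=x_r)\cdot\gamma_\J\cdot C\delta^2$. We choose the pilot mass on each $x_r$ of order $B/(\J m)$, so that the target-to-pilot density ratio is consistent with $\kappa$ after reweighting welfare by $B/\J$. Equivalently, we may shrink $\delta$ and scale instead. The per-coordinate KL is then $\lesssim \n\gamma_\J B\delta^2/(\J m)$. Choosing
\[
\delta\asymp \sqrt{\J m/(B\n\gamma_\J)}\wedge 1
\]
keeps this KL below a constant. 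Assouad then gives regret $\gtrsim (B/\J)\delta M$, which is the stated minimum of $B/\J$ and $\sqrt{Bm/(\J\n\gamma_\J)}$. The constant $1/128$ comes from tracking the Le Cam/Pinsker constants.

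\textbf{Main obstacle.} The hard step is reconciling the pilot design with the constraints of the common model class. The active-covariate pilot mass must produce the $B m/\J$ scaling inside the square root while keeping $s\leq\kappa$ and the overlap exactly at $\gamma_\J$. If this calibration fails with point masses on $x_r$, the fallback is to keep the pilot mass uniform ($1/m$ on each $x_r$) and put the $B/\J$ factor into the signal amplitude instead. This changes the balance, so the $\delta$ tuning would need to be rechecked against the target rate.
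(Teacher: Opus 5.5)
There is a genuine gap: your subfamily is not in $\Pc$, and the admissible fallback does not reach the stated rate.

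\textbf{Model-class membership.} Drawing the whole target array as a homogeneous profile $\mathbf x_r$ makes target covariates perfectly dependent across units. This violates the i.i.d.\ node fundamentals that Assumption~\ref{asmp:ub-fundamentals}.a imposes on \emph{both} populations, so your claim that all units are "isolated and IID" fails in the target.

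\textbf{Budget binding and the density ratio.} The deeper problem is that at $v_r$ the budget binds. Only a $B/\J$ fraction of the per-unit signal is therefore at stake in each coordinate. To recover $\sqrt{Bm/(\J\n\gamma_\J)}$ you need the pilot to contain a factor $B/\J$ fewer $x_r$ observations than the target, i.e.\ $s(x_r)\approx\J/B$. The theorem's bound is free of $\kappa$ and must hold at $\kappa=1$. There $s\leq\kappa$ together with $\int s\,d\p_L^{(1)}=1$ forces $\p_L^{(0)}=\p_L^{(1)}$, so this calibration is unavailable.

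\textbf{The fallback.} With matched masses, the loss from an error scales like $(B/\J)\delta M$ while the per-coordinate KL scales like $\n\gamma_\J\delta^2/m$. This yields only $M\frac{B}{\J}\bigl[1\wedge\sqrt{m/(\n\gamma_\J)}\bigr]$. Where the square-root term is active, that falls short of the claim by exactly a factor $\sqrt{B/\J}$. Rescaling the amplitude cannot help, since both sides are driven by the same $\delta\leq1$. Any construction supported on homogeneous states has this defect.

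\textbf{The missing idea.} Make active units rare \emph{within every state}, using the same i.i.d.\ law in both populations: $F_X(\{x_r\})=\alpha/m$ and $F_X(\{x_0\})=1-\alpha$, with $\alpha=B/(2\J)$. Then $s\equiv1$ and Assumption~\ref{asmp:ub-fundamentals} holds. The expected number of active target units is $B/2$. By Markov, the slack-budget event $\mathcal E_B=\{N_{\mathrm{act}}\leq B\}$ therefore has conditional probability at least $1/2$ given $X_j=x_r$.

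On $\mathcal E_B$, the rule $\pi_\nu$ of Assumption~\ref{asmp:welfare_shatter} treats exactly the positive-type active units, and each unit-level mistake costs the full $2M\delta/\J$. This is why that assumption specifies behavior on all $\mathbf x\in\Xc_m^{\J}$, not only at the $v_r$. The welfare at stake per coordinate is then of order $M\alpha\delta/m$, and the per-coordinate KL is of order $\n\alpha p_\n\delta^2/m$ with $p_\n\asymp\gamma_\J$. The common factor $\alpha$ gives $\delta\asymp1\wedge\sqrt{m/(\n\alpha\gamma_\J)}$ and regret of order $M\alpha\delta$, which is exactly the stated bound.

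Correspondingly, your decoder should threshold the averaged treatment probability $u_r$ over units and states in $\mathcal E_B$ with $X_j=x_r$. It should not use the treated count at a single homogeneous state.
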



The theorem concerns the model class defined above. The proof chooses $p_\n$ of order $\gamma_\J$.
It does not claim the same overlap-dependent lower bound for every fixed
experimental design. The empty-network submodel does not establish sharpness
of the upper bound's network terms, nor for the budget capacity.

\begin{remark}[Comparison of upper and lower bounds]
The lower bound additionally requires
Assumption~\ref{asmp:welfare_shatter}.
Suppose $B/\J\geq b_0>0$, $m\geq a_0N_{dim}$ for $a_0>0$, and the
conditions of Corollary~\ref{cor:sparse-bdd-degree} hold with
$\vartheta_\n\leq\n$. Then
\begin{align*}
 cM\left[1\wedge\sqrt{\frac{N_{dim}}{\n}}\right]
 &\leq\inf_{\widetilde\pi}\sup_{\p\in\Pc}\ex_\p
       \left[\sup_{\pi\in\Pi}W(\pi;\p)-W(\widetilde\pi;\p)\right]\\
 &\leq R(\widehat\pi)
 \leq\eta_\n+CM\sqrt{\frac{\vartheta_\n}{\n}}.
\end{align*}
The constants may depend on the common budget fraction, richness fraction,
and bounds in the corollary. If $K_{\mathrm{set},\n}$ is polynomial in $\n$, the
bounds match up to logarithmic factors. This statement does not apply to an
arbitrary $\Pi$ or a fixed nonempty network submodel.

\end{remark}

\section{Global versus Local Treatment Rules} \label{sec:compare}

The treatment allocation problem described in this paper extends the local problem typically studied in the literature \citep{kittet18, viviano2025policy}. 
The local problem selects a treatment rule that assigns treatments on a unit-by-unit basis, while still aiming to maximize the same interference-aware welfare defined in this paper (Eqn.~\ref{eq2:welfare}). A local rule (or individualistic rule) is a mapping $\pil \in \Lambda := \{ \Xc \times \Gc{0} \to \{0,1\}\}$ which assigns treatment based on the observed local information $(X_j^{(0)},G_j^{(0)})$.
In contrast, I refer to the treatment allocation problem in this paper (Eqn.~\ref{eq2:planner}) as the global problem, and the associated treatment rule $\pig: \mathcal V\to\Tc$ as the global rule.
In order to compare the performance of local and global rules in a unified framework, I first define the global allocation induced by a local rule, denoted by $\pigpil$. 
This is defined below as:
\begin{definition}[Global allocation induced by local rule]
    For any local rule $\pil$, denote by $\pigpil(V) = \bigl[\pil(\X{0}_j,\G{0}_j)\bigr]_{j=1}^{\J}$ the global rule induced by it. 
\end{definition}
By construction, global allocations induced by local rules are restricted to satisfy a \emph{symmetry property}: if two units have identical covariates and network features (local information), they receive the same treatment assignment.
Formally, for any fixed state $v$, for any $j\neq k$, if $(X_j^{(0)}(v),G_j^{(0)}(v))=(X_k^{(0)}(v),G_k^{(0)}(v))$, then $\pil(X_j^{(0)}(v),G_j^{(0)}(v))=\pil(X_k^{(0)}(v),G_k^{(0)}(v))$.
Consequently, no local rule can induce an allocation $\bt{0}\in\Tc$ that treats two units differently when their observed local information coincides. 
Thus, the local problem solves for the optimal rule subject to this symmetry constraint:
\begin{equation}\label{eq:local}
  {\pil}^*_\p = \underset{\pil\in\Lambda_B(\p)}{\arg\sup}\ \W(\pigpil;\p), \quad \text{where } \Lambda_B(\p) := \{\pil \in \Lambda \text{ such that } \pigpil(V)\in\Tc~-\p\text{ a.s. }\}
\end{equation}
where $\Lambda_B(\p)$ denotes the class of all feasible local rules. Notice here that care is taken to ensure that the global allocations induced by local rules satisfy the budget constraint.  Since the class of all global rules contains those induced from local rules, the optimal global rule attains weakly larger welfare than the optimal local rule. I define this difference in welfare from the two optimal rules as the welfare gap and denote it by:
\begin{equation}
    \Delta_\p ~=~ \W(\pig^*_\p;\p) - \W\bigl(\pig_{\pil^*_\p};\p\bigr) ~ \geq ~ 0.
\end{equation}

It is crucial to note that this inequality holds even when both global and local rules have access to the same covariate and network information. Contrary to the suggestion in the previous version of this paper \citep[Chapter~1]{ananth2021decision} and in \cite[Remark 4]{viviano2025policy}, the inequality does not require the full network to be observed for the global rule.
The inequality arises because the symmetry constraint may be binding in the local problem. 
When the marginal welfare from treating nodes net of the opportunity cost of satisfying the capacity constraint is positive under some exposures and negative under others, the optimal allocation may require coordinating treatment decisions across individuals, for example treating one individual while withholding treatment from another with identical local information in order to induce favorable exposure conditions overall. Global rules can implement such joint, asymmetric allocations, whereas local rules generally cannot.

Example \ref{ex:full_net} illustrates this with a simple, known DGP and fixed target-population state to show that when the planner uses identical information to construct both local and global rules, the symmetry property can result in a welfare gap, even without capacity constraints.

\begin{example}[Full network information]\label{ex:full_net}
    Suppose the planner can optimally allocate $\B=4$ treatments in a target population comprising $\J = 4$ units. Given the constraint, the feasible set of allocations is given by $\Tc = \left\{ 0,1 \right\}^4$.
    For ease of exposition, I assume that the known marginal distribution of covariates and the network is degenerate. Specifically, I assume that the support of covariates is $\supp{X_j} = \{1\}$ and the support over networks is concentrated on a network with adjacency matrix $A_{jk} = \mathbf{1}\{ j+k = 2m-1 \text{ where } m \in \mathbb{N}\}$. This adjacency matrix corresponds to a 4-node box as illustrated in Figure~\ref{fig:ex2}.
    \begin{figure}
        \centering
        \begin{tikzpicture}[scale=0.5]
            \node[circle, draw, scale=0.7] (1) at (0,2) {1};
            \node[circle, draw, scale=0.7] (2) at (2, 2) {2};
            \node[circle, draw, scale=0.7] (3) at (2,0) {3};
            \node[circle, draw, scale=0.7] (4) at (0,0) {4};
            \draw (1) -- (2);
            \draw (2) -- (3);
            \draw (3) -- (4);
            \draw (4) -- (1);
            \end{tikzpicture}
        \caption{Network in Example \ref{ex:full_net}}
        \label{fig:ex2}
    \end{figure}
    In this example, I retain the LSE spillovers assumption, allow for the mean treatment response under different exposures to fluctuate in sign:
    \begin{equation}
        \ex\left[Y_j(E_j(\mathbf t)) \bigg| L_j \right] = \left(-t_j\right)^{\left(1 + \sum_i A_{ji} t_i \right)}
    \end{equation}
    
    Suppose a planner only observes each units covariate $X_j$ and list of each units neighbors. This satisfies Assumption~\ref{asmp2::pot_out}. 
    Define welfare as in \ref{eq2:welfare} and denote by ${\pig}^*$ the welfare maximizing global treatment rule and by ${\pil}^*$ the welfare maximizing local treatment rule. It can be seen that the optimal global rules assign treatments to any one connected pair of nodes, i.e., $\{\{1,2\}, \{2,3\}, \{3,4\}, \{4,1\}\}$. Notice that the optimal allocation breaks the symmetry property. For example if nodes $\{1,2\}$ are the only treated, then node 1, with local information (1,{2,4}), is treated, whereas node 3, with the same local information, is not. The optimal local rule must satisfy symmetry. In this example, it can be seen that the optimal local rule treats none. 
    $\blacktriangledown$
\end{example}


\subsection{Characterizing the welfare gap}
\label{subsec:gap-spillover-extension}
I now present a result to put a lower bound on the welfare gap. This result is on the population oracle comparison of the welfare gap, not statistical regret or a comparison with an arbitrarily restricted learning class $\Pi$.
For each target state $v$, define conditional welfare as the conditional
expected average outcome given global state $V=v$, denoted by:
\begin{equation}\label{eq:gap-conditional-response}
 w_\p(v,\mathbf t)
  := \frac1\J\sum_{j=1}^{\J}
       \mu^{(0)}_{E_j^{(0)}(\mathbf t;v)}(L_j^{(0)}(v)),
 \qquad \mathbf t\in\Tc.
\end{equation}
The absence of complexity restrictions implies that any optimal allocation is also a statewise maximizer,
\(
 \sup_{\pi}W(\pi;\p)
 =\ex_\p\!\left[\max_{\mathbf t\in\Tc}w_\p(V,\mathbf t)\right]
\).
For $1\leq B\leq\J$ and $a\in\{0,1\}$:
\begin{equation}
 D_j(v) := w^*_{j,1}(v)-w^*_{j,0}(v), \quad  \text{where} \quad w^*_{j,a}(v) := \max_{\mathbf t\in\Tc:\,t_j=a}w_\p(v, \mathbf t).
                  \label{eq:gap-inclusion-advantage}
\end{equation}
The quantity $D_j(v)$ compares the best feasible allocations with and
without $j$, allowing all other assignments to be reoptimized. Every
optimal allocation includes $j$ when $D_j(v)>0$ and excludes it when
$D_j(v)<0$. At $D_j(v)=0$, both decisions admit an optimal completion.
Let $\mathscr I_j := \sigma(X_j^{(0)},G_j^{(0)})$ be the information available to unit $j$'s local assignment. The conditional mean $\ex_\p[D_j(V)\mid\mathscr I_j]$ determines the
optimal decision in a relaxation that restricts only $j$'s assignment to
local information. It does not generally determine a feasible optimal local rule.
The planner must make the same treatment assignment decision across target states with identical local information, even when \(D_j(V)\) changes sign across those target states. Define:
\begin{equation}
    \mathcal B_j(\p)
    :=\ex_\p[(D_j(V))_+]
       -\ex_\p\!\left[
          (\ex_\p[D_j(V)\mid\mathscr I_j])_+
        \right]
\end{equation}
The first term corresponds to the welfare advantage from treating unit $j$ in an optimal global rule. The second term corresponds to the same rule which only uses local information to assign treatment to unit $j$. 
The difference measures the value of the additional flexibility used in global rules when assigning unit \(j\)’s treatment. Theorem~\ref{lem:gap-local-information-bound} bounds the welfare gap through a binary value-of-information argument \citep{blackwell1953equivalent}. 

\begin{theorem}[Bound on the welfare gap]
\label{lem:gap-local-information-bound}
For any distribution $\p$, the bounds hold:
\begin{align}
    \Delta_\p~\geq~\max_{1\leq j\leq\J}\mathcal B_j(\p).
        \label{eq:gap-information-absolute}
\end{align}
The difference $\mathcal B_j(\p)$ is strictly positive if and only if, with positive probability, both
\[
    \p(D_j(V)>0\mid\mathscr I_j)~>~0
    \quad\text{and}\quad
    \p(D_j(V)<0\mid\mathscr I_j)~>~0.
\]
Finally, $\mathcal B_j$ is non-decreasing
under a mean-preserving spread of $D_j(V)$ conditional on $\mathscr I_j$.
\end{theorem}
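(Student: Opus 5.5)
The plan is to compare the oracle global and local welfares through a single unit $j$. The global rule may condition $t_j$ on the whole state $V$. Any feasible local rule must make $t_j$ an $\mathscr I_j$-measurable binary variable. First, for an arbitrary measurable rule $\pi:\mathcal V\to\Tc$, I will show $W(\pi;\p)=\ex_\p[w_\p(V,\pi(V))]$. Conditional on $V$, each exposure $E_j^\pi$ is $V$-measurable and takes finitely many values. So $\ex[\Y{0}_j(E_j^\pi)\mid V]=\sum_{c}\one\{E_j^\pi=c\}\ex[\Y{0}_j(c)\mid V]$. Assumption~\ref{asmp2::unconfoundedness}.a together with Definition~\ref{def:mean-response} turns each term into $\mu^{(0)}_c(L_j^{(0)})$, and the tower property finishes the step. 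Boundedness (Assumption~\ref{asmp2::pot_out}.b) guarantees integrability throughout.

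Next I write the oracle global welfare, using the statewise-maximizer identity stated before the theorem:
\[
\sup_{\pi}W(\pi;\p)=\ex_\p\Bigl[\max\{w^*_{j,0}(V),w^*_{j,1}(V)\}\Bigr]=\ex_\p[w^*_{j,0}(V)]+\ex_\p[(D_j(V))_+].
\]
For any $\lambda\in\Lambda_B(\p)$, set $a_j:=\lambda(X^{(0)}_j,G^{(0)}_j)\in\{0,1\}$, which is $\mathscr I_j$-measurable. Since $\pigpil(V)\in\Tc$ almost surely and its $j$th coordinate equals $a_j$, we get $w_\p(V,\pigpil(V))\le w^*_{j,a_j}(V)=w^*_{j,0}(V)+a_jD_j(V)$. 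Taking expectations and conditioning on $\mathscr I_j$ gives
\[
W(\pigpil;\p)\le\ex_\p[w^*_{j,0}(V)]+\ex_\p\bigl[a_j\,\ex_\p[D_j(V)\mid\mathscr I_j]\bigr]\le\ex_\p[w^*_{j,0}(V)]+\ex_\p\bigl[(\ex_\p[D_j(V)\mid\mathscr I_j])_+\bigr].
\]
Taking the supremum over $\lambda$ and subtracting yields $\Delta_\p\ge\mathcal B_j(\p)$ for each $j$, hence the bound with the maximum. I interpret $\Delta_\p$ with suprema if optimizers are not attained; the inequality is unaffected.

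For the characterization, I will use the pointwise identity, for $Z:=D_j(V)$ and $\mathscr I:=\mathscr I_j$,
\[
\ex[Z_+\mid\mathscr I]-(\ex[Z\mid\mathscr I])_+=\min\bigl\{\ex[Z_+\mid\mathscr I],\ \ex[Z_-\mid\mathscr I]\bigr\}\quad\text{a.s.}
\]
This follows by splitting on the sign of $\ex[Z\mid\mathscr I]=\ex[Z_+\mid\mathscr I]-\ex[Z_-\mid\mathscr I]$. The integrand is therefore nonnegative, so $\mathcal B_j(\p)>0$ iff it is positive on an event of positive probability. It is positive exactly where both $\ex[Z_+\mid\mathscr I]>0$ and $\ex[Z_-\mid\mathscr I]>0$. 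Via a regular conditional distribution of $Z$ given $\mathscr I$, this is equivalent to $\p(Z>0\mid\mathscr I)>0$ and $\p(Z<0\mid\mathscr I)>0$.

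For monotonicity, a conditional mean-preserving spread means $Z'\overset{d}{=}Z+\varepsilon$ given $\mathscr I$, with $\ex[\varepsilon\mid\mathscr I,Z]=0$. Then $\ex[Z'\mid\mathscr I]=\ex[Z\mid\mathscr I]$, so the second term of $\mathcal B_j$ is unchanged. Conditional Jensen applied to the convex map $(\cdot)_+$ gives $\ex[Z'_+\mid\mathscr I]\ge\ex[Z_+\mid\mathscr I]$. The main obstacle is the technical one: the statewise maximizer identity and the reoptimized values $w^*_{j,a}$ need measurable selections. They come from finiteness of $\Tc$ and measurability of $v\mapsto w_\p(v,\mathbf t)$, which I would verify explicitly.
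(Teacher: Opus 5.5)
Your proof is correct and follows essentially the paper's argument: the decomposition $\max_{\mathbf t}w_\p(V,\mathbf t)=w^*_{j,0}(V)+(D_j(V))_+$, the bound by $w^*_{j,0}(V)+a_jD_j(V)$ with $a_j$ $\mathscr I_j$-measurable followed by iterated expectations, and the identity $\mathcal B_j(\p)=\ex_\p\min\{\ex_\p[(D_j(V))_+\mid\mathscr I_j],\ex_\p[(-D_j(V))_+\mid\mathscr I_j]\}$. The only differences are that you bound each feasible local rule directly rather than computing the optimum of the paper's intermediate relaxation (in which only the $j$th coordinate must be $\mathscr I_j$-measurable), which spares you showing that optimum is attained, and that you supply the conditional-Jensen argument for the mean-preserving-spread claim, which the paper states but does not prove.
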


\begin{runexample}[Welfare gap under additive LSE]
Suppose the target population uses LSE and has mean responses
\begin{equation}\label{eq:gap-additive-lse-response}
    \mu^{(0)}_{(a,s,d)}(\ell)
    =b(\ell)+\tau(\ell)a+\beta(\ell)s,
    \qquad
    \ell=(x,C(d)),\quad (a,s,d)\in C(d),
\end{equation}
where $C(d)=\{0,1\}\times\{0,\ldots,d\}\times\{d\}$ and
$b,\tau,\beta$ are measurable, with responses satisfying the
maintained boundedness restriction. Here $\tau(\ell)$ is the
direct mean treatment effect and $\beta(\ell)$ is the mean
effect of one additional treated neighbor. Define the total
marginal social response to treating unit $j$ by
\begin{equation}\label{eq:gap-social-marginal-response}
    R_j^{\mathrm{soc}}(V)
    :=\tau(L_j^{(0)})
      +\sum_{i\neq j}A_{ij}^{(0)}\beta(L_i^{(0)}).
\end{equation}
For $B<\J$, let $\lambda_{j,B}(V)$ be the $B$th largest element
of $\bigl((R_i^{\mathrm{soc}}(V))_+\bigr)_{i\neq j}$ (including ties).
When $B=\J$, set $\lambda_{j,\J}(V):=0$.

Conditional welfare is an allocation-independent baseline plus
$\J^{-1}\sum_i t_iR_i^{\mathrm{soc}}(V)$. Including $j$ therefore
contributes its social response but displaces the $B$th largest
positive response among the other units, whenever capacity binds.
Thus, defining the budget-adjusted social response,
\begin{equation}
    \Gamma_{j,B}(V)
    :=R_j^{\mathrm{soc}}(V)-\lambda_{j,B}(V), \qquad
    D_j(V)=\frac{\Gamma_{j,B}(V)}{\J}.
      \label{eq:gap-budget-adjusted-response}
\end{equation}
Theorem~\ref{lem:gap-local-information-bound} consequently gives
\begin{equation}\label{eq:gap-social-dispersion-bound}
    \Delta_\p
    \geq\frac1{2\J}
    \max_{1\leq j\leq\J}
    \left\{
        \ex_\p|\Gamma_{j,B}(V)|
        -\ex_\p\!\left|
            \ex_\p[\Gamma_{j,B}(V)\mid\mathscr I_j]
        \right|
    \right\}.
\end{equation}
The bound is strictly positive whenever, for some unit, the sign
of this budget-adjusted social response remains uncertain
conditional on its local information: both signs must have
positive conditional probability on an event of positive probability.

If additionally $|\Gamma_{j,B}(V)|\geq r_j>0$ almost surely
for some unit $j$, then
\begin{equation}\label{eq:gap-social-sign-variance-bound}
    \Delta_\p
    \geq\frac{r_j}{4\J}\,
    \ex_\p\!\left[
        \operatorname{Var}_\p\!\left(
            \operatorname{sgn}\Gamma_{j,B}(V)
            \mid\mathscr I_j
        \right)
    \right].
\end{equation}
Indeed, the two conditional positive-part expectations of
$\Gamma_{j,B}(V)$ and $-\Gamma_{j,B}(V)$ are bounded below
by $r_j$ times their respective sign probabilities; applying
$\min\{p,1-p\}\geq p(1-p)$ yields the displayed inequality.
The sign relevant to the welfare gap concerns direct benefits and spillovers to
other units net of the opportunity cost of capacity, rather
than simply an uncentered mean outcome.$\blacktriangledown$
\end{runexample}

\section{Simulation: Welfare Gap}
\label{sec:simulation_welfare_gap}

The purpose of this simulation exercise is to quantify the welfare loss associated with restricting the policymaker to a local treatment rule when treatment effects exhibit network spillovers. In particular, I compare the population welfare attained by an optimal local rule with that attained by an optimal global rule. Unlike the empirical welfare maximization exercise, the comparison here abstracts from statistical estimation error in policy learning. I first construct an empirically calibrated data-generating process (DGP), treat this DGP as known, and directly optimize population welfare over the local and global policy classes.

Each target population consists of \(J=5\) units. For a capacity level \(B\in\{1,\ldots,5\}\), both policy classes must satisfy the resource constraint
for every feasible realization of the target population. The main object of interest is the relative welfare gap 
\begin{equation}\label{eq:welfare_gap_sim}
 \Delta_\p^{\mathrm{rel}}(B)
       :=\frac{\Delta_\p(B)}{W(\pig^*_{\p,B};\p)},
  \qquad     
\Delta_\p(B):=W(\pi^*_{\p,B};\p)
                -W(\pi_{\lambda^*_{\p,B}};\p),
\end{equation}
where the ratio is reported only when the denominator is positive.
where \(\pi_{\lambda^*_{\p,B}}\) and \(\pig^*_{\p,B}\) denote the welfare-maximizing local and global rules, respectively, subject to the capacity constraint $B$. I compute \(\Delta(B)\) for \(B=1,\ldots,5\).

I construct the distribution of target covariates and networks by resampling from the empirical villages in the pilot data. Let \(\mathcal V_5\) denote the set of observed villages containing at least five units. A target village is generated in two steps. First, I draw a source village uniformly. Conditional on draw \(V=v\), I draw a subset \(S\) of five units uniformly without replacement from the units belonging to village \(v\):
\begin{equation}
    S\mid V=v
    \sim
    \mathrm{Unif}\left(
    \left\{
    s\subseteq v:
    |s|=5
    \right\}
    \right).
\end{equation}
The target covariate matrix and adjacency matrix are then given by
\(X_S, A_{S,S}\) where the latter is the subnetwork induced by the sampled units. This procedure defines an empirical target-state distribution \(P_0(X,A)\) that preserves the joint dependence between observed covariates and network links within villages. Formally, the target-state distribution can be written as
\begin{equation}
    P_0(X,A)
    =
    \sum_{v\in\mathcal V_5}
    \frac{1}{|\mathcal V_5|}
    \frac{1}{\binom{N_v}{5}}
    \sum_{\substack{s\subseteq v\\|s|=5}}
    \mathbf 1
    \left\{
    (X,A)
    =
    (X_s,A_{s,s})
    \right\},
    \label{eq:target_empirical_dgp}
\end{equation}
where \(N_v\) denotes the number of observed units in source village \(v\).

I next construct an empirically calibrated conditional mean outcome function using the pilot data under the LSE exposure specification.
Write \(t\) for own treatment, \(r\) for the number of treated neighbors, and \(\ell\) for degree. I estimate the conditional mean response
\begin{equation}
    \mu(t,r,\ell,x)
    =
    \mathbb E
    \left[
    Y_j
    \mid
    T_j=t,\,
    \sum_k A_{jk}T_k=r,\,
    \sum_k A_{jk}=\ell,\,
    X_j=x
    \right]
\end{equation}
using a logistic regression on the pilot sample. Specifically, the fitted index contains own treatment, the number of treated neighbors, its square, degree, the covariates \(X_j\), and interactions of \(X_j\) with both own treatment and the number of treated neighbors.

Let \(\widehat\mu\) denote the fitted response surface which I subsequently treat
as the true conditional mean response. Thus, the simulation is empirically calibrated, but once the DGP has been constructed, welfare is evaluated without outcome-estimation error.

This exercise isolates the welfare consequences of the policy-class restriction itself. In particular, it removes differences arising from empirical welfare estimation, propensity-score estimation, density-ratio estimation, or finite-sample policy learning. The resulting gap measures the extent to which allowing treatment assignments to depend jointly on the covariates of all units in the target population can improve welfare relative to a pointwise local treatment rule, holding fixed the same resource constraint and the same simulated economic environment. The results are presented below and demonstrate that even in this empirically motivated DGP with limited sign heterogeneity in marginal social response, depending on the value of the constraint, the welfare gap as a fraction of the optimal welfare ranges between 0.3\% (when the constraint is slack) and 18.3\%.

\begin{figure}
    \centering
    \includegraphics[scale=0.2]{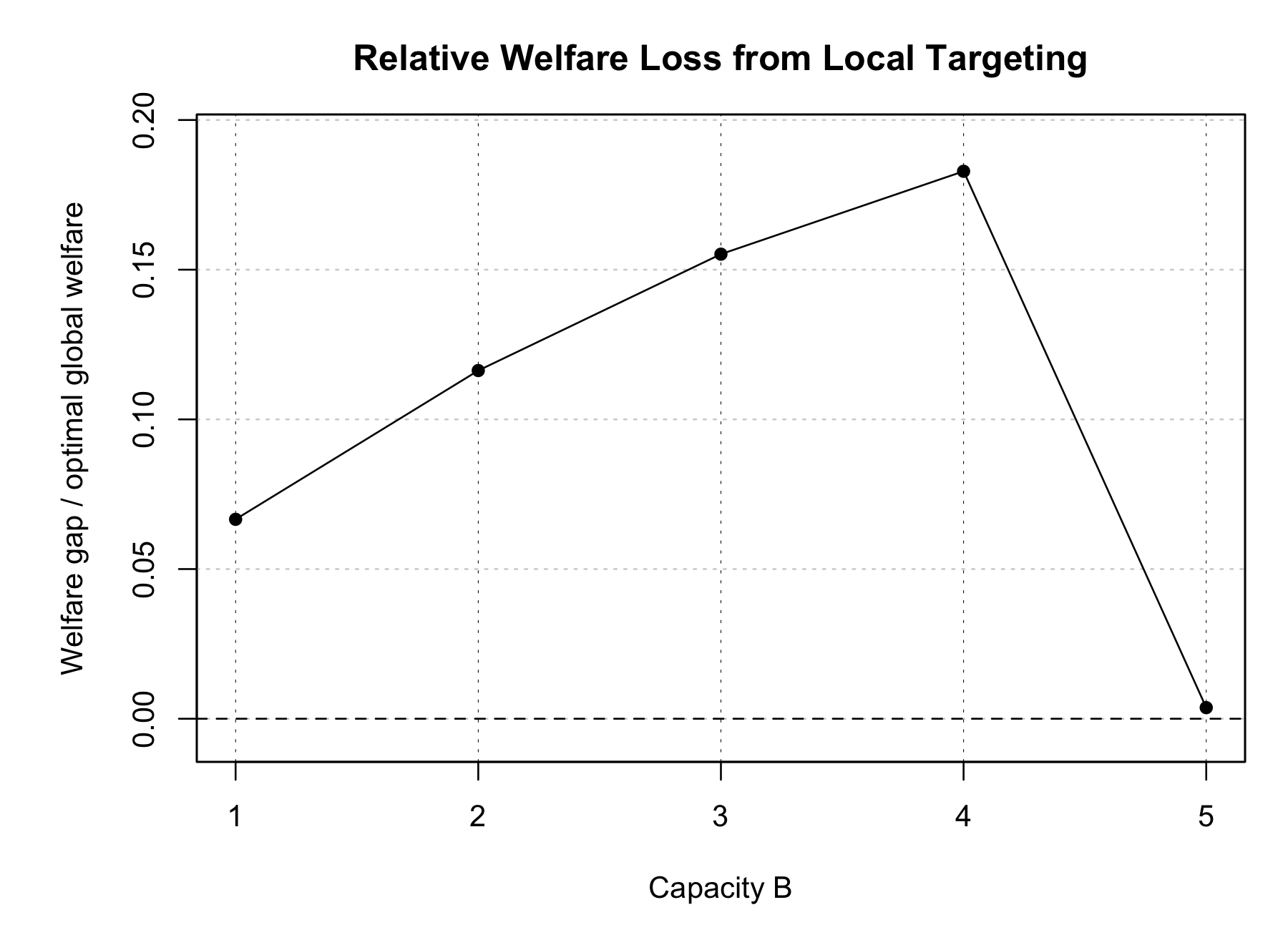}
    \caption{Simulation results}
    \label{fig:placeholder}
\end{figure}

\section{Conclusion}
\label{sec:conclusion}

This paper studies treatment-policy design under network spillovers when a policy is
learned in a pilot population and subsequently deployed in a target population. I
provide conditions under which target-population welfare is identified from pilot data
and develop an empirical welfare maximization procedure that accommodates general
exposure mappings, heterogeneous network structures, and differences between populations.

The regret bounds separate policy-class complexity from restrictions needed to
control dependence in the pilot population. The upper bound requires independent
node-level fundamentals and independent Bernoulli randomization, together with
locality of the complete welfare score. Its network terms account for dyadic
randomness and for changes transmitted through the incident links of a resampled
node. A sublinear exposure diameter alone does not guarantee consistency, and
the upper-bound assumptions are not established here for general epidemic
models with shared transmission shocks. Under sparse local-score conditions and
overlap bounded away from zero, the usual square-root pilot-size rate is recovered
up to logarithmic factors. The lower-bound proof verifies membership in the same common model class.
The rate comparison additionally requires compatible welfare richness and
a nonvanishing treatment fraction. The lower bound does
not establish sharpness of each network-dependence term.


\begin{appendix}

\section{Main Proofs}

\subsection{Proof of Theorem~\ref{lem::iden_welfare}}

\begin{proof}
Fix $\pi\in\Pi$ and pilot unit $i$. By
Assumptions~\ref{asmp2::eff_treat}, \ref{asmp2::pot_out},
\ref{asmp2::consistent}, and~\ref{asmp2::overlap}, together with
conditional unconfoundedness in
Eqn.~\eqref{eqn2::strict_unconfoundedness}, which follows from
Assumptions~\ref{asmp:ub-fundamentals},
\ref{asmp:experimental-design}, and~\ref{asmp2::unconfoundedness}:
\[
    \ex\left[
        \frac{\Y{1}_i\one\{E_i^{(1)}=c\}}
             {e_{i,c}(L_i^{(1)})}
        \,\Bigg|\,L_i^{(1)}=\ell
    \right]
    =\ex[\Y{1}_i(c)\mid L_i^{(1)}=\ell]
    =\mu_c^{(1)}(\ell)
\]
for almost every policy-relevant pair $(\ell,c)$. The last equality
uses Definition~\ref{def:mean-response} and
Assumption~\ref{asmp2::identical}.b. Irrelevant contributions are
defined as zero before division.

Eqn.~\eqref{eq:ub-ipw-weight} and iterated expectations therefore
give
\begin{align*}
    \ex[\Upsilon_i(Z_i;\pi)]
    &=\sum_{c\in\cC^{(0)}}
      \int
      \mu_c^{(1)}(\ell)s(\ell)\overline q(\ell;c,\pi)
      P_L^{(1)}(d\ell) =\sum_{c\in\cC^{(0)}}
      \int
      \mu_c^{(0)}(\ell)\overline q(\ell;c,\pi)
      P_L^{(0)}(d\ell)\\
    =\frac1\J\sum_{j=1}^{\J}
      \ex&\left[
          \sum_{c\in\cC^{(0)}}
          \mu_c^{(0)}(L_j^{(0)})
          \p(E_j^\pi=c\mid L_j^{(0)})
      \right] =\frac1\J\sum_{j=1}^{\J}
      \ex\left[\mu_{E_j^\pi}^{(0)}(L_j^{(0)})\right]
      =\W(\pi;\p).
\end{align*}
The second equality uses stability and site support,
Assumptions~\ref{asmp2::stability} and~\ref{asmp2::support},
through the change-of-measure identity
\citep[Ch.~6, \S~32; Ch.~3, \S~16]{billingsley1995probability}.
The third uses the common local-state marginal distributions in
Assumption~\ref{asmp2::identical}.a and the averaged kernel in
Definition~\ref{defn2::exp_freq_pi}. The final line follows from
iterated expectations, target conditional-mean irrelevance in
Assumption~\ref{asmp2::unconfoundedness}.a, and the welfare definition
in Eqn.~\eqref{eq2:welfare}.

Since this identity holds for every pilot unit, linearity of
expectation yields
\[
    \ex\left[
        \frac1\n\sum_{i=1}^{\n}\Upsilon_i(Z_i;\pi)
    \right]
    =\frac1\n\sum_{i=1}^{\n}
        \ex[\Upsilon_i(Z_i;\pi)]
    =\W(\pi;\p).
\]
Thus empirical welfare is unbiased at any arbitrary fixed rule $\pi$.
\end{proof}

\subsection{Proof of Theorem~\ref{thm:upper-new}}

\begin{proof}
\noindent\emph{Step 1: Policy learning bound.}
The goal is to bound $\sup_{\pi \in \Pi}W(\pi;\p)-W(\widehat\pi;\p)$. Fix $\p\in\Pc$, empirical maximization gives the pathwise comparison:
\begin{equation}\label{eq:upper-step1}
 R(\widehat\pi)
 ~\leq~ 2\sup_{\p\in\Pc}\ex\left[ \sup_{\pi\in\Pi}
                |\widehat W_\n(\pi)-W(\pi;\p)| \right].
\end{equation}
This is the usual empirical-welfare comparison
\citep[see Eqns. 2.2 and 2.3][]{kittet18}; it does not require attainment of the
population supremum.

\noindent\emph{Step 2: Decomposing the sampling error.}
For any fixed DGP $\p$, I decompose the expectation of the supremum of a centered empirical process in the RHS of Eqn. \ref{eq:upper-step1}. Note that the empirical welfare is  determined by fundamentals $(F^{(1)})$ and the realized network $(\bA{1})$.
For $A\in\sA_\n$, evaluate $L_i^{(1)}(F^{(1)},A)$ and
$E_i(F^{(1)},A)$ using the covariates and treatments contained in
$F^{(1)}$ and the maintained feature and exposure transformations. Set
\begin{align}
Z_i(F^{(1)},A)& ~:=~ 
 \bigl(\Y{1}_i(E_i(F^{(1)},A)),L_i^{(1)}(F^{(1)},A),
                                  E_i(F^{(1)},A)\bigr),\nonumber \\
 S_\pi(F^{(1)},A)
    & ~:=~ \tfrac1\n\sum_{i=1}^{\n}\sum_{c\in\cC^{(0)}} \Y{1}_i(c)\,\tfrac{\one\{E_i(F^{(1)},A)=c\}}{e_{i,c}(L_i(F^{(1)},A))}
                     s(L_i(F^{(1)},A))\overline q(L_i(F^{(1)},A),c;\pi), \nonumber\\
 m_\pi(F^{(1)})
   & ~:=~ \ex[S_\pi(F^{(1)},\bA{1})\mid F^{(1)}],\nonumber\\
 S_\pi^0(F^{(1)})& ~:=~ S_\pi(F^{(1)},\mathbf0),\qquad
 r_\pi(F^{(1)}) ~:=~ m_\pi(F^{(1)})-S_\pi^0(F^{(1)}).
 \label{eq:aux-processes}
\end{align}
Here, $m_\pi(F^{(1)})$ is the conditional expectation of the empirical welfare across realizations of the network given fundamentals and $S_\pi^0(F^{(1)})$ is the empirical welfare on an empty network. Finally, $r_\pi(F^{(1)})$ represents the average remainder from the empirical welfare in a randomly realized network compared to the empirical welfare in an empty network given fundamentals. 
Theorem~\ref{lem::iden_welfare} gives $\ex S_\pi(F^{(1)},\bA{1})=W(\pi;\p)$. Further, bounds in Assumptions~\ref{asmp2::pot_out}.b and \ref{asmp2::overlap} imply that for $B_\J=M\kappa/\gamma_\J$: 
\begin{equation*}\label{eq:aux-local-coefficients}
 \Y{1}_i(E_i(F^{(1)},A))\,\tfrac{\one\{E_i(F^{(1)},A)=c\}}{e_{i,c}(L_i(F^{(1)},A))}s(L_i(F^{(1)},A)) ~\leq~ B_\J \quad\text{for all }i, \text{ almost surely}.
\end{equation*}

Iterated expectations give the desired decomposition:
\begin{equation*}\label{eq:ub-exact-decomposition}
\begin{split}
 S_\pi(F^{(1)},\bA{1})~-~&\ex S_\pi(F^{(1)},\bA{1})
 ~=~\underbrace{S_\pi^0(F^{(1)})-\ex S_\pi^0(F^{(1)})}
                           _{\text{independent benchmark}}\\
 &+~\underbrace{S_\pi(F^{(1)},\bA{1})-m_\pi(F^{(1)})}
                           _{\text{dyadic link variation}}\quad+\quad \underbrace{r_\pi(F^{(1)})-\ex r_\pi(F^{(1)})}
                           _{\text{network-mediated node variation}}.
\end{split}
\end{equation*}
The first term corresponds to an empty-network approximation that serves only as an analytical device.
The second term captures the randomness in the realized network relative to the average network. 
The third term captures fluctuations in the graph-averaged correction to the empty-network benchmark through deviations in the network from the empty network. 

\noindent\emph{Step 3: Bound each term.}
The independent benchmark term is the deviation between the empirical welfare in an empty network and its expectation.
In an empty network, spillovers are immaterial and so traditional tools from empirical process theory apply. Lemma~\ref{lem:independent-benchmark} follows symmetrization of the independent vectors $F_i$, not the dependent observations $Z_i$. The Rademacher process
has sub-Gaussian increments in its empirical $L_2$ metric. The entropy integral,
including a zero function to bound an absolute supremum, as proposed in 
\citep{bartlett2002rademacher,vdvwellner1996}. This establishes the square-root overlap dependence using an $L_2$ envelope for
the actual independent benchmark, rather than a pointwise variance assertion
for the dependent process.
\begin{equation*}\label{eq:iid-term-new}
 \ex\sup_{\pi\in\Pi}
 |S_\pi^0(F^{(1)})-\ex S_\pi^0(F^{(1)})|
 \leq CM\kappa\sqrt{\frac{a_\Pi}{\n\gamma_\J}}.
\end{equation*}

Lemma~\ref{lem:dyadic-variation} bounds the concentration of the dyadic link variation term. Conditional on the fundamentals, an edge change can affect at most $d_\n^{\mathrm{sc}}$ score summands,
and each retained summand changes by at most $2B_n$. This term admits a Doob
martingale difference representation across dyads within the network. 
Conditional independence of the dyads therefore gives the predictable
quadratic-variation bound.
Freedman's inequality \citep{freedman1975}, then yields the valid Bernstein-type conditional tail bound for the dyadic variation term. Standard metric entropy bounds then yield the desired result.
\begin{equation*}
 \ex_\p\sup_{\pi\in\Pi}
 |S_\pi(F^{(1)},\bA{1})-m_\pi(F^{(1)})|
 \leq CB_\J d_\n^{\mathrm{sc}}
       \left[\sqrt{\rho_\n\vartheta_\n}
                          +\frac{\vartheta_\n}{\n}\right].
\end{equation*}
Lemma~\ref{lem:node-variation}
bounds the network-mediated node-variation term, which accounts for changes in treatment and covariates transmitted through existing incident links, not merely changes in their probabilities. Since the \(r_\pi(F^{(1)})\) term is generally not an average of functions of individual primitives, an ordinary Rademacher sum argument fails.
The proof instead establishes a bounded-differences property of the entire function \(r_\pi\), then uses a primitive-pair swapping argument.
\begin{equation*}
 \ex_\p\sup_{\pi\in\Pi}
 |r_\pi(F^{(1)})-\ex_\p r_\pi(F^{(1)})|
 ~\leq~ CB_\J d_\n^{\mathrm{sc}}
       \left[\rho_\n\sqrt{\n\vartheta_\n}
                          +\frac{\vartheta_\n}{\n}\right].
\end{equation*}
\noindent\emph{Step 4: Triangle inequality.}
Putting the bounds together, by the triangle inequality,
\begin{equation}\label{eq:regular-uniform-concentration}
\begin{split}
 &\ex_\p\sup_{\pi\in\Pi}|\widehat W_\n(\pi)-W(\pi;\p)|\\
 &\quad\leq CM\kappa\Bigg[
 \sqrt{\frac{a_\Pi}{\n\gamma_\J}}
 +\frac{d_\n^{\mathrm{sc}}}{\gamma_\J}
 \left(\sqrt{\rho_\n\vartheta_\n}
       +\rho_\n\sqrt{{\vartheta_\n}{\n}}
       +\frac{\vartheta_\n}{\n}\right)\Bigg].
\end{split}
\end{equation}
Substitute Eqn.~\eqref{eq:regular-uniform-concentration} into
Eqn.~\eqref{eq:upper-step1}. If $\rho_\n=0$ or
$d_\n^{\mathrm{sc}}=0$, the two network components vanish by the
corresponding lemmas. Finally, target welfare lies in $[-M,M]$, so
regret is at most $2M$. This proves Eqn.~\eqref{eq:upper-new}.
\end{proof}

\begin{proof}[Proof of Corollary~\ref{cor:sparse-bdd-degree}.]
Substitute $\rho_\n\leq C_\rho/\n$,
$d_\n^{\mathrm{sc}}\leq C_d$, $\gamma_\J\geq\gamma_0>0$,
and $\kappa\leq\kappa_0$ into Eqn.~\eqref{eq:upper-new}; use
$a_\Pi\leq\vartheta_\n$. If $\vartheta_\n/\n\leq1$, its linear
term is no larger than its square root. The catalogue condition
$K_{\mathrm{set},\n}\leq\n^r$ gives the logarithmic simplification.
\end{proof}

\subsection{Proof of Theorem~\ref{thm:lower-new}}

\begin{proof}
The lower bound construction is related to the
proof in \citet[Theorem~2.2]{kittet18} except that the testing reduction is an
Assouad type bound \citep{yu1997assouad}. Let $\tilde \pi$ denote any statistical rule, possibly randomized and not necessarily in the class $\Pi$.

\noindent\emph{Step 1: Bound by a model subclass.}
For any finite family of DGPs $\Pc_m \subset \Pc$ that are contained within model class $\Pc$, the following inequality holds:
\[ \sup_{\p\in\Pc}~\ex_\p
      \left[
          \sup_{\pi\in\Pi}\W(\pi;\p)
          -\W(\widetilde\pi;\p)
      \right] ~\geq~ \max_{\p\in\Pc_{m}}~\ex_\p
      \left[
          \sup_{\pi\in\Pi}\W(\pi;\p)
          -\W(\widetilde\pi;\p)
      \right]  \]

I now construct one such finite collection of DGPs $\Pc_m$. Let integer $m$, realizations $x_0,x_1,\ldots,x_m$ and set $N_m:=\{-1,1\}^m$ be as defined in Assumption~\ref{asmp:welfare_shatter}, where $x_0$ is considered inactive while $x_1,\ldots,x_m$ are considered active realizations.
I now characterize a finite family of DGPs $\{ \p_\nu: \nu \in N_m\}$.
Set parameters $0<\delta\leq1/4$, \(
    \alpha:=\frac{B}{2\J},\) and \(
    p_\n:=\min\{2\gamma_\J,1/2\}
\).
Since $1\leq B\leq\J$ and $0<\gamma_\J<1/2$, the following hold:
\begin{equation}\label{eq:lb-assignment-probability}
     0<\alpha\leq1/2, \qquad \gamma_\J<p_\n\leq2\gamma_\J,
    \qquad
    1-p_\n>\gamma_\J.
\end{equation}
For each $\nu \in N_m$, independently draw covariates for units in each population from
\begin{equation}\label{eq:lb-covariate-law}
    F_X(\{x_r\})=\frac{\alpha}{m},\quad r=1,\ldots,m,
    \qquad
    F_X(\{x_0\})=1-\alpha.
\end{equation}
Thus the expected number of active target units is
$\J\alpha=B/2$. Active types occur often enough to affect welfare,
while the budget can accommodate all of them on an event of
non-negligible probability. In the pilot, draw $T_i^{(1)}\sim\mathrm{Bernoulli}(p_\n)$
independently across units and independently of both populations'
other primitives and dyadic uniforms. The upper bound $p_\n\leq2\gamma_\J$
will make informative treated observations rare, whereas the two
strict lower bounds in
Eqn.~\eqref{eq:lb-assignment-probability} ensure overlap.

In both populations, draw $U_i^{(e)}$ and
$\varepsilon_i^{(e)}$ independently (of covariates across units and
populations) from $\mathrm{Unif}[0,1]$.  Setting the common graphon to
$g\equiv0$, Eqn.~\eqref{eq:pilot-graphon} gives
\[
    \bA{e}=\mathbf0_{\nee\times\nee},
    \qquad
    V^{(e)}=v_\varnothing^{(e)}(\bX{e})
    \quad\p_\nu\text{-almost surely}.
\]
This is an admissible submodel of the graphon experiment, not a
restriction of the entire model class to empty networks. Let $P_V$ be the target-state law (which is common across $\nu \in N_m$), and write $\ex_V$ and $\p_V$ for expectation and probability under it.
For each $\nu \in N_m$, mean treatment response functions given $\X{e}_i=x_r$, $r\in\{1,\ldots,m\}$ are defined as follows:
\begin{equation}\label{eq:lb-outcome-law}
    \Y{e}_i(c_{\varnothing,1})
    :=M\left(
        2\one\{\varepsilon_i^{(e)}\leq1/2+\delta\nu_r\}-1
       \right);\quad \Y{e}_i(c) :=0,
    \forall c\in\cC^{(e)}\setminus\{c_{\varnothing,1}\}.
\end{equation}
For $\X{e}_i=x_0$, set every potential outcome equal to zero.
For an active type, the treated outcome mean is therefore
$2M\delta\nu_r$, while the untreated mean is zero. It is easy to verify that this set of DGPs $\{\p_\nu: \nu \in N_m\} \in \Pc$ since they satisfy the assumptions in Definition~\ref{def:uniform-regret}.

For any arbitrary statistical treatment rule $\widetilde\pi$, the followin identity holds:
\begin{equation}\label{eq:lb-finite-family-reduction}
    \sup_{\p\in\Pc}\ex_\p
       \left[
           \sup_{\pi\in\Pi}\W(\pi;\p)
           -\W(\widetilde\pi;\p)
       \right] ~\geq~ 
       \max_{\nu\in N_m}
       \ex_{\p_\nu}[\mathcal R_\nu(\widetilde\pi)]
       ~\geq~
       \frac1{2^m}\sum_{\nu\in N_m}
       \ex_{\p_\nu}[\mathcal R_\nu(\widetilde\pi)].
\end{equation}
where
\(
    \mathcal R_\nu(\widetilde\pi)
    :=\sup_{\pi\in\Pi}\W(\pi;\p_\nu)
       -\W(\widetilde \pi;\p_\nu)
\). 
The last inequality bounds worst-case (within $\Pc_m$) regret with average regret given a uniform prior over $\Pc_m$.

\noindent\emph{Step 2: Decomposition of regret.}
Denote by $w_\nu(v,\mathbf t)$ the conditional welfare in an allocation $\mathbf t\in\Tc$ for a given realization of observed target state $v=v_\varnothing^{(0)}(\mathbf x)$.
\begin{equation}\label{eq:lb-conditional-welfare}
    w_\nu(v,\mathbf t)
    :=\ex_{\p_\nu}\left[
        \frac1\J\sum_{j=1}^{\J}
        \Y{0}_j(E_j^{(0)}(\mathbf t;v))
        \,\Bigg|\,V=v
      \right] 
    =\frac{2M\delta}{\J}
      \sum_{r=1}^{m}\nu_r
      \sum_{j=1}^{\J}\one\{\X{0}_j(v)=x_r\}t_j.
\end{equation}
The second equality follows from Eqn.~\eqref{eq2:welfare} and the constructed means in Eqn.~\ref{eq:lb-outcome-law}. 
An optimal rule will never assign treatments to any unit $j$ such that $\X{0}_j(v)=x_r$ and $\nu_r = -1$. Further, assigning treatments to inactive units with $\X{0}_j(v)=x_0$ is welfare neutral. Therefore, I only need to consider optimal rules which assign treatments to units such that $\X{0}_j(v)=x_r$ and $\nu_r = 1$ (at most $\B$ of them). The rule $\pi_\nu$ in Assumption~\ref{asmp:welfare_shatter} therefore maximizes
conditional welfare over all feasible allocations at every state
in the support. In particular, $\sup_{\pi\in\Pi}\W(\pi;\p_\nu) = \W(\pi_\nu;\p_\nu)$.
Its conditional welfare also dominates that of every feasible policy
outside $\Pi$. 
By law of iterated expectations:
\begin{equation*}
    \mathcal R_\nu(\widetilde\pi) ~=~ \ex_{\p_\nu}\left[ w_\nu(V,\pi_\nu(V)) - w_\nu(V,\widetilde\pi(V)) \right] =: \ex_{\p_\nu}\left[ r_\nu(\pi_\nu,\widetilde\pi; V) \right] 
\end{equation*}
In order to further analyze this quantity in the presence of a capacity constraint, it will be useful to partition the space of target states $\mathcal V_0$. Define:
\[
 N_{\mathrm{act}}(v):=\sum_{j=1}^{\J}\one\{X_j^{(0)}(v)\in\{x_1,\ldots,x_m\}\},
 \qquad \mathcal E_B:=\{v:N_{\mathrm{act}}(v)\leq B\},
\]
where $\mathcal E_B$ as the set of target states in which every policy which does not assign treatments to inactive units is feasible. Thus, for states in $\mathcal E_B$, 
\begin{equation}\label{eqnA9}
    \left[ \pi_\nu(v) \right]_j = \one\left\{ \sum_{r=1}^m \nu_r \one\{\X{0}_j(v)=x_r\} > 0\right\}.
\end{equation}
From Eqns.~\eqref{eq:lb-conditional-welfare} and \eqref{eqnA9}, the regret within global state $v \in \mathcal E_B$ can be written as:
\begin{multline*}
    r_\nu(\pi_\nu,\widetilde\pi; v) =
    \tfrac{2M\delta}{\J}
    \sum_{r=1}^{m}\sum_{j=1}^{\J}
    \one\{\X{0}_j(v)=x_r\}
    \left[
        \one\{\nu_r=1\}(1-\widetilde\pi_j(v))
        +\one\{\nu_r=-1\}\widetilde\pi_j(v)
    \right].
\end{multline*}
The bracket records either failure to treat a positive-effect unit or treatment of a negative-effect unit. Each such mistake costs $2M\delta$ in that units conditional mean outcome. Conditional on $\X{0}_j=x_r$, the number of other active units
is $\mathrm{Binomial}(\J-1,\alpha)$. Thus, by Markov's inequality $b_B:=\p_V(\mathcal E_B\mid\X{0}_j=x_r) \geq\frac12$.
The probability of the relevant conditioning event is
\(
    \p_V(\X{0}_j=x_r,\mathcal E_B)=\tfrac{\alpha b_B}{m}.
\)
Discarding nonnegative regret outside $\mathcal E_B$:
\begin{equation}\label{eqnA10}
    \mathcal R_\nu(\widetilde\pi)
    \geq
    \tfrac{2M\delta\alpha b_B}{m}
    \sum_{r=1}^{m}
    \left[
        \one\{\nu_r=1\}(1-u_r(\widetilde\pi))
        +\one\{\nu_r=-1\}u_r(\widetilde\pi)
    \right]
\end{equation}
where $u_r(\widetilde\pi) :=\frac1\J\sum_{j=1}^{\J}
       \ex_V\left[
           \widetilde\pi_j(V)\mid\X{0}_j=x_r,\mathcal E_B
       \right]$. The quantity $u_r(\widetilde\pi)\in[0,1]$ averages the policy's treatment
probability for type $r$ over unit labels and target states in $\mathcal E_B$.

\noindent\emph{Step 3: Minimax regret bounded by error in testing problem.}
Fix a measurable feasible policy $\widetilde\pi$, not necessarily in $\Pi$.
For each active type, define a decoder:
\begin{equation}\label{eq:lb-policy-decoder}
    \mathscr D_r(\widetilde\pi)
    :=
    \begin{cases}
        1,&u_r(\widetilde\pi)\geq1/2,\\
        -1,&u_r(\widetilde\pi)<1/2.
    \end{cases}
\end{equation}
The decoder depends only on $\widetilde\pi$ and the common law $P_V$, not
on $\nu$. Its conditioning events have positive probability,
and the target support is finite. Therefore the decoder of a jointly
measurable statistical rule is measurable in the observed sample
and any independent auxiliary randomization. 
Applying Eqn.~\eqref{eq:lb-policy-decoder}, a decoding error bound is attained:
\begin{equation}
    \mathcal R_\nu(\widetilde\pi)
    \geq
    \tfrac{M\delta\alpha}{2m}
    \sum_{r=1}^{m}
    \one\{\mathscr D_r(\widetilde\pi)\neq\nu_r\}.
    \label{eq:lb-welfare-decoding}
\end{equation}
Thus a low-regret policy necessarily yields few sign errors.

\noindent\emph{Step 4: Testing error bound.}
Let $\mathbb Q_\nu$ be the law of the complete observed pilot
data $\mathbf Z$ and the statistical rule's independent auxiliary
randomization $\zeta$. The known design, transport weights, and
policy-indexed target kernels are common across alternatives and
add no information about $\nu$.
Under the construction, the triples
$(\X{1}_i,T_i^{(1)},\Y{1}_i)$ are i.i.d. The conditional KL identity and additivity
across the triples presented in Lemma~\ref{lem:lb-divergence} gives
\begin{equation}
    KL(\mathbb Q_\nu,\mathbb Q_{\nu^{(r)}})
     \leq\tfrac{32\n\alpha\gamma_\J\delta^2}{m}.
    \label{eq:lb-adjacent-KL}
\end{equation}
The inequality uses the constructed upper bound $p_\n\leq2\gamma_\J$ on treatment probability.

\noindent\emph{Step 5: Choose the signal size for the test.}
Set $\delta =\tfrac18
    \left[
        1\wedge\sqrt{\tfrac{m}{\n\alpha\gamma_\J}}
    \right]$.
Then $\delta\leq1/8<1/4$, and
Eqn.~\eqref{eq:lb-adjacent-KL} implies
\begin{equation*}
    KL(\mathbb Q_\nu,\mathbb Q_{\nu^{(r)}})
    \leq
      \tfrac{32\n\alpha\gamma_\J}{m}\tfrac1{64}
      \min\left\{1,\tfrac{m}{\n\alpha\gamma_\J}\right\} =\tfrac12
      \min\left\{\tfrac{\n\alpha\gamma_\J}{m},1\right\}
      \leq\tfrac12.
\end{equation*}
Larger $\delta$ increases the welfare cost of a mistake but makes
the alternatives easier to distinguish. The chosen value preserves
a nontrivial welfare separation while keeping the information about
each sign bounded. The expected number of informative pilot
observations for one sign is $\n\alpha p_\n/m$. 
Lemma~\ref{lem:assouad-testing} gives the Assouad testing bound
\begin{equation}\label{eq:lb-testing-errors}
    \tfrac1{2^m}\sum_{\nu\in N_m}
    \sum_{r=1}^{m}
    \mathbb Q_\nu\{\widetilde\nu_r\neq\nu_r\}
    \geq\tfrac m4.
\end{equation}

\medskip
\noindent\emph{Step 6: Convert testing errors into minimax regret.}
Fix any measurable feasible statistical treatment rule
$\widetilde\pi$, and define
$\widetilde\nu_r:=\mathscr D_r(\widetilde\pi)$.
For each realization of the pilot data and auxiliary randomization,
Eqn.~\eqref{eq:lb-welfare-decoding} applies to the output policy.
Taking expectations, then applying
Eqns.~\eqref{eq:lb-finite-family-reduction}
and~\eqref{eq:lb-testing-errors}, gives
\begin{equation*}
    \sup_{\p\in\Pc}\ex_\p
      \left[
          \sup_{\pi\in\Pi}\W(\pi;\p)
          -\W(\widetilde\pi;\p)
      \right]  \geq
      \tfrac{M\delta\alpha}{2m}
      \tfrac1{2^m}\sum_{\nu\in N_m}
      \sum_{r=1}^{m}
      \mathbb Q_\nu\{\widetilde\nu_r\neq\nu_r\} \geq
      \tfrac{M\delta\alpha}{2m}\frac m4
      =\tfrac{M\delta\alpha}{8}.
\end{equation*}
Substituting $\delta =\tfrac18
    \left[
        1\wedge\sqrt{\tfrac{m}{\n\alpha\gamma_\J}}
    \right]$ and
$\alpha=B/(2\J)$ yields
\( \frac{M\delta\alpha}{8}
    \geq\frac M{128}
      \left[
          \frac B\J\wedge
          \sqrt{\frac{Bm}{\J\n\gamma_\J}}
      \right].\)
The inequality holds for arbitrary $\widetilde\pi$, so taking
the infimum proves \eqref{eq:lower-new}.
\end{proof}

\subsection{Proof of Theorem~\ref{lem:gap-local-information-bound}}
\begin{proof}
\noindent\emph{Step 1: Characterizing the welfare from the optimal global rule.}
Since $\Tc$ is finite, and the set of global rules $\overline{\Pi}$ is unrestricted, the welfare optimal global rule picks conditional welfare maximizing allocations statewise. That is, for any fixed unit $j$, the optimal global rule will allocate treatment to unit $j$ in target state $v$, if and only if $D_j(v)$ is positive. Then, this conditional welfare maximizing allocation satisfies:
\[
    \max_{\mathbf t\in\Tc}w_\p(V,\mathbf t)
    = w^*_{j,0}(V) + (D_j(V))_+.
\]
Thus, the welfare from the optimal global rule satisfies 
\begin{equation}\label{eqnglobalopt}
    \W(\pig^*_\p;\p)
=
\mathbb E_{\mathbb P}[w^*_{j,0}(V)]
+
\mathbb E_{\mathbb P}[(D_j(V))_+].
\end{equation}
\noindent\emph{Step 2: Bounding the welfare from the optimal local rule.}
Now, I introduce an intermediate allocation problem.
For a fixed unit \(j\), consider all feasible global rules which satisfy $\mathscr I_j$-measurability of the $j^{th}$ component, $[\pi(V)]_j$. 
Every other coordinate may still depend on the full observed state \(V\). There is no symmetry requirement in this intermediate allocation problem.
Let its optimal welfare be \(W^{\mathrm{rel},*}_j(\mathbb P)\). The policy classes are nested as follows:
\[ \Lambda_B(\p) ~\subseteq~ \{ \pi \in \overline{\Pi}: [\pi(V)]_j \text{ is }\mathscr I_j\text{-measurable}\} ~\subseteq~ \overline{\Pi}.\] 
Therefore, the optimal welfare within these classes are also ordered:
\[
\W\bigl(\pig_{\pil^*_\p};\p\bigr)
~\leq~
W^{\mathrm{rel},*}_j\bigl(\p\bigr)
~\leq~
\W\bigl(\pig^*_\p;\p\bigr) \quad \imp \quad \Delta_{\mathbb P}
~\geq~
\W\bigl(\pig^*_\p;\p\bigr) -W^{\mathrm{rel},*}_j\bigl(\p\bigr).\]

In the intermediate problem, let \(a_j\in\{0,1\}\) be an \(\mathscr I_j\)-measurable inclusion decision while every other assignment is optimized using \(V\). Its welfare is 
\(
\mathbb E_{\mathbb P}
[w^*_{j,0}(V)+a_jD_j(V)].
\)

By iterated expectations,
\[
\mathbb E_{\mathbb P}[a_jD_j(V)]
=
\mathbb E_{\mathbb P}
\left[
a_j\mathbb E_{\mathbb P}[D_j(V)\mid\mathscr I_j]
\right].
\]
The optimal inclusion decision is therefore
\(
a_j^*
=
\mathbf1\left\{
\mathbb E_{\mathbb P}[D_j(V)\mid\mathscr I_j]>0
\right\}
\).
Consequently, the optimal welfare in the intermediate problem is:
\begin{equation}\label{eqnrelative}
W^{\mathrm{rel},*}_j(\mathbb P)
=
\mathbb E_{\mathbb P}[w^*_{j,0}(V)]
+
\mathbb E_{\mathbb P}
\left[
\left(
\mathbb E_{\mathbb P}[D_j(V)\mid\mathscr I_j]
\right)_+
\right].
\end{equation}
Eqns.~\eqref{eqnglobalopt} and \eqref{eqnrelative} hold for every target unit $j$ and so too for the maximum.  Subtracting the two gives the desired bound in Theorem~\ref{lem:gap-local-information-bound}.

\noindent\emph{Step 3: Positivity of $\mathcal B_j(\p)$.}
An informative equivalent expression for $\mathcal B_j(\p)$ follows from \(
D_j=(D_j)_+-(-D_j)_+, \ 
|D_j|=(D_j)_++(-D_j)_+
\), then:
\begin{align*}
    \mathcal B_j(\p) &= 
    \ex_\p\left[ \ex_\p\left[ (D_j(V))_+\mid\mathscr I_j \right] - \left( \ex_\p\left[ (D_j(V))_+\mid\mathscr I_j \right] - \ex_\p\left[ (-D_j(V))_+\mid\mathscr I_j \right] \right)_+ \right]
    \\
    &= \ex_{\p} \min\left\{ \ex_{\p}[(D_j(V))_+\mid\mathscr I_j], \ex_{\p}[(-D_j(V))_+\mid\mathscr I_j]  \right\}.
\end{align*}
The above is strictly positive exactly where local rules leaves both a positive and a negative inclusion advantage possible. Thus
\(
\mathcal B_j(\mathbb P)>0
\)
if and only if 
\(
\p(D_j(V)>0\mid\mathscr I_j)>0
\ \text{and}\ 
\p(D_j(V)<0\mid\mathscr I_j)>0
\) hold.
That is, it is not enough that treatment advantages have different signs somewhere in the population. Their signs must remain unresolved after conditioning on the local information that local rules use.
\end{proof}

\bibliographystyle{ecta-fullname} 
\bibliography{reference}  

\newpage
\section{Auxiliary lemmas}

\subsection{Auxiliary lemmas for the upper bound}
\label{app:upper-auxiliary}

\begin{auxlemma}[Entropy of policy-induced exposure probabilities]
\label{lem:kernel-entropy}
Suppose Assumption~\ref{asmp:Nat} holds, and write
$\overline q_\pi(\ell,c):=\overline q(\ell,c;\pi)$.
For every probability law $R$ on
$\mathcal L\times\cC^{(0)}$,
\begin{equation}
\label{eq:entropy-new}
\forall \varepsilon \in (0,1), \quad 
 \log N\!\left(
   \varepsilon,
   \{\overline q_\pi:\pi\in\Pi\},
   L_2(R)
 \right)
 \leq
 2N_{dim}\log\!\left(\frac{6e\Q^2}{\varepsilon^2}\right).
\end{equation}
\end{auxlemma}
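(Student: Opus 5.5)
The plan is to transfer the combinatorial complexity of $\Pi$, measured by the Natarajan dimension, to the real-valued class $\{\overline q_\pi\}$ in two steps. First, write $\overline q_\pi$ as an average over target states of indicators determined by $\pi(v)\in\Tc$. Second, bound covering numbers of that average by those of the underlying multiclass class through a Natarajan lemma plus an approximation (Maurey/empirical sampling) argument.

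\emph{Step 1: representation.} For fixed $(\ell,c)$, the definition of $q_j$ gives
\[
 \overline q_\pi(\ell,c)=\frac1\J\sum_{j=1}^{\J}\ex\bigl[\one\{E_j^{(0)}(\pi(V);V)=c\}\mid L_j^{(0)}=\ell\bigr].
\]
For each $j$, let $K_j(\ell,\cdot)$ be a regular conditional law of $V$ given $L_j^{(0)}=\ell$; it exists because the spaces are standard Borel. Define the mixture $\bar K(\ell,\cdot)$ on $[\J]\times\mathcal V$ that picks $j$ uniformly and then draws $v\sim K_j(\ell,\cdot)$. Then
\[
 \overline q_\pi(\ell,c)=\int h_\pi(j,v,c)\,\bar K(\ell,d(j,v)), \qquad h_\pi(j,v,c):=\one\{E_j^{(0)}(\pi(v);v)=c\}.
\]
Every $h_\pi$ is $\{0,1\}$-valued and is a fixed function of the label $\pi(v)\in\Tc$.

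\emph{Step 2: combinatorial entropy of the base class.} Let $Q$ be any probability on $\mathcal V$. The Natarajan lemma (Haussler--Long form) says that on $n$ points $\Pi$ induces at most $n^{N_{dim}}\Q^{2N_{dim}}$ labelings. Converting this growth bound to a packing bound in the usual Haussler fashion gives
\[
 \log N(\varepsilon,\Pi,L_1(Q;\text{0-1}))\le N_{dim}\log(C\Q^2/\varepsilon),
\]
where the metric is the disagreement probability $Q(\pi\neq\pi')$. Since $|h_\pi-h_{\pi'}|\le\one\{\pi(v)\neq\pi'(v)\}$, the same bound transfers to any class of functions of $(j,v)$ obtained by composing with $h$.

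\emph{Step 3: pushing through the kernel.} Define the measure $Q_R(d(j,v)):=\int\bar K(\ell,d(j,v))\,R(d\ell,dc)$. By Jensen's inequality,
\[
 \|\overline q_\pi-\overline q_{\pi'}\|_{L_2(R)}^2\le\int\!\!\int|h_\pi-h_{\pi'}|\,d\bar K\,dR\le Q_R\bigl(\pi(v)\neq\pi'(v)\bigr),
\]
where the disagreement is taken on the $\mathcal V$-marginal of $Q_R$. An $\varepsilon^2$-cover of $\Pi$ in disagreement probability under that marginal is therefore an $\varepsilon$-cover of $\{\overline q_\pi\}$ in $L_2(R)$. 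Combining this with Step 2 at scale $\varepsilon^2$ gives
\[
 \log N\le N_{dim}\log(C\Q^2/\varepsilon^2).
\]
Tracking the constants in Haussler's packing bound (for example $C=6e$), with a possible extra factor of $2$ from the packing--covering conversion, should land exactly at $2N_{dim}\log(6e\Q^2/\varepsilon^2)$.

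\emph{Main obstacle.} The hardest step is Step 2: a clean, dimension-free-in-$n$ packing bound for multiclass classes under an arbitrary (possibly non-atomic) $Q$. I would first try the sampling argument. Draw $n\asymp\varepsilon^{-2}\log(\cdot)$ i.i.d. points from $Q$, so that a maximal $\varepsilon$-separated packing stays separated on the sample. The Natarajan growth bound then caps the packing size, and solving the resulting inequality $M\le(n\Q^2)^{N_{dim}}$ yields the logarithmic form. The risk is a $\log(1/\varepsilon)$ factor inflating the constant. If this route fails, I would fall back on the Haussler--Long packing theorem for Natarajan classes directly, and on measurability via the countable $\Pi_0$ in Assumption~\ref{asmp:Nat}.
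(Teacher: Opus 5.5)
Your proposal is correct and follows essentially the same route as the paper. You bound $\|\overline q_\pi-\overline q_{\pi'}\|_{L_2(R)}^2$ by the disagreement pseudometric under the composite law (draw $(\ell,c)\sim R$, pick $j$ uniformly, then draw $V$ from its conditional law given $L_j^{(0)}=\ell$), and then take an internal $\varepsilon^2$-cover of $\Pi$ in that pseudometric via the Haussler--Long multiclass Sauer bound and a packing argument. The paper uses $x^2\le x$ on $[0,1]$ where you use Jensen, and it only cites the packing step; your sampling sketch is what yields the factor $2N_{dim}$, by absorbing the $\log\log$ term.
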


\begin{proof}
Fix the DGP $\p$. Identical allocations at a target state
generate identical exposures. Hence, for every $\ell,c,\pi,\pi'$,
\[
 \left|
   \overline q_\pi(\ell,c)-\overline q_{\pi'}(\ell,c)
 \right|
 \leq
 \frac1\J\sum_{j=1}^{\J}
 \p\!\left(
   \pi(V)\neq\pi'(V)
   \,\middle|\,
   L_j^{(0)}=\ell
 \right).
\]
Since the right-hand side belongs to $[0,1]$, squaring and
integrating gives
\begin{equation*}
 \|\overline q_\pi-\overline q_{\pi'}\|_{L_2(R)}^2
 \leq
 \frac1\J\sum_{j=1}^{\J}
 \int
 \p\!\left(
   \pi(V)\neq\pi'(V)
   \,\middle|\,
   L_j^{(0)}=\ell
 \right)R(d\ell,dc) =: d_{\p,R}(\pi,\pi').
\end{equation*}
This is the disagreement pseudometric under the probability
distribution obtained by drawing $(\ell,c)$ from $R$, selecting
$j$ uniformly from $[\J]$, and drawing a target state from its
conditional distribution under $\p$ given $L_j^{(0)}=\ell$.
The multiclass Sauer bound \citep{hausslerlong1995} and the
corresponding packing argument imply
\[
 \log N(\delta,\Pi,d_{\p,R})
 \leq
 2N_{dim}\log\!\left(\frac{6e\Q^2}{\delta}\right),
 \qquad 0<\delta\leq1.
\]
An internal $\varepsilon^2$-cover in $d_{\p,R}$ induces an
internal $\varepsilon$-cover of
$\{\overline q_\pi:\pi\in\Pi\}$ in $L_2(R)$.
Taking $\delta=\varepsilon^2$ proves the result.
\end{proof}

\begin{auxlemma}[Independent benchmark bound]
\label{lem:independent-benchmark}
Under Assumptions in Theorem~\ref{thm:upper-new}, there is a universal constant $C$ such that:
\begin{align}\label{eq:aux-weighted-benchmark}
 \ex_\p\sup_{\pi\in\Pi}
 |S_\pi^0(F^{(1)})-\ex_\p S_\pi^0(F^{(1)})|
 &\leq CM\kappa \sqrt{\frac{a_\Pi}{\n \gamma_\J}}.\\
 \text{ where } \quad w_i^0(F_i^{(1)}) &:= \sum_{t \in \{0,1\}}\Y{1}_i(c_{\varnothing,t})\,\tfrac{\one\{T_i^{(1)} = t\}}{e_{i,c_{\varnothing,t}}(\ell_\varnothing(\X{1}_i))} s(\ell_\varnothing(\X{1}_i)) \nonumber
\end{align}
\end{auxlemma}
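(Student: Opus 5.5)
On the empty graph, Assumption~\ref{asmp:ub-locality}.a makes every pilot node isolated. Its local state is therefore $\ell_\varnothing(\X{1}_i)=(\X{1}_i,C_\varnothing)$ and its exposure is $c_{\varnothing,T_i^{(1)}}$. Hence $S_\pi^0(F^{(1)})=\frac1\n\sum_{i}f_\pi(F_i^{(1)})$ with
\[
 f_\pi(F_i^{(1)}):=\sum_{t\in\{0,1\}}\Y{1}_i(c_{\varnothing,t})\,\tfrac{\one\{T_i^{(1)}=t\}}{e_{i,c_{\varnothing,t}}(\ell_\varnothing(\X{1}_i))}\,s(\ell_\varnothing(\X{1}_i))\,\overline q(\ell_\varnothing(\X{1}_i),c_{\varnothing,t};\pi).
\]
This is a function of the single fundamental $F_i^{(1)}$ only, and the $F_i^{(1)}$ are i.i.d.\ by Assumptions~\ref{asmp:ub-fundamentals} and~\ref{asmp:experimental-design}. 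So the left side of \eqref{eq:aux-weighted-benchmark} is the expected supremum of a centered i.i.d.\ empirical process indexed by $\pi$. The plan is to symmetrize with Rademacher signs, bound the conditional Rademacher complexity by Dudley's entropy integral, and control entropy through Lemma~\ref{lem:kernel-entropy}.

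\emph{Envelope.} First establish an $L_2$ envelope for the class. Since $\overline q\in[0,1]$ and $|\Y{1}_i|\le M$, we have $|f_\pi|\le |w_i^0|$, where $w_i^0$ is the weight in the statement with $M$ absorbing the outcome. The propensity $e_{i,c}$ is the conditional law of $T_i$ given $\X{1}_i$. We only need overlap on pairs with $s\,\overline q>0$, and off those pairs the summand vanishes by convention. We get
\[
 \ex (w_i^0)^2\le M^2\ex\Big[\sum_t \tfrac{s^2}{e_{i,c_{\varnothing,t}}}\Big]\le \tfrac{2M^2\kappa}{\gamma_\J}\ex_{P^{(1)}}[s]\le \tfrac{2M^2\kappa}{\gamma_\J}.
\]
This uses $\ex[\one\{T=t\}/e^2\mid X]=1/e$, then $s\le\kappa$, $e\ge\gamma_\J$, and $\int s\,dP_L^{(1)}\le 1$. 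The envelope norm is therefore $\|F\|_{L_2}\lesssim M\sqrt{\kappa/\gamma_\J}$. This is the source of the $\sqrt{1/\gamma_\J}$ rather than $1/\gamma_\J$ rate.

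\emph{Entropy.} For two policies, $|f_\pi-f_{\pi'}|\le |w_i^0|\,|\overline q_\pi-\overline q_{\pi'}|(\ell_\varnothing(X_i),c_{\varnothing,T_i})$. Conditional on the sample, let $R_n$ be the probability law on $\mathcal L\times\cC^{(0)}$ that puts weight proportional to $(w_i^0)^2$ on the pair $(\ell_\varnothing(X_i),c_{\varnothing,T_i})$. Then $\|f_\pi-f_{\pi'}\|_{L_2(\mathbb P_n)}\le \|w^0\|_{L_2(\mathbb P_n)}\|\overline q_\pi-\overline q_{\pi'}\|_{L_2(R_n)}$. Lemma~\ref{lem:kernel-entropy}, applied with $R=R_n$, gives $\log N(\varepsilon\|w^0\|_n,\mathcal F,L_2(\mathbb P_n))\le 2N_{dim}\log(6e\Q^2/\varepsilon^2)$. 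If all $w^0$ vanish the process is identically zero and there is nothing to prove. Adding the zero function to control the absolute supremum, as the paper indicates, and integrating Dudley's bound over $\varepsilon\in(0,1]$ gives
\[
 \ex_\sigma\sup_\pi\Big|\tfrac1\n\sum\sigma_if_\pi\Big|\le \tfrac{C}{\sqrt\n}\|w^0\|_n\int_0^1\sqrt{1+N_{dim}\log(6e\Q^2/\varepsilon^2)}\,d\varepsilon\le C\|w^0\|_n\sqrt{a_\Pi/\n},
\]
since $\int_0^1\sqrt{\log(1/\varepsilon)}\,d\varepsilon$ is finite and $a_\Pi=1+N_{dim}\log(12e\Q^2)$.

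\emph{Conclusion.} Taking expectations and using Jensen gives $\ex\|w^0\|_n\le(\ex(w^0)^2)^{1/2}\le CM\sqrt{\kappa/\gamma_\J}$. Since $\kappa\ge1$, this is at most $CM\kappa/\sqrt{\gamma_\J}$, which yields the claim after symmetrization. Measurability of the supremum follows from the countable subclass in Assumption~\ref{asmp:Nat}, via pointwise convergence of $\overline q_\pi$ by dominated convergence.

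\emph{Main obstacle.} The delicate step is the random-measure entropy transfer. Lemma~\ref{lem:kernel-entropy} is stated for $L_2(R)$ of $\overline q$, whereas the process carries the unbounded (order $1/\gamma_\J$) weights. The weighted-measure device $R_n$ above is what lets the $L_2$, rather than sup, size of the weights enter. I would check carefully that $R_n$ is a genuine probability law, normalized when $\|w^0\|_n>0$, so that the lemma applies uniformly over realizations.
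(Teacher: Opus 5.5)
Your argument is correct and is essentially the paper's proof: empty-graph additivity, symmetrization over the i.i.d.\ fundamentals, chaining for the conditional Rademacher process via Lemma~\ref{lem:kernel-entropy}, and Jensen on the empirical $L_2$ norm of $w^0$; your weighted empirical measure $R_n$ makes explicit the choice of $R$ that the paper leaves implicit. One small fix: $\ex_\p[s(\ell_\varnothing(X_i^{(1)}))]\leq1$ does not follow from $\int s\,dP_L^{(1)}=1$, because $\ell_\varnothing(X_i^{(1)})$ is not distributed as $P_L^{(1)}$ when pilot units can have neighbors, so use the pointwise bound $s\leq\kappa$ instead, giving $\ex_\p(w_i^0)^2\leq2M^2\kappa^2/\gamma_\J$ as in the paper, which is all your concluding step needs.
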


\begin{proof}
Under Assumption~\ref{asmp:ub-locality}, in any empty network, the local-state--exposure pair simplifies to self covariate and the two feasible exposures in any empty network that correspond to the two possible self treatment states $L_i(F^{(1)},\mathbf0) = \ell_\varnothing(\X{1}_i) := (\X{1}_i,C_\varnothing)$. Consequently, the empirical welfare admits the additive separability over fundamentals of each separate unit as $S_\pi^0(F^{(1)}) = \n^{-1}\sum_{i=1}^{\n}b_{i,\pi}(F_i^{(1)})$, where
\begin{equation*}
    b_{i,\pi}(F_i^{(1)})
   := \sum_{t \in \{0,1\}} \Y{1}_i(c_{\varnothing,t})\,\tfrac{\one\{T_i^{(1)} = t\}}{e_{i,c_{\varnothing,t}}(\ell_\varnothing(\X{1}_i))}
                     s(\ell_\varnothing(\X{1}_i)) \overline q\bigl(\ell_\varnothing(\X{1}_i),c_{\varnothing,t};\pi\bigr).
\end{equation*}
An advantage of this representation is that fundamentals $F_i^{(1)}$ are independent across units under Assumptions~\ref{asmp:ub-fundamentals} and \ref{asmp:experimental-design}. For an independent Rademacher process $\epsilon_i$, the symmetrization inequality holds \citep[Lemma 2.3.1]{vdvwellner1996}:
\begin{equation*}
    \ex\sup_{\pi \in \Pi} \left| \tfrac{1}{\n} \sum_{i=1}^\n \left[ b_{i,\pi}(F_i^{(1)}) - \ex b_{i,\pi}(F_i^{(1)}) \right] \right| ~ \leq ~ 2 \ex \sup_{\pi \in \Pi} \left| \tfrac{1}{\n}\sum_{i=1}^\n \epsilon_i  b_{i,\pi}(F_i^{(1)}) \right|
\end{equation*}
For fixed realization of fundamentals $F^{(1)}$, I denote the conditional Rademacher process:
\begin{equation*}
    H_\pi ~:=~ \tfrac{1}{\n} \sum_{i=1}^\n \epsilon_i  b_{i,\pi}(F_i^{(1)}).
\end{equation*}
It is easy to verify that the process is sub-Gaussian, that is, for every real $s > 0$:
\begin{align*}
    \ex_{\epsilon|F^{(1)}} \exp\{s(H_\pi - H_{\pi'}) \} ~&\leq~ \exp\left[ \frac{t^2d_{F^{(1)}}(\pi, \pi')^2}{2\n} \right]\\
    \text{ where } 
d_F(\pi,\pi')^2
&:=
\frac1{n_1}\sum_i(b_{i,\pi}-b_{i,\pi'})^2.
\end{align*}
has sub-Gaussian increments in \(d_F/\sqrt{\n}\) metric. Applying a maximal inequality bound \citep[Lemma 2.2.5]{vdvwellner1996} yields:
\begin{multline*}
    \ex_{\epsilon|F^{(1)}} \sup_{\pi \in \Pi} \left| \tfrac{1}{\n}\sum_{i=1}^\n \epsilon_i  b_{i,\pi}(F_i^{(1)}) \right| ~\leq~ \\
    \frac{C_0}{\sqrt{\n}}\left\{\frac1\n\sum_{i=1}^{\n}
                    (w_i^0(F_i^{(1)}))^2\right\}^{1/2}\left[ 1 + \int_0^1 \sqrt{ \log N\!\left(\varepsilon,
       \{\overline q_\pi:\pi\in\Pi\},L_2(R)\right)} d\varepsilon \right]
\end{multline*}
for any probability law $R$ on
$\mathcal L\times\cC^{(0)}$ and $0<\varepsilon\leq1$.
Lemma~\ref{lem:kernel-entropy} guarantees:
\begin{equation}\label{eq:ub-disagreement-cover}
 \log N\!\left(\varepsilon,
       \{\overline q_\pi:\pi\in\Pi\},L_2(R)\right)
 \leq 2N_{dim}\log(6e\Q^2/\varepsilon^2).
\end{equation}
Here an internal cover has its centers in $\{\overline q_\pi:\pi\in\Pi\}$, and $e$ is Euler's
number. The result follows from the multiclass Sauer bound of
\citet{hausslerlong1995} and a packing argument.
The entropy integral, with the zero function added to
bound an absolute supremum, yields
\citep{bartlett2002rademacher,vdvwellner1996}
\[
 \ex_{\epsilon|F^{(1)}} \sup_{\pi\in\Pi}
 \left|\frac1\n\sum_{i=1}^{\n}\epsilon_i b_{i,\pi}(F_i^{(1)})\right|
 \leq C_0\sqrt{\frac{a_\Pi}{\n}}
    \left\{\frac1\n\sum_{i=1}^{\n}
                    (w_i^0(F_i^{(1)}))^2\right\}^{1/2}.
\]
Law of iterated expectations, Jensen's inequality and the bound $\ex[w_i^0(F_i^{(1)})]^2 \leq \frac{2M^2\kappa^2}{\gamma_\J}$ prove
Eqn.~\eqref{eq:aux-weighted-benchmark}. 
\end{proof}

\begin{auxlemma}[Dyadic variation bound]
\label{lem:dyadic-variation}
Under Assumptions in Theorem~\ref{thm:upper-new}, there is a universal constant $C$ such that:
\begin{equation}\label{eq:ub-edge-bound}
 \ex_\p\sup_{\pi\in\Pi}
 |S_\pi(F^{(1)},\bA{1})-m_\pi(F^{(1)})|
 \leq CB_\J d_\n^{\mathrm{sc}}
       \left[\sqrt{\rho_\n\vartheta_\n}
                          +\frac{\vartheta_\n}{\n}\right].
\end{equation}
\end{auxlemma}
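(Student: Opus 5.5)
Conditional on the fundamentals $F^{(1)}$, the only remaining randomness in $S_\pi(F^{(1)},\bA{1})$ comes from the dyadic uniforms $(\xi^{(1)}_{ik})_{i<k}$. These are independent, and each dyad is Bernoulli with success probability at most $\rho_\n$. The plan is to prove a Bernstein-type concentration inequality for each fixed $\pi$, conditionally on $F^{(1)}$, and then take a union bound over a finite cover of $\Pi$.

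\emph{Fixed-$\pi$ concentration.} Order the $N=\binom{\n}{2}$ dyads and form the Doob martingale $M_k:=\ex[S_\pi\mid F^{(1)},\xi_1,\ldots,\xi_k]$. Its increments are $D_k=M_k-M_{k-1}$.
\begin{itemize}
\item By Assumption~\ref{asmp:ub-locality}.b, toggling one dyad changes the pair $(L_i,E_i)$ for at most $d_\n^{\mathrm{sc}}$ units.
\item Each such unit's summand is bounded by $B_\J$, since $\overline q\in[0,1]$. Hence toggling a dyad changes $S_\pi$ by at most $2B_\J d_\n^{\mathrm{sc}}/\n$, and so $|D_k|\leq 2B_\J d_\n^{\mathrm{sc}}/\n=:b$.
\item Because the dyads are conditionally independent, $D_k$ is the difference between the value at the realized edge and its average over the edge indicator. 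Its conditional variance is therefore at most $\p(A_k=1\mid F)(1-\p(A_k=1\mid F))\,b^2\leq\rho_\n b^2$.
\item The predictable quadratic variation is then at most $N\rho_\n b^2\leq 2\rho_\n (B_\J d_\n^{\mathrm{sc}})^2$.
\end{itemize}
Freedman's inequality then gives
\[
\p\big(|S_\pi-m_\pi|>u\mid F^{(1)}\big)\leq 2\exp\!\Big(-\tfrac{u^2}{2(V+bu)}\Big),
\qquad V\lesssim \rho_\n(B_\J d_\n^{\mathrm{sc}})^2.
\]

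\emph{Uniformity over $\Pi$.} The difficulty is that the kernel $\overline q(\cdot,c;\pi)$ is evaluated at random local states $L_i(F^{(1)},A)$, which depend on the realized graph. A cover in $L_2(R)$ for a single fixed $R$ is therefore not enough. I would instead use that $\pi$ enters $S_\pi$ only through the vector of values $\overline q_\pi(\ell,c)$, and that $\ell$ ranges over $\{(X_i^{(1)},C):i\in[\n],\,C\in\mathscr C_\n\}$. This is a finite set of at most $\n K_{\mathrm{set},\n}$ points, fixed once $F^{(1)}$ is fixed. Take $R$ to be the uniform law on these points times $\cC^{(0)}$, and apply Lemma~\ref{lem:kernel-entropy} in sup-norm form on this finite set; with the Haussler--Long bound applied directly to the finitely many induced configurations, the discretization should cost an extra factor $\n$.
\begin{itemize}
\item This yields an $\varepsilon$-cover with $\varepsilon\asymp 1/\n$ and log-cardinality of order $N_{dim}\log(12e\Q^2\n^2K_{\mathrm{set},\n})\leq\vartheta_\n$.
\item The discretization error in $S_\pi$ is at most $B_\J d_\n^{\mathrm{sc}}\varepsilon$. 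This is absorbed into $B_\J d_\n^{\mathrm{sc}}\vartheta_\n/\n$, after arguing that only the at most $d_\n^{\mathrm{sc}}$-affected summands, or the whole average weighted by $B_\J$, contribute.
\end{itemize}

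\emph{Combining the pieces.} A union bound of the Freedman tail over the cover, followed by integration of the tail (the standard Bernstein maximal inequality), gives
\[
\ex\Big[\max_{\text{cover}}|S_\pi-m_\pi|\,\Big|\,F^{(1)}\Big]\lesssim \sqrt{V\vartheta_\n}+b\,\vartheta_\n \lesssim B_\J d_\n^{\mathrm{sc}}\Big(\sqrt{\rho_\n\vartheta_\n}+\tfrac{\vartheta_\n}{\n}\Big).
\]
Since the right-hand side does not depend on $F^{(1)}$, taking outer expectations proves \eqref{eq:ub-edge-bound}.

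\emph{Main obstacle.} The hard step is making the cover independent of the realized graph, so that the union bound is legitimate conditionally on $F^{(1)}$ while $\pi$ is evaluated at graph-dependent local states. Enumerating the finite catalog $\mathscr C_\n$ is exactly what resolves this, and it is why $K_{\mathrm{set},\n}$ and $\n^2$ enter $\vartheta_\n$.
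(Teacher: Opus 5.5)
Your proposal is correct and follows the paper's proof essentially step for step: a Doob martingale over dyads conditional on $F^{(1)}$ with increments $(A_k-p_k)v_{k,\pi}$, a Freedman tail with quadratic variation of order $\rho_\n(B_\J d_\n^{\mathrm{sc}})^2$ and jump size $2B_\J d_\n^{\mathrm{sc}}/\n$, a sup-norm policy net at radius $1/\n$ on the $F^{(1)}$-determined set of at most $\n K_{\mathrm{set},\n}$ local states, and a union bound integrated over the tail. One small correction concerns the net's approximation error, which is $2B_\J/\n$ pathwise because every summand of both $S_\pi$ and $m_\pi$ can move, not only $d_\n^{\mathrm{sc}}$ of them; it is absorbed using $d_\n^{\mathrm{sc}}\geq1$ and $\vartheta_\n\geq1$, and the case $d_\n^{\mathrm{sc}}=0$ needs a separate one-line argument, since then $S_\pi(F^{(1)},\cdot)$ is constant in the graph and the centered process vanishes identically.
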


\begin{proof}
Condition on a realization of $F^{(1)}$. Suppose $A$ and $A'$ are adjacency matrices corresponding to networks that differ on exactly one dyadic link. By Assumption~\ref{asmp:ub-locality}.b, this changes the exposure of at most $d_\n^{\mathrm{sc}}$ nodes. Therefore,
\begin{equation}\label{eq:ub-edge-increment}
 |S_\pi(F^{(1)},A)-S_\pi(F^{(1)},A')|
 \leq \frac{2B_\J d_\n^{\mathrm{sc}}}{\n}.
\end{equation}
Enumerate all $\binom\n2$ undirected dyads and index the corresponding dyads by $k$. 
Denote the corresponding entries of adjacency matrix $A$ by $A_{k}$. Denote by $\{\mathcal H_k : 0 \leq k \leq \binom\n2\}$ a filtration of the sigma-algebra generated by $(F^{(1)},\bA{1})$ where $\mathcal H_k := \sigma(F^{(1)},A_{1},\ldots,A_{k})$ and $\mathcal H_0 := \sigma(F^{(1)})$. I use this to represent $S_\pi(F^{(1)},\bA{1})$ as a sum of martingale differences. 
\begin{equation*}
    S_\pi(F^{(1)},\bA{1})-m_\pi(F^{(1)}) = \sum_{k=0}^{\binom\n2}D_{k,\pi}; \ \ D_{k,\pi} = M_{k,\pi} - M_{k-1,\pi}; \ \  M_{k+1,\pi} = \ex[S_\pi(F^{(1)},\bA{1})|\mathcal H_k]
\end{equation*}
The Doob martingale increment has the form
$D_{k,\pi}=(A_{k}-p_k)v_{k,\pi}$, where $p_k = \rho_\n g\!\left(
            \X{1}_i,\X{1}_k,U_i^{(1)},U_k^{(1)}\right) $ is the probability of link formation and from \ref{eq:ub-edge-increment} $|v_{k,\pi}|\leq \frac{2B_\J d_\n^{\mathrm{sc}}}{\n}$.
Conditional dyadic independence and $p_k\leq\rho_\n$ give quadratic variation bound:
\begin{equation}\label{eq:ub-edge-variance}
 \sum_{k=1}^{\binom\n2}\ex_\p[D_{k,\pi}^2\mid\mathcal H_{k-1}]
 \leq\binom\n2\rho_\n \frac{2B_\J d_\n^{\mathrm{sc}}}{\n}^2
 \leq 2B_\J^2(d_\n^{\mathrm{sc}})^2\rho_\n
 =:\sigma_\n^2.
\end{equation}
Freedman's inequality \citep{freedman1975} implies
\[
 \p\!\left(
 |S_\pi(F^{(1)},\bA{1})-m_\pi(F^{(1)})|
 >\sqrt{2\sigma_\n^2t}+\frac{2a_\n t}{3}
 \,\middle|\,F^{(1)}\right)\leq2e^{-t},\qquad t>0.
\]
To translate this into a bound for the supremum, I approximate it with a finite maximum. 
Every local state generated by
varying the adjacency matrix belongs to $\mathscr L_{F^{(1)}}
    :=\left\{
        (X_i^{(1)},C):
        i\in[\n],\ C\in\mathscr C_\n
      \right\}$, with $|\mathscr L_{F^{(1)}}|
    \leq\n K_{\mathrm{set},\n}$.
Apply the finite-state approximation in Lemma~\ref{lem:kernel-entropy} at radius $1/\n$ to obtain a finite policy net, 
$\Pi_{F^{(1)}}\subseteq\Pi$ such that every $\pi\in\Pi$ has a representative $\pi'\in\Pi_{F^{(1)}}$ satisfying $ \max_{\ell\in\mathscr L_{F^{(1)}}}\max_{c\in\cC^{(0)}}
 |\overline q(\ell;c,\pi)-\overline q(\ell;c,\pi')|
 \leq\varepsilon$. 
Moreover, by Eqn.~\eqref{eq:ub-complexity},
\begin{equation}\label{eq:ub-dyadic-net-cardinality}
    \log|\Pi_{F^{(1)}}|
    \leq
    2N_{dim}\log\!\left(
        6e\Q^2\n^2K_{\mathrm{set},\n}
    \right)
    \leq2\vartheta_\n.
\end{equation}
The net depends only on the conditioned primitives, not on the realized dyads.
Define $N_{F^{(1)}}:=\log\!\left(2|\Pi_{F^{(1)}}|\right)$,  then $\log2\leq N_{F^{(1)}}\leq3\vartheta_\n$.
Applying the preceding conditional Freedman inequality to each representative and using a union
bound, for every $u\geq0$:
\begin{equation}\label{eq:ub-dyadic-net-tail}
    \p\Bigg(
       \max_{\pi\in\Pi_{F^{(1)}}}
        \left|
            S_\pi(F^{(1)},\bA{1})-m_\pi(F^{(1)})
        \right|
       >
        \sqrt{2\sigma_\n^2(N_{F^{(1)}}+u)}
        +\frac{2a_\n}{3}(N_{F^{(1)}}+u)
        \,\Bigg|\,F^{(1)}
    \Bigg)
    \leq e^{-u}.
\end{equation}

Integrating ~\eqref{eq:ub-dyadic-net-tail} gives the bound:
\begin{align*}
    &\ex\left[
        \max_{\pi\in\Pi_{F^{(1)}}}
        \left|
            S_\pi(F^{(1)},\bA{1})-m_\pi(F^{(1)})
        \right|
        \,\Bigg|\,F^{(1)}
      \right]\\
    &\quad\leq
      \sqrt{2\sigma_\n^2N_{F^{(1)}}}
      +\frac{2a_\n N_{F^{(1)}}}{3}\ +\  
      \int_0^\infty e^{-u}
      \left\{
          \sqrt{\frac{\sigma_\n^2}
                       {2(N_{F^{(1)}}+u)}}
          +\frac{2a_\n}{3}
      \right\}\,du\\
    &\quad\leq
      \sqrt{2\sigma_\n^2N_{F^{(1)}}}
      +\sqrt{\frac{\sigma_\n^2}{2N_{F^{(1)}}}}
      +\frac{2a_\n}{3}(N_{F^{(1)}}+1)\ \leq\ 
      C\left\{
          \sqrt{\sigma_\n^2\vartheta_\n}
          +a_\n\vartheta_\n
      \right\}.
\end{align*}
The first inequality follows from the tail-integral formula, splitting at the threshold corresponding to $u=0$ and then changing variables to the threshold in
Eqn.~\eqref{eq:ub-dyadic-net-tail}. 
The last inequality uses $\log2\leq N_{F^{(1)}}\leq3\vartheta_\n$. Finally, approximate the supremum over $\Pi$:
\begin{equation}\label{eq:ub-dyadic-net-reduction}
    \sup_{\pi\in\Pi}
      \left|
          S_\pi(F^{(1)},\bA{1})-m_\pi(F^{(1)})
      \right|~\leq~
      \max_{\pi'\in\Pi_{F^{(1)}}}
      \left|
          S_{\pi'}(F^{(1)},\bA{1})-m_{\pi'}(F^{(1)})
      \right|
      +\frac{2B_\J}{\n}.
\end{equation}

Combining this bound with Eqn.~\eqref{eq:ub-dyadic-net-reduction}, taking expectations over $F^{(1)}$, and substituting $a_\n$ and $\sigma_\n^2$ from Eqns.~\eqref{eq:ub-edge-increment} and~\eqref{eq:ub-edge-variance}, respectively, yields
\begin{align}
    \ex\sup_{\pi\in\Pi}
      \left|
          S_\pi(F^{(1)},\bA{1})-m_\pi(F^{(1)})
      \right|\ &\leq\ 
      C\left\{
          \sqrt{\sigma_\n^2\vartheta_\n}
          +a_\n\vartheta_\n
      \right\}
      +\frac{2B_\J}{\n} \\
      &\leq \ 
      CB_\J d_\n^{\mathrm{sc}}
      \left\{
          \sqrt{\rho_\n\vartheta_\n}
          +\frac{\vartheta_\n}{\n}
      \right\}.
      \label{eq:ub-edge-bound}
\end{align}
The approximation remainder is absorbed using
$d_\n^{\mathrm{sc}}\geq1$ and $\vartheta_\n\geq1$.
The argument may first be applied to finite subsets of the
countable approximating class in Assumption~\ref{asmp:Nat};
the bounds are uniform in their cardinality, so the same
approximation argument used for the entropy bound extends the
result to $\Pi$. If $d_\n^{\mathrm{sc}}=0$ or $\rho_\n=0$, the centered dyadic
process is identically zero, and neither the net nor its
approximation remainder is needed.
\end{proof}

\begin{auxlemma}[Network-mediated node variation bound]
\label{lem:node-variation}
Under Assumptions in Theorem~\ref{thm:upper-new}, there is a universal constant $C$ such that:
\begin{equation}\label{eq:ub-node-bound}
 \ex_\p\sup_{\pi\in\Pi}
 |r_\pi(F^{(1)})-\ex_\p r_\pi(F^{(1)})|
 ~\leq~ CB_\J d_\n^{\mathrm{sc}}
       \left[\rho_\n\sqrt{\n\vartheta_\n}
                          +\frac{\vartheta_\n}{\n}\right].
\end{equation}
\end{auxlemma}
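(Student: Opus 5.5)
The plan is to treat $r_\pi(F^{(1)})=m_\pi(F^{(1)})-S^0_\pi(F^{(1)})$ as a function of the $\n$ independent vectors $F_i^{(1)}$ and to control its fluctuations through a bounded-differences (McDiarmid) inequality for each fixed $\pi$. The supremum over $\Pi$ is then handled by the same finite policy net $\Pi_{F}$ used in Lemma~\ref{lem:dyadic-variation}. Write $m_\pi(F)=\ex[S_\pi(F,\bA{1})\mid F]$. Since every edge is absent under the empty graph, we can express $S_\pi(F,A)-S^0_\pi(F)$ as a telescoping sum over the dyads switched on in $A$. By Assumption~\ref{asmp:ub-locality}.b, each single-dyad toggle changes at most $d_\n^{\mathrm{sc}}$ summands, and each changes by at most $2B_\J/\n$. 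By Assumption~\ref{asmp:ub-locality}.a, isolated nodes contribute nothing that depends on other nodes' $(x_k,t_k)$. Consequently $r_\pi(F)$ depends on $F_k$ only through the dyads incident to $k$, namely through their link probabilities $\rho_\n g(\cdot)$ and through the covariate and treatment that node $k$ transmits along existing incident links.

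\emph{Step 1 (bounded differences).} Replace $F_k$ by an independent copy $F_k'$, keep everything else fixed, and couple the dyadic uniforms $\xi$ across the two configurations. Graphs that differ only in edges incident to $k$, or in $k$'s own fundamentals, change the score only through links incident to $k$. Using the coupling, the change in $m_\pi$ is at most $(2B_\J d_\n^{\mathrm{sc}}/\n)$ times the expected number of incident links of $k$ that are present in either configuration, plus a term accounting for $k$'s own summand. The expected number of such links is at most $2\n\rho_\n$. The change in $S^0_\pi$ involves only $k$'s own summand, of size $O(B_\J/\n)$, and this cancels up to the same order against $k$'s own term in $m_\pi$ when $k$ is isolated. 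This gives $|r_\pi(F)-r_\pi(F^{k})|\le c_k:=C B_\J d_\n^{\mathrm{sc}}(\rho_\n+1/\n)$.

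\emph{Step 2 (concentration and union).} McDiarmid's inequality gives sub-Gaussian tails with variance proxy $\sum_k c_k^2\le C B_\J^2 (d_\n^{\mathrm{sc}})^2 \n(\rho_\n+1/\n)^2$. This proxy yields the leading term $\rho_\n\sqrt{\n\vartheta_\n}$ after multiplying by $\sqrt{\log|\Pi_F|}$ with $\log|\Pi_F|\le 2\vartheta_\n$.

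The $1/\n$ part of $c_k$ would produce a $\sqrt{\vartheta_\n/\n}$ term, which is too large. To reach the stated $\vartheta_\n/\n$, we must not bound the own-summand change crudely. Instead we use a primitive-pair swapping (Efron--Stein or Bernstein-type) version in which the own-summand differences cancel exactly between $m_\pi$ and $S^0_\pi$ on the event that $k$ is isolated. Only the event that $k$ has an incident link, of probability at most $\n\rho_\n$, contributes, and it does so with magnitude $2B_\J d_\n^{\mathrm{sc}}/\n$. A Bernstein-type bounded-differences inequality then gives a variance term of order $B_\J^2(d_\n^{\mathrm{sc}})^2\n\rho_\n^2$ and a range term of order $B_\J d_\n^{\mathrm{sc}}/\n$. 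The latter contributes $B_\J d_\n^{\mathrm{sc}}\vartheta_\n/\n$ after the union bound.

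The policy net must be chosen independently of the node being swapped. For this we use Lemma~\ref{lem:kernel-entropy} at radius $1/\n$ with respect to the catalogue $\mathscr C_\n$ combined with all covariates. We also use the fact that $r_\pi$ is linear in $\overline q_\pi$, so the net-approximation error is $O(B_\J/\n)$, which is absorbed into the $\vartheta_\n/\n$ term. Integrating the tail as in Lemma~\ref{lem:dyadic-variation}, and extending from finite subsets of $\Pi_0$ to $\Pi$ by Assumption~\ref{asmp:Nat}, completes the argument.

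The main obstacle is Step 1 together with the refinement described after Step 2. One must make the cancellation of the own-summand rigorous, so that the bounded difference genuinely scales with the incident-link probability and not with $1/\n$. This requires carefully coupling $\bA{1}$ under $F$ and $F^{k}$ and handling the fact that swapping $X_k$ also alters $k$'s local state and the link probabilities simultaneously. The difference $c_{\varnothing,\cdot}$ versus non-isolated exposures is controlled only by Assumption~\ref{asmp:ub-locality}.
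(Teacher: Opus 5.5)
There is a genuine gap, and it is exactly the step you flag as the main obstacle. Your Step~1 only yields $c_k=CB_{n_0}d^{\mathrm{sc}}_{n_1}(\rho_{n_1}+1/n_1)$. Your Bernstein/Efron--Stein repair then splits on ``the event that $k$ is isolated.'' But $r_\pi=m_\pi-S^0_\pi$ is already averaged over the dyadic uniforms. So $r_\pi(F^{(1)})-r_\pi(F^{(1),[k]})$ is a deterministic function of the two primitive arrays, and the isolation event lives inside the $\ex_\xi$ that defines $m_\pi$. Your numbers also do not close. An increment of size $2B_{n_0}d^{\mathrm{sc}}_{n_1}/n_1$ on an event of probability $n_1\rho_{n_1}$ gives $\sum_k n_1\rho_{n_1}(2B_{n_0}d^{\mathrm{sc}}_{n_1}/n_1)^2=4B_{n_0}^2(d^{\mathrm{sc}}_{n_1})^2\rho_{n_1}$, not order $n_1\rho_{n_1}^2$.

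The missing idea is a pathwise decomposition through $A^{-k}$. This is the coupled graph with all dyads incident to $k$ deleted, and it is common to both arrays when $\xi$ is shared:
\[
 r_\pi(F^{(1)})=\ex_\xi\bigl[S_\pi(F^{(1)},A)-S_\pi(F^{(1)},A^{-k})\bigr]
 +\ex_\xi\bigl[S_\pi(F^{(1)},A^{-k})-S_\pi(F^{(1)},\mathbf 0)\bigr].
\]
By Assumption~\ref{asmp:ub-locality}.a the second term is exactly invariant to replacing $F^{(1)}_k$. Node $k$ is isolated in both graphs, so its own summand cancels, and no other unit's pair or outcome depends on $F^{(1)}_k$. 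Restoring $k$'s incident edges one at a time bounds the first bracket by $2B_{n_0}d^{\mathrm{sc}}_{n_1}\operatorname{deg}_k(A)/n_1$ on every realization. Node $k$'s own summand is among the $d^{\mathrm{sc}}_{n_1}$ summands each toggle may change, so there is no separate $O(B_{n_0}/n_1)$ term. Since $\ex_\xi\operatorname{deg}_k(A)\le(n_1-1)\rho_{n_1}$, you get $|r_\pi(F^{(1)})-r_\pi(F^{(1),[k]})|\le 4B_{n_0}d^{\mathrm{sc}}_{n_1}(n_1-1)\rho_{n_1}/n_1=:\lambda_{n_1}$, with no $1/n_1$ component. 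Plain McDiarmid then gives $\lambda_{n_1}\sqrt{n_1\vartheta_{n_1}}$, and no Bernstein range term is needed. In the paper, the $\vartheta_{n_1}/n_1$ term comes solely from the net-approximation error $4B_{n_0}/n_1$.

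Second, your supremum step is not valid as set up. If McDiarmid concentrates over $F^{(1)}$, the union bound needs a finite net that does not depend on $F^{(1)}$. The net $\Pi_{F^{(1)}}$ of Lemma~\ref{lem:dyadic-variation} is built from the realized covariates. That is legitimate there only because that lemma concentrates over $\xi$ given $F^{(1)}$. ``The catalogue combined with all covariates'' is an infinite set of local states when covariates have unrestricted support. Lemma~\ref{lem:kernel-entropy} is an $L_2(R)$ covering for a probability law $R$, and it gives no finite uniform net on that set.

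The paper resolves this with a ghost copy $F^{(1)\prime}$ and Rademacher pair swapping. By Jensen and swap invariance, it suffices to bound $\ex_\epsilon\sup_\pi|r_\pi(F^{(1),\epsilon})-r_\pi(F^{(1),-\epsilon})|$ conditional on $(F^{(1)},F^{(1)\prime})$. Every hybrid array then generates local states in a set of at most $2n_1K_{\mathrm{set},n_1}$ points, so a finite net with log-cardinality at most $2\vartheta_{n_1}$ exists. McDiarmid is applied to the signs $\epsilon$, each of which moves the mean-zero antisymmetric statistic by at most $2\lambda_{n_1}$.
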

\begin{proof}
Fix $k\in[\n]$, and let $F^{(1),[k]}$ differ from $F^{(1)}$
only in node $k$'s fundamentals. Couple graph draws $A$ and $A^{[k]}$
under the two primitive arrays using the same dyadic uniforms, $\xi$. All dyads not involving $k$ then agree. Deleting dyads involving node $k$ from both draws yields
the same matrix $A^{-k}$.
By Assumption~\ref{asmp:ub-locality}.a,
\begin{equation}\label{eq:ub-isolation-cancellation}
    S_\pi(F^{(1)},A^{-k})
       -S_\pi(F^{(1)},\mathbf0)\ = \ 
     S_\pi(F^{(1),[k]},A^{-k})
       -S_\pi(F^{(1),[k]},\mathbf0).
\end{equation}
Since node $k$ is isolated in $A^{-k}$,  its contribution cancels within
each difference. For $i\neq k$, replacing $F_k^{(1)}$ does not
change the local-state--exposure pair or potential outcomes for $i$.
Let $\ex_\xi$ denote expectation over the dyadic uniforms,
holding fundamentals fixed. Combining
~\eqref{eq:aux-processes} and~\eqref{eq:ub-isolation-cancellation}, one can decompose $r_\pi(F^{(1)})-r_\pi(F^{(1),[k]})$ into
\begin{align*}
    \ex_\xi\Big[
        S_\pi(F^{(1)},A)-S_\pi(F^{(1)},A^{-k}) 
        -S_\pi(F^{(1),[k]},A^{[k]})
        +S_\pi(F^{(1),[k]},A^{-k})
    \Big].
\end{align*}
Restoring the incident edges one at a time and applying
Eqn.~\eqref{eq:ub-edge-increment} gives
\[
    \left|
        S_\pi(F^{(1)},A)-S_\pi(F^{(1)},A^{-k})
    \right|
    \leq \frac{2B_\J d_\n^{\mathrm{sc}}}{\n} \operatorname{deg}_k(A), \quad \text{where }\operatorname{deg}_k(A):=\sum_{i\neq k}A_{ki}
\]
and the analogous inequality holds for $A^{[k]}$.
Under Assumption~\ref{asmp:ub-fundamentals},
\[
    \ex_\xi[\operatorname{deg}_k(A)]
    \leq(\n-1)\rho_\n,
    \qquad
    \ex_\xi[\operatorname{deg}_k(A^{[k]})]
    \leq(\n-1)\rho_\n.
\]
Combining gives a bound on the edge-averaged remainder, conditional on fundamentals:
\begin{equation}\label{eq:ub-residual-node-increment}
\begin{split}
    \left|
        r_\pi(F^{(1)})-r_\pi(F^{(1),[k]})
    \right| ~\leq ~ \frac{
        4B_\J d_\n^{\mathrm{sc}}(\n-1)\rho_\n
       }{\n}
      =:\lambda_\n.
\end{split}
\end{equation}
It also accounts for treatment changes transmitted
through existing links, even when the coupled graphs coincide.
Let $F^{(1)\prime}$ be an independent copy of $F^{(1)}$.
By Jensen's inequality:
\begin{equation}\label{eq:ub-node-ghost-comparison}
    \ex\sup_{\pi\in\Pi}
      \left|
          r_\pi(F^{(1)})-\ex[r_\pi(F^{(1)})]
      \right|\ \leq \ 
      \ex\sup_{\pi\in\Pi}
      \left|
          r_\pi(F^{(1)})-r_\pi(F^{(1)\prime})
      \right|.
\end{equation}
Under Assumptions~\ref{asmp:ub-fundamentals}
and~\ref{asmp:experimental-design}, the pairs
$(F_i^{(1)},F_i^{(1)\prime})$ are exchangeable and independent.
Let $\epsilon_1,\ldots,\epsilon_\n$ be independent Rademacher
variables. Define
\[
    F_i^{(1),\epsilon}
    :=
    \begin{cases}
        F_i^{(1)},&\epsilon_i=1,\\
        F_i^{(1)\prime},&\epsilon_i=-1,
    \end{cases}
    \qquad
    F^{(1),\epsilon}
    :=(F_1^{(1),\epsilon},\ldots,F_\n^{(1),\epsilon}),
\]
and let $F^{(1),-\epsilon}$ select the complementary member of
each pair. Swapping invariance implies
\begin{equation}\label{eq:ub-swap-symmetrization}
\begin{split}
    &\ex\sup_{\pi\in\Pi}
      \left|
          r_\pi(F^{(1)})-r_\pi(F^{(1)\prime})
      \right|\ = \ 
      \ex_{F^{(1)},F^{(1)\prime}}\ex_\epsilon
      \sup_{\pi\in\Pi}
      \left|
          r_\pi(F^{(1),\epsilon})
          -r_\pi(F^{(1),-\epsilon})
      \right|.
\end{split}
\end{equation}

Condition on $(F^{(1)},F^{(1)\prime})$. Every local state obtained
from a hybrid primitive array and an arbitrary graph belongs to
\[
    \mathscr L_{F^{(1)},F^{(1)\prime}}
    :=
    \left\{
        (X_i^{(1)},C),(X_i^{(1)\prime},C):
        i\in[\n],\ C\in\mathscr C_\n
    \right\}.
\]
Its cardinality is at most $2\n K_{\mathrm{set},\n}$.
Lemma~\ref{lem:kernel-entropy}, applied at radius
$1/\n$, therefore supplies a finite net
$\Pi_{F^{(1)},F^{(1)\prime}}\subseteq\Pi$, with cardinality,
\begin{equation}\label{eq:ub-node-net-cardinality}
    \log|\Pi_{F^{(1)},F^{(1)\prime}}|
    \leq
    2N_{dim}\log\!\left(
        12e\Q^2\n^2K_{\mathrm{set},\n}
    \right)
    \leq2\vartheta_\n.
\end{equation}

For each representative, define
\(
    T_\pi(\epsilon)
    :=
    r_\pi(F^{(1),\epsilon})
    -r_\pi(F^{(1),-\epsilon}).
\)

Since $T_\pi(-\epsilon)=-T_\pi(\epsilon)$ and the signs have
a symmetric distribution,
\(
    \ex_\epsilon[T_\pi(\epsilon)]=0.
\)

If $\epsilon^{[k]}$ differs from $\epsilon$ only in its
$k$th sign, then one node's fundamental changes in each
hybrid array. Eqn.~\eqref{eq:ub-residual-node-increment}
therefore implies
\begin{equation}\label{eq:ub-node-sign-sensitivity}
    |T_\pi(\epsilon)-T_\pi(\epsilon^{[k]})|
    \leq2\lambda_\n.
\end{equation}
The exponential-moment form of the McDiarmid inequality
\citep{mcdiarmid1989} yields
\begin{equation}\label{eq:ub-node-sign-mgf}
    \log\ex_\epsilon
        \exp\{tT_\pi(\epsilon)\}
    \leq
    \frac{t^2\n\lambda_\n^2}{2},
    \qquad t\in\R.
\end{equation}

For any $t>0$, Eqn.~\eqref{eq:ub-node-sign-mgf} gives
\begin{align*}
    \ex_\epsilon
      \max_{\pi\in\Pi_{F^{(1)},F^{(1)\prime}}}
      |T_\pi(\epsilon)|\ \leq\ 
      \tfrac1t\log
      \sum_{\pi\in\Pi_{F^{(1)},F^{(1)\prime}}}
      \left\{
          \ex_\epsilon e^{tT_\pi(\epsilon)}
          +\ex_\epsilon e^{-tT_\pi(\epsilon)}
      \right\}\ \leq \ 
      \tfrac{
          \log(2|\Pi_{F^{(1)},F^{(1)\prime}}|)
      }{t}
      +\tfrac{t\n\lambda_\n^2}{2}.
\end{align*}
Optimizing over $t$ and using
Eqn.~\eqref{eq:ub-node-net-cardinality},
\begin{equation}\label{eq:ub-node-net-maximum}
    \ex_\epsilon
      \max_{\pi\in\Pi_{F^{(1)},F^{(1)\prime}}}
      |T_\pi(\epsilon)|\ \leq \ 
      \lambda_\n
      \sqrt{2\n\log(2|\Pi_{F^{(1)},F^{(1)\prime}}|)}
      \ \leq\  C\lambda_\n\sqrt{\n\vartheta_\n}.
\end{equation}
Since $r_\pi=m_\pi-S_\pi^0$, approximating each residual costs
at most $2B_\J/\n$. Comparing the two complementary hybrid
arrays therefore gives
\[
    |T_\pi(\epsilon)-T_{\pi'}(\epsilon)|
    \leq\frac{4B_\J}{\n}.
\]
Combining this inequality with
Eqns.~\eqref{eq:ub-node-ghost-comparison},
\eqref{eq:ub-swap-symmetrization}, and
\eqref{eq:ub-node-net-maximum}, and then substituting
Eqn.~\eqref{eq:ub-residual-node-increment}, yields
\begin{align}
    &\ex\sup_{\pi\in\Pi}
      \left|
          r_\pi(F^{(1)})-\ex[r_\pi(F^{(1)})]
      \right|\ \leq \ 
      C\lambda_\n\sqrt{\n\vartheta_\n}
      +\frac{4B_\J}{\n}\nonumber\\
    &\quad\leq
      CB_\J d_\n^{\mathrm{sc}}
      \left\{
          \rho_\n\sqrt{\n\vartheta_\n}
          +\frac{\vartheta_\n}{\n}
      \right\},
      \qquad d_\n^{\mathrm{sc}}\geq1.
      \label{eq:ub-node-bound}
\end{align}
The approximation remainder is absorbed using
$d_\n^{\mathrm{sc}}\geq1$ and $\vartheta_\n\geq1$.
As in the entropy argument, the proof can first be applied to
finite subsets of the countable approximating class in
Assumption~\ref{asmp:Nat}. The bounds are uniform in their
cardinality, and the same pointwise approximation and dominated
convergence argument extends the result to $\Pi$.

\end{proof}

\label{app:upper-auxiliary-proofs}

\subsection{Auxiliary lower-bound lemmas}
\label{app:lower-auxiliary}

\begin{auxlemma}[Adjacent-alternative divergence]
\label{lem:lb-divergence}
Let $\mathbb Q_{\nu,\delta}$ be the law of the complete observed
pilot data and independent auxiliary randomization in the binary
subexperiment, and let $\nu^{(r)}$ reverse the $r$th sign of $\nu$.
Under Assumptions in Theorem~\ref{thm:lower-new}, for $0<\delta\leq1/4$,
\begin{equation*}
 KL(\mathbb Q_{\nu,\delta},\mathbb Q_{\nu^{(r)},\delta})
 =\frac{\n\alpha p_\n}{m}
      \operatorname{kl}(1/2+\delta,1/2-\delta) \leq\frac{16\n\alpha p_\n\delta^2}{m}.
\end{equation*}
Here $KL$ denotes Kullback--Leibler divergence, and
$\operatorname{kl}$ is its Bernoulli specialization, using natural
logarithms.
\end{auxlemma}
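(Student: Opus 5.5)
The plan is to compute the divergence exactly using the product structure of the pilot sample, and then to bound the resulting Bernoulli divergence elementarily.

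First, I factor the law $\mathbb Q_{\nu,\delta}$. The auxiliary randomization $\zeta$ is independent of the data and has the same law under every $\nu$. The chain rule for KL divergence therefore removes it. Under the construction in the proof of Theorem~\ref{thm:lower-new}, the pilot graph is empty almost surely, so $V^{(1)}=v_\varnothing^{(1)}(\bX{1})$ is a deterministic function of the covariates. The complete data are then equivalent to the $\n$ i.i.d. triples $(\X{1}_i,T_i^{(1)},\Y{1}_i)$, and additivity of KL over independent coordinates gives
\[
KL(\mathbb Q_{\nu,\delta},\mathbb Q_{\nu^{(r)},\delta})=\n\,KL\bigl(P^{\nu}_{(X,T,Y)},P^{\nu^{(r)}}_{(X,T,Y)}\bigr).
\]

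Next, I apply the conditional KL chain rule to a single triple. The law of $(X,T)$ is $F_X\otimes\mathrm{Bernoulli}(p_\n)$, which does not depend on $\nu$. The conditional divergence of $Y$ given $(X,T)=(x,t)$ vanishes in every cell except $x=x_r$, $t=1$:
\begin{enumerate}
\item[(i)] for $x=x_0$, or for $t=0$, the outcome is identically zero under both alternatives;
\item[(ii)] for $x=x_{r'}$ with $r'\neq r$, the two sign vectors agree in coordinate $r'$;
\item[(iii)] in the remaining cell $x=x_r$, $t=1$, the observed outcome is $\Y{1}_i(c_{\varnothing,1})$.
\end{enumerate}
In cell (iii), $(\Y{1}_i/M+1)/2$ is $\mathrm{Bernoulli}(1/2+\delta\nu_r)$ under $\nu$ and $\mathrm{Bernoulli}(1/2-\delta\nu_r)$ under $\nu^{(r)}$, using Assumption~\ref{asmp2::consistent} together with $E^{(1)}_i=c_{\varnothing,T_i}$ from Assumption~\ref{asmp:ub-locality}.a. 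That cell has probability $(\alpha/m)p_\n$. By the symmetry $\operatorname{kl}(1/2-\delta,1/2+\delta)=\operatorname{kl}(1/2+\delta,1/2-\delta)$, the per-triple divergence is $\frac{\alpha p_\n}{m}\operatorname{kl}(1/2+\delta,1/2-\delta)$, which gives the stated equality.

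Finally, I bound the Bernoulli term:
\[
\operatorname{kl}(1/2+\delta,1/2-\delta)=2\delta\log\frac{1+2\delta}{1-2\delta}.
\]
There are two routes. One bounds KL by $\chi^2$, giving $\operatorname{kl}\le \frac{(2\delta)^2}{(1/2-\delta)(1/2+\delta)}=\frac{16\delta^2}{1-4\delta^2}$. The other uses $\log\frac{1+u}{1-u}\le \frac{2u}{1-u^2}$ with $u=2\delta$, giving $\operatorname{kl}\le\frac{8\delta^2}{1-4\delta^2}$. With $\delta\le1/4$ we have $1-4\delta^2\ge3/4$, so the second route gives $\operatorname{kl}\le 32\delta^2/3\le16\delta^2$. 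The first route gives only $64\delta^2/3$, so I will use the second inequality.

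I expect no step to be a genuine obstacle. The points needing care are these:
\begin{enumerate}
\item[(a)] Check that the statistician's observables carry no $\nu$-information beyond the triples. The propensities, $s$, and the kernels $\overline q$ are all common across alternatives, since $P_V$ and the outcome-free primitives do not depend on $\nu$.
\item[(b)] Verify that the observed outcome in each cell is the correct potential outcome.
\item[(c)] Confirm the constant in the logarithmic inequality on $[0,1/2]$.
\end{enumerate}
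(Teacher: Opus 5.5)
Your proposal is correct and follows the paper's proof step for step. The shared steps are: reducing to the $\n$ i.i.d.\ triples $(\X{1}_i,T_i^{(1)},\Y{1}_i)$ (the design, transport weights, kernels, and auxiliary randomization are common across alternatives and contribute nothing), isolating the single cell $\{X=x_r,\,T=1\}$ of probability $\alpha p_\n/m$, and using the symmetry of $\operatorname{kl}$ under complementary Bernoulli parameters. The only difference is the elementary logarithmic bound. The paper applies $\log(1+x)\le x$ to $\frac{1+2\delta}{1-2\delta}=1+\frac{4\delta}{1-2\delta}$ to get $\frac{8\delta^2}{1-2\delta}\le16\delta^2$. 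Your inequality $\log\frac{1+u}{1-u}\le\frac{2u}{1-u^2}$ gives the slightly sharper $\frac{8\delta^2}{1-4\delta^2}\le\frac{32}{3}\delta^2$, and both routes are valid.
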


\begin{proof}
Known information is common to all alternatives and adds
no divergence. Although observed network features may depend on the full
covariate vector, they are a common deterministic function of that vector
in this submodel. The pilot data contain all the triples
$(\X{1}_i,T_i^{(1)},\Y{1}_i)$, which are i.i.d. under the construction.
Hence the full-data divergence is exactly the divergence of these triples;
independent auxiliary randomization leaves it unchanged.

Only
observations with $\X{1}_i=x_r$ and $T_i^{(1)}=1$ have a distribution
that differs between adjacent alternatives. This event has probability
$\alpha p_\n/m$. Then
\[
 KL(\mathbb Q_{\nu,\delta},\mathbb Q_{\nu^{(r)},\delta})
 =\frac{\n\alpha p_\n}{m}
       \operatorname{kl}(1/2+\delta,1/2-\delta).
\]
The same expression applies for either sign of $\nu_r$, since the two
Bernoulli probabilities are complementary. For $\delta\leq1/4$,
\[
    \operatorname{kl}(1/2+\delta,1/2-\delta)
    =2\delta\log\!\left(\frac{1+2\delta}{1-2\delta}\right)
    \leq\frac{8\delta^2}{1-2\delta}
    \leq16\delta^2.
\]
Substitution proves Eqn.~\eqref{eq:lb-adjacent-KL}.
Only the treated-exposure distribution is perturbed; control outcomes
contain no information about the labels.

\end{proof}

\begin{auxlemma}[Assouad testing reduction]
\label{lem:assouad-testing}
Let $\{\mathbb Q_\nu:\nu\in N_m\}$ be a family of laws
on a common observation space. Under Assumptions in Theorem~\ref{thm:lower-new}, if
\(
 KL(\mathbb Q_\nu,\mathbb Q_{\nu^{(r)}})\leq\frac12,\text{ for all }\nu,r,
\)
then every measurable estimator $\widetilde\nu$ of the sign vector
satisfies
\begin{equation*}
 \frac1{2^m}\sum_{\nu\in N_m}
       \sum_{r=1}^{m}\mathbb Q_\nu
                  \{\widetilde\nu_r\neq\nu_r\}
 \geq\frac m4.
\end{equation*}
\end{auxlemma}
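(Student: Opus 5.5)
The plan is to follow the standard Assouad argument: reduce the averaged sum of coordinate testing errors to a sum of pairwise tests between adjacent hypotheses, then control each pairwise test through total variation, bounded in turn by the KL hypothesis via Pinsker's inequality.

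\emph{Step 1: decompose by coordinate.} For each coordinate $r$, pair every $\nu$ with $\nu^{(r)}$, which partitions $N_m$ into $2^{m-1}$ pairs. This gives
\[
\frac1{2^m}\sum_{\nu\in N_m}\mathbb Q_\nu\{\widetilde\nu_r\neq\nu_r\}
=\frac1{2^{m}}\sum_{\nu:\nu_r=1}\Bigl[\mathbb Q_\nu\{\widetilde\nu_r\neq1\}+\mathbb Q_{\nu^{(r)}}\{\widetilde\nu_r\neq-1\}\Bigr].
\]

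\emph{Step 2: bound each bracket.} Each bracket is the sum of the type I and type II errors of the test $\widetilde\nu_r$ between $\mathbb Q_\nu$ and $\mathbb Q_{\nu^{(r)}}$. By the Neyman--Pearson/Le Cam inequality, such a sum is at least $1-\|\mathbb Q_\nu-\mathbb Q_{\nu^{(r)}}\|_{TV}$. Measurability of $\widetilde\nu$ makes $\{\widetilde\nu_r=1\}$ an event, so this step is valid for arbitrary estimators, including randomized ones, because the auxiliary randomization is already part of the observation space.

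\emph{Step 3: apply Pinsker's inequality.} Pinsker gives $\|\mathbb Q_\nu-\mathbb Q_{\nu^{(r)}}\|_{TV}\leq\sqrt{KL/2}\leq\sqrt{1/4}=1/2$. Each bracket is therefore at least $1/2$. Averaging over the $2^{m-1}$ pairs gives $\frac{2^{m-1}}{2^m}\cdot\frac12=\frac14$ per coordinate, and summing over $r=1,\ldots,m$ yields $m/4$.

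The only point needing care is the direction of KL in Pinsker, which is harmless because Pinsker holds for either ordering and the hypothesis is stated for all ordered pairs $(\nu,r)$. I also need the total-variation convention $\sup_A|P(A)-Q(A)|$ to match the constant in Pinsker's inequality. The assumptions of Theorem~\ref{thm:lower-new} enter only through the KL bound, which Step 5 of that proof verifies via Lemma~\ref{lem:lb-divergence}; otherwise the argument is purely information-theoretic.
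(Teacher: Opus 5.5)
Your proposal is correct and follows the paper's proof essentially step for step. Both arguments use Pinsker to obtain total variation at most $1/2$, apply the two-point testing inequality to each adjacent pair $(\nu,\nu^{(r)})$, sum over the $2^{m-1}$ disjoint pairs for each coordinate and divide by $2^m$ to get $1/4$, and then sum over $r=1,\ldots,m$ to reach $m/4$.
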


\begin{proof}
For total variation defined by $\|P-Q\|_{TV} := \sup_A|P(A)-Q(A)|$, Pinsker's inequality \citep[Lemma~2.5(i), p.~88]{tsybakov2009introduction} gives
$\|\mathbb Q_\nu-\mathbb Q_{\nu^{(r)}}\|_{TV}\leq1/2$.
The two-point testing inequality then implies
\[
 \mathbb Q_\nu\{\widetilde\nu_r\neq\nu_r\}
 +\mathbb Q_{\nu^{(r)}}
                \{\widetilde\nu_r\neq\nu^{(r)}_r\}
 \geq1-\|\mathbb Q_\nu-\mathbb Q_{\nu^{(r)}}\|_{TV}
 \geq\frac12.
\]
For a fixed coordinate, sum over its $2^{m-1}$ disjoint pairs and divide
by $2^m$. This gives average error at least $1/4$ for that coordinate.
Summing over $r=1,\ldots,m$ proves Eqn.~\eqref{eq:lb-testing-errors},
the coordinate-wise Assouad bound \citep{yu1997assouad} see also
\citet[Lemma~2.12 and Theorem~2.12(ii), pp.~117--118]
{tsybakov2009introduction}. 
\end{proof}

\section{Empirical Welfare with a Local-State Cutoff}
\label{app:cutoff-welfare}

A local-state cutoff trades improved exposure overlap against omitted
welfare. The target welfare $\W$, the policy class $\Pi$, and the feasible
allocation set $\Tc$ remain unchanged. The analysis uses the same maintained
model as the main text; its positive finite-sample overlap bound may
converge to zero along a sequence of population sizes.

\subsection{Retained welfare and identification}

\begin{definition}[Empirical welfare with a local-state cutoff]
\label{def:cutoff-welfare}
Let $h:\mathcal L\to\{0,1\}$ be fixed before observing the pilot sample, and invariant across policies. Define
\begin{align}
 \Upsilon_i^h(Z_i;\pi)&:=h(L_i^{(1)})\Upsilon_i(Z_i;\pi),
    \label{eq:cutoff-score}\\
 \widehat W_{\n,h}(\pi)&:=\frac1\n\sum_{i=1}^{\n}\Upsilon_i^h(Z_i;\pi),
    \label{eq:cutoff-empirical}\\
 W_h(\pi;\p)&:=\ex_\p\left[\frac1\J\sum_{j=1}^{\J}
                h(L_j^{(0)})Y_j^{(0)}(E_j^\pi)\right],
    \label{eq:cutoff-population}\\
 \beta_h(\p)&:=\int(1-h(\ell))P_L^{(0)}(d\ell).
    \label{eq:cutoff-target-mass}
\end{align}
An omitted contribution is set to zero before division. An empirical
maximizer is a measurable selection
\begin{equation}\label{eq:cutoff-EWM}
 \widehat\pi_h\in\argmax_{\pi\in\Pi}\widehat W_{\n,h}(\pi),
\end{equation}
which is assumed to exist and be attained.
\end{definition}

The cutoff removes contributions to welfare, rather than units or links. Exposures are
computed on the original network, and normalization remains by $\n$, not
by the number retained. A rule may treat an omitted unit to benefit its
retained neighbors. Unlike trimming that redefines the target population
\citep{crump2009limited}, the regret below is evaluated under the same welfare as before $\W$.

\begin{auxlemma}[Identification and approximation of retained welfare]
\label{lem:retained-identification}
Under the conditions of Theorem~\ref{lem::iden_welfare}, for every fixed
$h$, policy $\pi$, and pilot unit $i$,
\begin{align}
 \ex_\p[\Upsilon_i^h(Z_i;\pi)]&=W_h(\pi;\p),
    \label{eq:cutoff-identification}\\
 \sup_{\pi\in\Pi}|W(\pi;\p)-W_h(\pi;\p)|&\leq M\beta_h(\p).
    \label{eq:cutoff-bias}
\end{align}
\end{auxlemma}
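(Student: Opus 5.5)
The plan is to rerun the proof of Theorem~\ref{lem::iden_welfare} with the weight $h(\ell)$ carried through every step. The bias bound then follows from a direct comparison of the two welfare definitions.

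\emph{Identification.} Fix $h$, $\pi$, and pilot unit $i$. Because $h$ is a fixed measurable function of $L_i^{(1)}$, conditioning on $L_i^{(1)}=\ell$ turns $h$ into the constant $h(\ell)$. The conditional IPW identity from the proof of Theorem~\ref{lem::iden_welfare} then gives
\[
\ex\bigl[\Upsilon_i^h(Z_i;\pi)\mid L_i^{(1)}=\ell\bigr]
=h(\ell)\sum_{c\in\cC^{(0)}}\mu_c^{(1)}(\ell)\,s(\ell)\,\overline q(\ell,c;\pi)
\]
at policy-relevant pairs. This identity uses consistency, the unconfoundedness in Eqn.~\eqref{eqn2::strict_unconfoundedness}, overlap on $\mathcal E_\pi(\p)$, and Assumption~\ref{asmp2::identical}.b. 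Irrelevant pairs contribute zero on both sides by the convention that omitted contributions are set to zero before division.

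Integrating against $P_L^{(1)}$, I would use the change of measure $s\,dP_L^{(1)}=dP_L^{(0)}$ together with stability to replace $\mu^{(1)}$ by $\mu^{(0)}$. Next, Assumption~\ref{asmp2::identical}.a and the definition of $\overline q$ express the result as
\[
\frac1\J\sum_j\ex\Bigl[h(L_j^{(0)})\sum_c\mu_c^{(0)}(L_j^{(0)})\,q_j(L_j^{(0)},c;\pi)\Bigr].
\]
Finally, iterated expectations apply: first condition on $L_j^{(0)}$, then use Assumption~\ref{asmp2::unconfoundedness}.a to pass from $V$ to $L_j^{(0)}$, since $h(L_j^{(0)})$ is $L_j^{(0)}$-measurable and can be pulled out. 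This yields $\ex[h(L_j^{(0)})Y_j^{(0)}(E_j^\pi)]$, which is $W_h(\pi;\p)$ after averaging over $j$.

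\emph{Bias.} For any $\pi$,
\[
W(\pi;\p)-W_h(\pi;\p)=\frac1\J\sum_j\ex\bigl[(1-h(L_j^{(0)}))\,Y_j^{(0)}(E_j^\pi)\bigr].
\]
Each summand is bounded in absolute value by $M\,\ex[1-h(L_j^{(0)})]$, using $|Y|\le M$ from Assumption~\ref{asmp2::pot_out}.b. By Assumption~\ref{asmp2::identical}.a this expectation equals $\beta_h(\p)$ for every $j$. The bound does not depend on $\pi$, so taking the supremum over $\Pi$ gives Eqn.~\eqref{eq:cutoff-bias}.

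The main point requiring care is the measurability bookkeeping in the transport step. The cutoff $h$ must be evaluated at the same local state in both populations, which is guaranteed because both populations' local states are embedded in the common space $\mathcal L$. I must also check that multiplying by $h$ does not enlarge the set of pairs where overlap is needed. It does not: $h\in\{0,1\}$ can only remove pairs from $\mathcal E_\pi(\p)$, so Assumption~\ref{asmp2::overlap} still suffices.
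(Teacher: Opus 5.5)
Your proposal is correct and is essentially the paper's proof: you carry $h(\ell)$ through the integrand of the identification argument in Theorem~\ref{lem::iden_welfare}, where only policy-relevant pairs with $h(\ell)=1$ enter, so the existing overlap suffices. You then bound the bias using $|Y_j^{(0)}(c)|\le M$ and the common target local-state law, which gives $\ex[1-h(L_j^{(0)})]=\beta_h(\p)$ for every $j$. One phrasing to tighten is the last identification step: condition on $V$ first, since $E_j^\pi$ is $V$-measurable, and then apply Assumption~\ref{asmp2::unconfoundedness}.a to replace $\ex[Y_j^{(0)}(c)\mid V]$ by $\mu_c^{(0)}(L_j^{(0)})$.
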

\begin{proof}
The identification argument in Theorem~\ref{lem::iden_welfare}, with
$h(\ell)$ included in the integrand, gives
\[
 \ex_\p\Upsilon_i^h
 =\sum_{c\in\cC^{(0)}}\int h(\ell)\mu_c^{(1)}(\ell)
       s(\ell)\overline q(\ell,c;\pi)P_L^{(1)}(d\ell)
 =W_h(\pi;\p).
\]
Only pairs in $\mathcal E_\pi(\p)$ with $h(\ell)=1$ enter the identity.
For every policy, bounded outcomes give
\[
 |W(\pi;\p)-W_h(\pi;\p)|
 \leq M\ex_\p\left[\frac1\J\sum_j(1-h(L_j^{(0)}))\right]
 =M\beta_h(\p).
\]
\end{proof}

\subsection{Regret under the original welfare criterion}

For a nonempty subclass $\Pc_\star\subseteq\Pc$, write
\[
 R_\star(\widetilde\pi):=
 \sup_{\p\in\Pc_\star}\ex_\p\left[
       \sup_{\pi\in\Pi}W(\pi;\p)-W(\widetilde\pi;\p)\right].
\]

\begin{theorem}[Upper bound for cutoff empirical welfare]
\label{thm:cutoff-upper}
Suppose the conditions of Theorem~\ref{thm:upper-new} hold
uniformly on $\Pc_\star$, and fix a retention function $h$ common to this
subclass. Suppose $\gamma_{\J,h}>0$ satisfies Assumption~\ref{asmp2::overlap}, for every $\p\in\Pc_\star$,
$i$, $c$, and $\pi\in\Pi$,
\begin{equation}\label{eq:cutoff-overlap}
 e_{i,c}(\ell)\geq\gamma_{\J,h}
 \quad\text{for }P_L^{(1)}\text{-almost every }\ell
 \text{ with }(\ell,c)\in\mathcal E_\pi(\p),\ h(\ell)=1.
\end{equation}
Let $\overline\beta_h:=\sup_{\p\in\Pc_\star}\beta_h(\p)$. There is a
universal constant $C$ such that every measurable maximizer in
\eqref{eq:cutoff-EWM} satisfies
\begin{equation}\label{eq:cutoff-upper}
 R_\star(\widehat\pi_h)
 \leq 2M\wedge\Bigg\{2M\overline\beta_h+CM\kappa\Bigg[
 \sqrt{\frac{a_\Pi}{\n\gamma_{\J,h}}}+\frac{d_\n^{\mathrm{sc}}}{\gamma_{\J,h}}
 \left(\sqrt{\rho_\n\vartheta_\n}
       +\rho_\n\sqrt{\n\vartheta_\n}
       +\frac{\vartheta_\n}{\n}\right)\Bigg]\Bigg\}.
\end{equation}
The network term is omitted if $\rho_\n=0$ or $d_\n^{\mathrm{sc}}=0$.
\end{theorem}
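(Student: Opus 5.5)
The plan is to reduce Theorem~\ref{thm:cutoff-upper} to the proof of Theorem~\ref{thm:upper-new}, applied to the retained score $\Upsilon_i^h$, and then to pay separately for the omitted welfare. Fix $\p\in\Pc_\star$. The first step is the usual empirical-welfare comparison from Step~1 of the proof of Theorem~\ref{thm:upper-new}, written relative to the original criterion $\W$. For any $\pi\in\Pi$,
\[
 \W(\pi;\p)-\W(\widehat\pi_h;\p)
 \leq \bigl[\W(\pi;\p)-W_h(\pi;\p)\bigr]+\bigl[W_h(\widehat\pi_h;\p)-\W(\widehat\pi_h;\p)\bigr]
 +\bigl[W_h(\pi;\p)-W_h(\widehat\pi_h;\p)\bigr].
\]
By Eqn.~\eqref{eq:cutoff-bias} in Lemma~\ref{lem:retained-identification}, the first two brackets are each at most $M\beta_h(\p)\leq M\overline\beta_h$. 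The third bracket is at most $2\sup_{\pi}|\widehat W_{\n,h}(\pi)-W_h(\pi;\p)|$, because $\widehat\pi_h$ maximizes $\widehat W_{\n,h}$. Taking the supremum over $\pi$, then expectations, then the supremum over $\Pc_\star$, gives $R_\star(\widehat\pi_h)\leq 2M\overline\beta_h+2\sup_{\p}\ex\sup_\pi|\widehat W_{\n,h}(\pi)-W_h(\pi;\p)|$. The trivial bound $2M$ again follows from $|\W|\leq M$.

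The second step bounds the uniform deviation of the retained process by rerunning Step~2 and Step~3 of the proof of Theorem~\ref{thm:upper-new} with $S_\pi$ replaced by $S^h_\pi(F^{(1)},A):=\n^{-1}\sum_i h(L_i(F^{(1)},A))\Upsilon_i(\cdot)$. By Eqn.~\eqref{eq:cutoff-identification}, $\ex S^h_\pi=W_h(\pi;\p)$. This gives the same three-term decomposition into an independent benchmark, dyadic link variation, and network-mediated node variation.

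The key observation is that the only place overlap enters is the envelope. Each retained summand is nonzero only when $h(L_i)=1$ and $(L_i,c)$ is policy relevant, so Eqn.~\eqref{eq:cutoff-overlap} yields the envelope $B_{\J,h}=M\kappa/\gamma_{\J,h}$. In Lemma~\ref{lem:independent-benchmark}, the $L_2$ bound becomes $\ex(w_i^{0,h})^2\leq 2M^2\kappa^2/\gamma_{\J,h}$.

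The entropy control is unchanged. Since $h$ is fixed and common across policies, multiplying by $h(\ell)$ is a contraction in $L_2(R)$, so Lemma~\ref{lem:kernel-entropy} and the finite-state nets at radius $1/\n$ apply verbatim with the same $a_\Pi,\vartheta_\n$.

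The step I expect to be the main obstacle is checking that the locality bounds survive the multiplication by $h$. Because $h$ depends on the local state, toggling a dyad can switch a summand between retained and omitted. I would handle this by noting that, by Assumption~\ref{asmp:ub-locality}.b, such a switch can only occur for one of the at most $d_\n^{\mathrm{sc}}$ nodes whose pair $(L_i,E_i)$ changes. Each such summand changes by at most $2B_{\J,h}$, so Eqn.~\eqref{eq:ub-edge-increment} holds with $B_{\J,h}$. Similarly, the isolation cancellation in Eqn.~\eqref{eq:ub-isolation-cancellation} holds because an isolated node's state is $\ell_\varnothing(x_k)$ in both terms of each difference.

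Lemmas~\ref{lem:dyadic-variation} and~\ref{lem:node-variation} then go through unchanged with $\gamma_\J$ replaced by $\gamma_{\J,h}$. If $\rho_\n=0$ or $d_\n^{\mathrm{sc}}=0$, the network terms vanish as before. Summing the three bounds and substituting into the first step yields Eqn.~\eqref{eq:cutoff-upper}.
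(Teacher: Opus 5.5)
Your proposal is correct and follows essentially the same route as the paper. It uses the same add-and-subtract step with Lemma~\ref{lem:retained-identification} to produce the $2M\overline\beta_h$ term, then reruns the three-term decomposition of Theorem~\ref{thm:upper-new} with envelope $M\kappa/\gamma_{\J,h}$, noting that a dyad can change a retention indicator only by changing the local state. You argue the envelope summand by summand through policy relevance, whereas the paper makes it explicit with the policy-independent truncated coefficient $a_i^h(\ell,c)=h(\ell)s(\ell)/e_{i,c}(\ell)$, set to zero when $s(\ell)=0$ or $e_{i,c}(\ell)<\gamma_{\J,h}$; this guarantees the bound $M\kappa/\gamma_{\J,h}$ at every local state produced by the empty-graph, dyad-perturbation and hybrid-array evaluations, not only $P_L^{(1)}$-almost everywhere at relevant pairs.
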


\begin{proof}
For each realized sample and each comparison policy, add and subtract
$W_h$ and $\widehat W_{\n,h}$. Exact empirical maximization and the preceding
lemma imply
\begin{equation}\label{eq:cutoff-policy-learning}
 R_\star(\widehat\pi_h)
 \leq 2M\overline\beta_h+
 2\sup_{\p\in\Pc_\star}\ex_\p\sup_{\pi\in\Pi}
       |\widehat W_{\n,h}(\pi)-W_h(\pi;\p)|.
\end{equation}
For $\ell=(x,C)$, use the policy-independent coefficients
\[
 a_i^h(\ell,c):=
 \begin{cases}
 h(\ell)s(\ell)/e_{i,c}(\ell),&
 c\in C\cap\cC^{(0)},\ s(\ell)>0,
 e_{i,c}(\ell)\geq\gamma_{\J,h},\\
 0,&\text{otherwise}.
 \end{cases}
\]
Analogues arguments to Theorem~\ref{thm:upper-new} imply that for $B_\J=M\kappa/\gamma_{\J,h}$: 
\begin{equation*}\label{eq:aux-local-coefficients}
 \Y{1}_i(E_i(F^{(1)},A))a_i^h(L_i(F^{(1)},A),E_i(F^{(1)},A)) ~\leq~ B_\J \quad\text{for all }i, \text{ almost surely}.
\end{equation*}

Apply the decomposition and the three component arguments in
Lemmas~\ref{lem:independent-benchmark}, \ref{lem:dyadic-variation}, and
\ref{lem:node-variation}, with these coefficients and
$\gamma_{\J,h}$ in place of $\gamma_\J$. Identification centers the
process at $W_h$. The resulting concentration bound is the bracket in
\eqref{eq:cutoff-upper}, times $CM\kappa$. A dyad can change a retention
indicator only by changing its local state, so the same full-pair locality
bound and isolation cancellation apply. Substitute into
\eqref{eq:cutoff-policy-learning} and use the welfare cap $2M$.
\end{proof}

The case $h\equiv1$ recovers the regular bound. Define
\begin{equation}\label{eq:cutoff-stochastic-scale}
 r_{\n,h}:=\kappa\left[
 \sqrt{\frac{a_\Pi}{\n\gamma_{\J,h}}}
 +\frac{d_\n^{\mathrm{sc}}}{\gamma_{\J,h}}
 \left(\sqrt{\rho_\n\vartheta_\n}
       +\rho_\n\sqrt{\n\vartheta_\n}
       +\frac{\vartheta_\n}{\n}\right)\right],
\end{equation}
with the same zero-network convention. For deterministic $h_\n$,
$r_{\n,h_\n}\to0$ and $\overline\beta_{h_\n}\to0$ imply consistency.
The stronger condition $\overline\beta_{h_\n}=o(r_{\n,h_\n})$ makes
omitted welfare negligible relative to this stochastic upper bound.

\subsection{Degree cutoffs under LSE}

Suppose both populations use LSE. Degree is encoded in the local state
$\ell=(x,C(d))$, where $C(d)=\{0,1\}\times\{0,\ldots,d\}\times\{d\}$.
For deterministic $0\leq k\leq\min\{\n-1,\J-1\}$, set
\begin{equation}\label{eq:degree-cutoff-definition}
 h_k(\ell):=\one\{\operatorname{deg}(\ell)\leq k\},
 \qquad \widehat\pi_k:=\widehat\pi_{h_k},
\end{equation}
with zero extension outside LSE local states. For fixed
$0<p_0\leq1/2$, let
\begin{equation}\label{eq:cutoff-balanced-class}
 \Pc(p_0):=\{\p\in\Pc:p_0\leq p_\n\leq1-p_0\},
 \qquad R_{p_0}:=R_\star\ \text{with }\Pc_\star=\Pc(p_0).
\end{equation}
Define
\begin{equation}\label{eq:degree-tail-bound}
 b_{\n,k}:=
 \begin{cases}
 0,&k\geq\J-1,\\
 1\wedge\kappa\Pr\{\operatorname{Binomial}(\n-1,\rho_\n)>k\},&k<\J-1.
 \end{cases}
\end{equation}
The second branch is also zero at $k=\n-1$.

\begin{cor}[Sparse LSE with a degree cutoff]
\label{cor:degree-cutoff}
Under the preceding conditions,
\begin{equation}\label{eq:degree-cutoff-inputs}
 \gamma_{\J,h_k}=p_0^{k+1},
 \qquad \sup_{\p\in\Pc(p_0)}\beta_{h_k}(\p)\leq b_{\n,k}.
\end{equation}
If $(\n-1)\rho_\n\leq\bar\lambda<\infty$, then for $\n\geq2$,
\begin{equation}\label{eq:degree-cutoff-upper}
 R_{p_0}(\widehat\pi_k)
 \leq2M\wedge\left\{2Mb_{\n,k}+CM\kappa\left[
 \sqrt{\frac{a_\Pi}{\n p_0^{k+1}}}
 +p_0^{-(k+1)}\left(
       \sqrt{\frac{\vartheta_\n}{\n}}+\frac{\vartheta_\n}{\n}\right)
 \right]\right\},
\end{equation}
where $C$ depends only on $\bar\lambda$. If $k+1>\bar\lambda>0$,
\begin{equation}\label{eq:degree-chernoff}
 b_{\n,k}\leq1\wedge\kappa
       \left(\frac{e\bar\lambda}{k+1}\right)^{k+1}.
\end{equation}
If $\bar\lambda=0$, omitted target mass is zero for every $k\geq0$.
\end{cor}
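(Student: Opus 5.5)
The corollary follows by plugging two inputs into Theorem~\ref{thm:cutoff-upper} with $h=h_k$: an overlap constant $\gamma_{\J,h_k}$ on retained policy-relevant pairs, and a bound on the omitted target mass $\beta_{h_k}$. After that, the result is a sparse-regime simplification like the one in Corollary~\ref{cor:sparse-bdd-degree}.

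\emph{Step 1 (overlap).} Fix $\p\in\Pc(p_0)$. I will rerun the propensity computation of Example~\ref{ex:prop_score}. For a retained state $\ell=(x,C(d))$ with $d\leq k$ and any $c=(a,b,d)\in C(d)$, independent Bernoulli assignment gives
\[
e_{i,c}(\ell)\geq\underline p_\n^{\,d+1}\geq p_0^{d+1}\geq p_0^{k+1},
\]
since $\underline p_\n\geq p_0$ and $p_0\leq1/2<1$. Hence \eqref{eq:cutoff-overlap} holds with $\gamma_{\J,h_k}=p_0^{k+1}$, uniformly in $\p$, $i$ and $\pi$. This is the point of the cutoff: the old bound $p_0^{\J}$ is replaced by one that depends only on $k$.

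\emph{Step 2 (omitted mass).} If $k\geq\J-1$, every target degree is at most $\J-1\leq k$, so $h_k(L_j^{(0)})=1$ and $\beta_{h_k}=0$.

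If $k<\J-1$, I will write $\beta_{h_k}(\p)=\int(1-h_k)\,dP_L^{(0)}$ and change measure with $s\leq\kappa$ (Assumption~\ref{asmp2::support}). This gives
\[
\beta_{h_k}(\p)\leq\kappa\,P_L^{(1)}\bigl(\deg>k\bigr).
\]
Under the graphon model, conditional on node $i$'s fundamentals, its pilot degree is a sum of $\n-1$ independent Bernoullis with success probabilities at most $\rho_\n$. It is therefore stochastically dominated by $\mathrm{Binomial}(\n-1,\rho_\n)$, and this domination survives integrating out the fundamentals. Since $\beta_h\leq1$ trivially, we get $\beta_{h_k}(\p)\leq b_{\n,k}$. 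At $k=\n-1$ the tail event is empty, so the bound is zero there as well.

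\emph{Step 3 (regret bound).} I will substitute Steps 1--2 into \eqref{eq:cutoff-upper}, use $(\n-1)\rho_\n\leq\bar\lambda$, and simplify the network terms. Since $\n\geq2$ gives $\rho_\n\leq2\bar\lambda/\n$:
\[
\sqrt{\rho_\n\vartheta_\n}\lesssim\sqrt{\bar\lambda\,\vartheta_\n/\n},
\qquad
\rho_\n\sqrt{\n\vartheta_\n}\lesssim\bar\lambda\sqrt{\vartheta_\n/\n}.
\]
Both are absorbed into $C\sqrt{\vartheta_\n/\n}$ with $C$ depending on $\bar\lambda$. The factor $d_\n^{\mathrm{sc}}=2$ under LSE goes into $C$ as well. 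The $\kappa$ outside the bracket stays as written in \eqref{eq:degree-cutoff-upper}, and the $2M\overline\beta_h$ term becomes $2Mb_{\n,k}$.

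\emph{Step 4 (tail bound).} For \eqref{eq:degree-chernoff} I will apply the Chernoff bound to $D\sim\mathrm{Binomial}(\n-1,\rho_\n)$, which has mean $\mu\leq\bar\lambda$:
\[
\Pr\{D\geq k+1\}\leq e^{-\mu}\left(\frac{e\mu}{k+1}\right)^{k+1}\leq\left(\frac{e\bar\lambda}{k+1}\right)^{k+1}.
\]
This is valid because $k+1>\bar\lambda\geq\mu$, and the middle expression is increasing in $\mu$ on $\mu<k+1$. If $\bar\lambda=0$, then $\rho_\n=0$, every degree is zero, and $\beta_{h_k}=0$ for all $k\geq0$.

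The main obstacle is the stochastic-domination step in Step 2. It has to be justified carefully through the conditional construction of the dyads, and the change of measure must use the degree recorded in $C(d)$ within the common local-state space $\mathcal L$. The remaining steps are routine substitution.
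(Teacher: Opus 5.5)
Your proposal is correct and follows the paper's proof essentially step for step. Both arguments use the same propensity lower bound $p_0^{d+1}\geq p_0^{k+1}$ on retained states. Both bound the omitted mass by changing measure with $s\leq\kappa$ and dominating pilot degree by a binomial; the paper makes this pathwise via $S_i=\sum_{j\neq i}\one\{\xi^{(1)}_{\min\{i,j\},\max\{i,j\}}\leq\rho_\n\}$, which is exactly the coupling behind your stochastic-domination step. Both then apply an exponential-Markov tail bound and substitute $d_\n^{\mathrm{sc}}=2$ and $\n\rho_\n\leq2\bar\lambda$ into Theorem~\ref{thm:cutoff-upper}. The only cosmetic difference is in the tail bound: the paper plugs $\bar\lambda$ directly into $\ex e^{tS_i}\leq\exp\{\bar\lambda(e^t-1)\}$, whereas you bound at the true mean and then use monotonicity.
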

\begin{proof}
For retained $\ell=(x,C(d))$ and $c=(a,b,d)$,
\[
 e_{i,c}(\ell)=p_\n^a(1-p_\n)^{1-a}
          \binom dbp_\n^b(1-p_\n)^{d-b}
 \geq p_0^{d+1}\geq p_0^{k+1}.
\]
On the original probability space, pilot degree is bounded pathwise by
$S_i:=\sum_{j\neq i}\one\{\xi_{\min\{i,j\},\max\{i,j\}}^{(1)}\leq\rho_\n\}$,
which has the $\operatorname{Binomial}(\n-1,\rho_\n)$ law. Hence
\[
 \beta_{h_k}(\p)
 =\int\one\{\operatorname{deg}(\ell)>k\}s(\ell)P_L^{(1)}(d\ell)
 \leq\kappa\Pr(S_i>k).
\]
Target degree is at most $\J-1$, and the mass is at most one.
Exponential Markov inequality, using
$\ex e^{tS_i}\leq\exp\{\bar\lambda(e^t-1)\}$ and
$t=\log((k+1)/\bar\lambda)$, gives \eqref{eq:degree-chernoff}.
If $\bar\lambda=0$, support transfers the pilot's zero-degree property
to the target local-state law. Finally, LSE has $d_\n^{\mathrm{sc}}=2$
for $\n\geq2$, and $\n\rho_\n\leq2\bar\lambda$.
Substitution into Theorem~\ref{thm:cutoff-upper} proves the regret bound.
\end{proof}

\section{A Two-Period SIR Model with Independent Fundamentals}
\label{app:sir}

This section constructs a two-period susceptible--infectious--recovered (SIR) model satisfying Assumptions~\ref{asmp2::eff_treat}--\ref{asmp:ub-fundamentals}.
The model permits secondary transmission: a unit infected in the first period can infect another unit in the second. 

\subsection{Node primitives and potential contacts}

Fix a population $e\in\{0,1\}$ of size $n_e$, the population superscript is suppressed until it is needed. 
The evaluation dates are $s=0,1,2$, with two transmission periods, $0\to1$ and $1\to2$. 
For each unit $i$, let $I_{i,0}\in\{0,1\}$ indicate baseline infection and let $\omega_i\in(0,1)$ be an observed contact-priority characteristic.
Set
\[
 X_i=(I_{i,0},\omega_i),\qquad
 S_{i,0}=1-I_{i,0},\qquad R_{i,0}=0.
\]
The primitive vectors $(U_i,X_i,\varepsilon_i)$ are IID across units.
For a concrete specification, take $I_{i,0}\sim\mathrm{Bernoulli}(\alpha)$, $0<\alpha<1$, and let $U_i,\omega_i,\varepsilon_i$ be mutually independent $\mathrm{Unif}(0,1)$ variables, independent of $I_{i,0}$.
Baseline infections are treated as exogenous seeds, rather than outcomes of an earlier epidemic on the same network. 
The shock $\varepsilon_i$ is drawn once and held fixed across all counterfactual allocations.

Independently of node primitives, draw mutually independent $\xi_{ik}\sim\mathrm{Unif}(0,1)$ for $i<k$. 
For a common measurable symmetric graphon $g:\Xc^2\times[0,1]^2\to[0,1]$ and $0\leq\rho_{n_e}<1$:
\begin{equation}
\label{eq:app_sir2_graphon}
 A_{ik}
 :=\one\{\xi_{ik}\leq\rho_{n_e}g(X_i,X_k,U_i,U_k)\},
 \qquad A_{ki}=A_{ik},\qquad A_{ii}=0.
\end{equation}
Matrix $A$ records static potential contacts network. 
No deterministic maximum-degree restriction is imposed on $A$. The two populations may be generated independently using this construction and their respective sizes and sparsity parameters.

\subsection{A treatment-invariant active-contact rule}

Let $\mathcal N_i(A):=\{k\neq i:A_{ik}=1\}$ denote the neighborhood set. 
Order its members by $(\omega_k,k)$ lexicographically, so that unit indices resolve priority ties. 
Let $\mathcal N_i^{[2]}(A,\mathbf X)$ contain the first two members, or all members if $|\mathcal N_i(A)|<2$. Define the active-contact matrix by mutual selection:
\begin{equation}
\label{eq:app_sir2_active}
 \widetilde A_{ik}
 :=\one\{k\in\mathcal N_i^{[2]}(A,\mathbf X)\}
   \one\{i\in\mathcal N_k^{[2]}(A,\mathbf X)\},
 \qquad \widetilde A_{ii}=0.
\end{equation}
Thus $\widetilde A$ is symmetric, a subgraph of $A$, and satisfies
\begin{equation}
\label{eq:app_sir2_degree}
 \max_i\sum_{k\neq i}\widetilde A_{ik}\leq2.
\end{equation}
Only active contacts transmit infection. The selection rule is fixed
before treatment, uses only observed information, and is recomputed when
$A$ or the covariates are perturbed. It is part of the maintained exposure
transformation, not a conditioning event imposed on the graphon draw.
In particular, $\widetilde A$ need not have conditionally independent
edges: Equation~\eqref{eq:app_sir2_graphon}, and hence
Assumption~\ref{asmp:ub-fundamentals}, applies to $A$.

A sufficient observed feature is 
\(
 G_i=(i,A,\mathbf X),
\)
using its finite-dimensional vector encoding. This is a measurable
transformation of the observed network and covariates and allows the
planner to compute every exposure below.

\subsection{Two-period epidemic dynamics}

Treatment $\mathbf t\in\{0,1\}^{n_e}$ is assigned once at baseline and
remains fixed. It completely prevents acquisition of infection and
suppresses transmission by an initially infectious treated unit.
An infectious unit has one opportunity to transmit and then recovers;
newly infected units become infectious at the next evaluation date.
There is no reinfection or external infection after date zero.

During $0\to1$, an untreated susceptible unit is infected with certainty
if it has an untreated, initially infectious active neighbor. Write
\begin{equation}
\label{eq:app_sir2_first}
 H_i(\mathbf t)
 :=(1-I_{i,0})(1-t_i)
   \one\!\left\{
      \sum_{k\neq i}\widetilde A_{ik}I_{k,0}(1-t_k)>0
   \right\}.
\end{equation}
The date-one states are
\begin{equation}
\label{eq:app_sir2_date1}
 I_{i,1}(\mathbf t)=H_i(\mathbf t),\qquad
 S_{i,1}(\mathbf t)=(1-I_{i,0})(1-H_i(\mathbf t)),\qquad
 R_{i,1}(\mathbf t)=I_{i,0}.
\end{equation}
In particular, $H_i(\mathbf t)$ is a known function of the baseline
configuration and the candidate allocation; it contains no unobserved
infection shock.

During $1\to2$, the infectious sources are precisely those units infected
in the first period. Define their number among $i$'s active neighbors by
\begin{equation}
\label{eq:app_sir2_sources}
 K_i(\mathbf t)
 :=\sum_{k\neq i}\widetilde A_{ik}H_k(\mathbf t)
 \in\{0,1,2\}.
\end{equation}
Fix a nonrandom transmission parameter $\beta\in(0,1)$, common to the two
populations. Define
\begin{align}
 I_{i,2}(\mathbf t)
 &:=S_{i,1}(\mathbf t)
   \one\!\left\{
      \varepsilon_i\leq
      (1-t_i)\bigl[1-(1-\beta)^{K_i(\mathbf t)}\bigr]
   \right\},
   \label{eq:app_sir2_second}\\
 S_{i,2}(\mathbf t)
 &:=S_{i,1}(\mathbf t)-I_{i,2}(\mathbf t),\qquad
 R_{i,2}(\mathbf t):=I_{i,0}+H_i(\mathbf t).
 \label{eq:app_sir2_date2}
\end{align}
Conditional on the baseline configuration and the allocation, an untreated
susceptible unit with $m$ infectious active neighbors is infected in the
second period with probability $1-(1-\beta)^m$.
Equations~\eqref{eq:app_sir2_date1}--\eqref{eq:app_sir2_date2} produce
binary states satisfying $S_{i,s}+I_{i,s}+R_{i,s}=1$ at each date.
The parameter $\beta$ may be unknown to the planner; it is not required to
compute the exposure mapping.

\subsection{Exposure mapping and bounded potential outcomes}

Use the common finite exposure space
\begin{equation}
\label{eq:app_sir2_space}
 \cC^{(0)}=\cC^{(1)}=\cC_{\mathrm{SIR}}
 :=\{0,1\}\times\{0,1\}\times\{0,1,2\}.
\end{equation}
The exposure mapping is
\begin{equation}
\label{eq:app_sir2_exposure}
 E_i(\mathbf t;v)
 :=\bigl(t_i,H_i(\mathbf t;v),K_i(\mathbf t;v)\bigr).
\end{equation}
Its coordinates record own treatment, first-period infection, and the
number of infectious active neighbors entering the second period.
Although the last two coordinates describe intermediate epidemic states,
they are deterministic functions of observed baseline information and
$\mathbf t$. Thus computing $E_i(\mathbf t;v)$ requires neither
post-treatment observations nor the shocks $\varepsilon_k$.
Following Definition~\ref{def2::feas_exp}, set
\begin{equation}
\label{eq:app_sir2_feasible}
 C_i(v)
 :=\{E_i(\mathbf t;v):\mathbf t\in\{0,1\}^{n_e}\},
 \qquad L_i(v):=(X_i(v),C_i(v)).
\end{equation}
The feasible set is nonempty and finite. I evaluate health by the fraction of dates at which the unit is not
infectious:
\begin{equation}
\label{eq:app_sir2_health}
 Y_i(\mathbf t)
 :=1-\frac{I_{i,0}+I_{i,1}(\mathbf t)+I_{i,2}(\mathbf t)}{3}.
\end{equation}
For every $c=(a,h,m)\in\cC_{\mathrm{SIR}}$, define
\begin{equation}
\label{eq:app_sir2_response}
\begin{split}
 Y_i(a,h,m):=1-\frac13\Bigl[ I_{i,0}
 +(1-I_{i,0})\Bigl\{
 h+(1-h)\one\!\left\{
 \varepsilon_i\leq(1-a)\bigl[1-(1-\beta)^m\bigr]
 \right\}\Bigr\}\Bigr].
\end{split}
\end{equation}
This also defines outcomes at labels infeasible for a particular unit.

\subsection{Verification of Assumptions 1--4}

\begin{enumerate}
    \item \emph{Known exposure mapping.}
The sorting, contact selection, and finite sums in
Equations~\eqref{eq:app_sir2_active}, \eqref{eq:app_sir2_first},
and~\eqref{eq:app_sir2_sources} are known measurable functions of
$(\mathbf t,G_i)$. Equation~\eqref{eq:app_sir2_exposure} therefore
satisfies Assumption~\ref{asmp2::eff_treat}.
    \item \emph{Exposure-indexed and bounded responses.}
Since $I_{i,0}H_i(\mathbf t)=0$, substitution into
Equation~\eqref{eq:app_sir2_response} gives, simultaneously over allocations,
\begin{equation}
\label{eq:app_sir2_effective}
 Y_i(\mathbf t)=Y_i\bigl(E_i(\mathbf t)\bigr).
\end{equation}
Thus allocations inducing the same exposure produce the same outcome.
For every exposure label, including infeasible labels,
\[
 \frac23\leq Y_i(c)\leq1.
\]
Consequently, both parts of Assumption~\ref{asmp2::pot_out} hold with
$M=1$.
\item \emph{Locality.}
If $i$ is isolated in $A$, it is also isolated in $\widetilde A$, so
$H_i(\mathbf t)=K_i(\mathbf t)=0$ for every allocation. Hence
\[
 E_i(\mathbf t)=c_{\varnothing,t_i},\qquad
 c_{\varnothing,0}:=(0,0,0),\quad
 c_{\varnothing,1}:=(1,0,0),\quad
 C_i=C_\varnothing:=\{c_{\varnothing,0},c_{\varnothing,1}\}.
\]
These labels are distinct and common across units and populations.
Moreover, if $k$ is isolated in $A$, changing $(X_k,t_k)$ while holding
$A$ and all other inputs fixed cannot change another unit's nomination
list or active contacts: $k$ is not a potential neighbor of any unit.
It also cannot change another unit's first-period infection or
second-period source count. Therefore, for every $i\neq k$, the pair
$(L_i,E_i)$ is unchanged. An isolated unit's own exposure is likewise
unaffected by other units' covariates and treatments. 

Hold the covariate array fixed and let $A,A'$ differ only in dyad
$\{u,v\}$. Only the nomination lists of $u$ and $v$ can change.
At each endpoint, adding or removing the other endpoint can displace or
restore at most one additional nominee. Thus at most three active edges
change: $\{u,v\}$ and at most one further edge incident to each endpoint.
Writing $\mathscr E(D):=\{\{i,k\}:i<k,\ D_{ik}=1\}$ for the edge set
of a graph $D$, this gives
\begin{equation}
\label{eq:app_sir2_active_changes}
 \bigl|\mathscr E(\widetilde A)
       \mathbin{\triangle}\mathscr E(\widetilde A')\bigr|\leq3.
\end{equation}

Next evaluate the epidemic formulas on any two active graphs $D,D'$ of
maximum degree two that differ only in $\{u,v\}$. Only $H_u,H_v$ can
change. Consequently, the only exposures that can change belong to
\[
 \mathcal B_{u,v}
 :=\{u,v\}\cup\mathcal N_u(D\cup D')
              \cup\mathcal N_v(D\cup D').
\]
One of these two graphs contains the other. Their union therefore has
maximum degree two and contains $\{u,v\}$, implying
$|\mathcal B_{u,v}|\leq4$.
This set is independent of the allocation. Every unit outside it has
unchanged exposure for every $\mathbf t$, and hence an unchanged entire
feasible set $C_i$. Its covariate is fixed, so its complete pair
$(L_i,E_i)$ is unchanged as well.

Pass from $\widetilde A$ to $\widetilde A'$ by first deleting edges not
in the latter and then adding its missing edges. Every intermediate graph
has maximum degree two. Equation~\eqref{eq:app_sir2_active_changes}
requires at most three steps, each affecting at most four complete pairs.
Restoring population superscripts, for all $A,A'\in\sA_{n_e}$ differing
in one dyad and all covariate arrays $\mathbf x$,
\begin{equation}
\label{eq:app_sir2_full_diameter}
\begin{split}
 \sup_{\mathbf t\in\{0,1\}^{n_e}}
 \Bigl|\Bigl\{i\in[n_e]:\;&
  \bigl(L_i^{(e)}(A,\mathbf x),
        E_i^{(e)}(\mathbf t;A,\mathbf x)\bigr)\\
 &\neq
  \bigl(L_i^{(e)}(A',\mathbf x),
        E_i^{(e)}(\mathbf t;A',\mathbf x)\bigr)
 \Bigr\}\Bigr|
 \leq\min\{n_e,12\}.
\end{split}
\end{equation}
Here graph--covariate arguments mean that all maintained transformations,
including the active-contact rule, are reevaluated. In particular,
Assumption~\ref{asmp:ub-locality}.b holds in the pilot with
\begin{equation}
\label{eq:app_sir2_constants}
 d_\n^{\mathrm{sc}}=\min\{\n,12\},\qquad
 K_{\mathrm{set},\n}\leq2^{12}-1.
\end{equation}
The catalogue bound follows because every feasible set is a nonempty
subset of the twelve-element exposure space. 
\item \emph{Independent fundamentals and graphon formation.}
For every fixed label $c$, Equation~\eqref{eq:app_sir2_response} has the
form
\[
 Y_i^{(e)}(c)=f(X_i^{(e)},\varepsilon_i^{(e)};c),
\]
where $f$ is common across units and populations. It contains no other
unit's primitive and no dyadic uniform. Therefore
\[
 \tilde F_i^{(e)}
 =\left(U_i^{(e)},X_i^{(e)},
   \{f(X_i^{(e)},\varepsilon_i^{(e)};c):c\in\cC_{\mathrm{SIR}}\}\right)
\]
is a common measurable transformation of the IID primitive vectors
$(U_i^{(e)},X_i^{(e)},\varepsilon_i^{(e)})$.
The complete fundamentals are thus IID, as required by
Assumption~\ref{asmp:ub-fundamentals}.a.
The dyadic uniforms are mutually independent and independent of these
fundamentals by construction, and
Equation~\eqref{eq:app_sir2_graphon} is exactly the required formation
rule. This verifies Assumption~\ref{asmp:ub-fundamentals}.b.
\end{enumerate}

\section{Notation}
\label{app:notation}

The table consolidates related symbols into one entry. Population superscripts are suppressed only when the population is clear. 

{\small\linespread{1}\selectfont
\setlength{\tabcolsep}{4pt}
\renewcommand{\arraystretch}{1.04}
\begin{longtable}{@{}>{\raggedright\arraybackslash}p{0.30\textwidth}
                    >{\raggedright\arraybackslash}p{0.67\textwidth}@{}}
\caption{Principal notation.}\label{tab:notation}\\
\toprule
\textbf{Symbol} & \textbf{Meaning}\\
\midrule
\endfirsthead
\toprule
\textbf{Symbol} & \textbf{Meaning}\\
\midrule
\endhead
\midrule
\multicolumn{2}{r}{\emph{Continued on next page}}\\
\endfoot
\bottomrule
\endlastfoot
$e,\ n_e$ & Population index ($0$ target, $1$ pilot) and size.\\
$X_j^{(e)},\ \bX{e}$ & Unit covariates and their population array.\\
$\bA{e},\ G_j^{(e)},\ \bG{e}$ & Adjacency matrix, observed network feature, and feature array.\\
$V^{(e)},\ \mathcal V_e;\ V$ & Observed state $(\bX{e},\bG{e})$, its space, and $V=V^{(0)}$.\\
$\mathbf t^{(e)},\ \bT{1}$ & Allocation vector and randomized pilot allocation.\\
$B,\ \Tc,\ \Q$ & Budget, feasible target allocations $\{\mathbf t:\sum_jt_j\leq B\}$, and $\Q=|\Tc|$.\\
$\cC^{(e)},\ c_j^{(e)}$ & Finite exposure space and known exposure mapping.\\
$E_j^{(e)}(\mathbf t;v)$ & Exposure at allocation $\mathbf t$ and state $v$; omitting $v$ uses $V^{(e)}$.\\
$C_j^{(e)},\ L_j^{(e)};\ \ell,\mathcal L$ & Feasible exposure set, local state $(X_j^{(e)},C_j^{(e)})$, generic state, and its space.\\
$Y_j^{(e)}(c),\ \Yc{e}_j;\ M$ & Potential outcome, exposure-indexed outcome family, and absolute outcome bound.\\
$\mu_c^{(e)}(\ell)$ & Conditional mean outcome at local state $\ell$ and exposure $c$.\\
$\tilde F_i^{(e)},\ F_i^{(1)}$ & Fundamentals $(U_i^{(e)},X_i^{(e)},\Yc{e}_i)$ and their augmentation by $T_i^{(1)}$.\\
$g,\ \rho_{n_e},\ \xi_{ik}^{(e)}$ & Graphon, link-probability scale, and dyadic uniforms.\\
$E_i^{(1)},\ Z_i,\ \mathbf Z$ & Pilot exposure, score input $(Y_i^{(1)},L_i^{(1)},E_i^{(1)})$, and full pilot data.\\
$p_\n,\ e_{i,c}(\ell)$ & Pilot treatment probability and exposure propensity.\\
$P_L^{(e)},\ P_V;\ s,\kappa$ & Local- and target-state laws; density ratio $s=dP_L^{(0)}/dP_L^{(1)}\leq\kappa$.\\
$\overline\Pi,\ \Pi;\ \pi,\widehat\pi,\widetilde\pi$ & All feasible global rules, learning class, global rule, empirical maximizer, and statistical rule.\\
$E_j^\pi;\ q_j,\overline q$ & Policy-induced exposure; its probability conditional on $L_j^{(0)}$ and the average over $j$.\\
$\mathcal E_\pi(\p),\ \gamma_\J$ & Policy-relevant local-state--exposure pairs and overlap bound.\\
$W,\ \Upsilon_i,\ \widehat W_\n;\ \Pc,R$ & Target welfare, IPW score, empirical welfare; model class and uniform regret.\\
$N_{dim},\ \Pi_0,\ m$ & Natarajan dimension, countable approximating class, and richness dimension.\\
$a_\Pi,\vartheta_\n;\ K_{\mathrm{set},\n}$ & Regret complexity factors; number of distinct pilot feasible-exposure sets.\\
$d_\n^{\mathrm{sc}}$ & Uniform number of local-state--exposure pairs affected by one dyad change.\\
$\lambda,\ \Lambda_B(\p)(\p),\ \pi_\lambda$ & Common local rule, its feasible class, and induced global allocation.\\
$\Delta_\p,\ \mathscr I_j;\ D_j,\mathcal B_j$ & Global-minus-local oracle gap, local information; inclusion advantage and its information-loss bound.\\
$R_j^{\mathrm{soc}},\lambda_{j,B},\Gamma_{j,B}$ & Social return, displaced return, and budget-adjusted return $\Gamma_{j,B}=\J D_j$ under additive LSE.\\
$h,\ h_k;\ W_h,\widehat W_{\n,h}$ & Retention function, degree cutoff, and retained welfare criteria.\\
$\gamma_{\J,h},\beta_h,\overline\beta_h$ & Retained overlap bound, omitted target mass, and its supremum.\\
$b_{\n,k},\ r_{\n,h};\ p_0,\bar\lambda$ & Degree-tail and stochastic bounds; lower assignment probability and upper expected degree.\\
\end{longtable}}

\end{appendix}

\end{document}